\newif\ifdraft\draftfalse
\makeatletter \@input{localoptions} \makeatother
\begin{document}

\title{Operationally-based Program Equivalence Proofs using LCTRSs}

\author{Ștefan Ciobâcă\inst{1} \and
Dorel Lucanu\inst{1} \and
Andrei Sebastian Buruiană\inst{2}}

\institute{Alexandru Ioan Cuza University \\
\email{\{stefan.ciobaca,dlucanu\}@info.uaic.ro} \and
Bitdefender\\
\email{sburuiana@bitdefender.com}}

\maketitle

\begin{abstract}
  We propose an operationally-based deductive proof method for program
  equivalence. It is based on encoding the language semantics as
  logically constrained term rewriting systems (LCTRSs) and the two
  programs as terms.
  The main feature of our method is its flexibility. We illustrate
  this flexibility in two applications, which are novel.

  For the first application, we show how to encode low-level details
  such as stack size in the language semantics and how to prove
  equivalence between two programs operating at different levels of
  abstraction. For our running example, we show how our method can
  prove equivalence between a recursive function operating with an
  unbounded stack and its tail-recursive optimized version operating
  with a bounded stack. This type of equivalence checking can be used
  to ensure that new, undesirable behavior is not introduced by a more
  concrete level of abstraction.

  For the second application, we show how to formalize read-sets and
  write-sets of symbolic expressions and statements by extending the
  operational semantics in a conservative way. This enables the
  relational verification of program schemas, which we exploit to
  prove correctness of compiler optimizations, some of which cannot be
  proven by existing tools.

  Our method requires an extension of standard LCTRSs with axiomatized
  symbols. We also present a prototype implementation that proves the
  feasibility of both applications that we propose.
  \keywords{program equivalence \and compiler correctness \and
    operational semantics \and deductive verification \and term
    rewriting}
\end{abstract}
\section{Introduction}
\label{sec:intro}
A typical transformation in optimizing recursive functions is to add
an additional parameter called an accumulator, which holds the current
result of the computation. The transformed function is usually
tail-recursive, enabling the compiler to emit efficient code.
Typically, the optimized version of a function (the tail-recursive
version) is simply \emph{assumed} to be functionally equivalent to the
original function. However, as we show here, this is not necessarily
the case. Consider the two functions for computing the sum of the
first \(n\) positive naturals, presented in a C-like language, in
Figure~\ref{fig:fF}.

\begin{figure}[t]
  \hrule
  \begin{subfigure}{0.45\textwidth}
\begin{verbatim}
int f(int n) {
  if (n == 0) {
    return 0;
  } else {
    return n + f(n - 1);
  }
}
\end{verbatim}
  \end{subfigure}
  \begin{subfigure}{0.45\textwidth}
\begin{verbatim}
int F(int n, int i, int a) {
  if (i > n) {
    return a;
  } else {
    return F(n, i + 1, a + i);
  }
}
\end{verbatim}
  \end{subfigure}
  \caption{\label{fig:fF} Two programs computing the sum of the first
    \(n\) positive naturals. The program on the right uses an
    accumulator and is tail-recursive.}
  \hrule
\end{figure}

The programs \texttt{f(n)} and \texttt{F(n, 0, 0)} are functionally
equivalent in an idealized setting. However, depending on the exact
definition of functional equivalence, the equivalence may not hold. We
illustrate two settings where the equivalence does not hold:
\begin{enumerate}[wide, labelwidth=!, labelindent=0pt]
\item[\emph{Setting 1}.] Consider that the two programs have a bounded
  stack. The program \(f\) (the left-hand side, \emph{lhs}) uses \(O(n)\)
  stack cells, while the program \(F\) (the right-hand side, \emph{rhs})
  can use constant stack size, since it is tail-recursive.

  In this model of computation, with a bounded stack, which is more
  realistic, the results of the function calls would be different for
  a sufficiently large input \(n\): the program on the left-hand side
  would crash (running out of stack size), while the program on the
  right-hand side would work as expected.

  We have confirmed this difference between the two programs on a real
  system. The first program (compiled on a typical Windows laptop with
  a recent {\tt g++} compiler, without optimizations) has a stack
  overflow when \(n \geq 43340\). The second program (compiled with
  tail-call optimizations) exhibits no stack overflow (even for larger
  values of \(n\)) on the same system.
\item[\emph{Setting 2}.] Also consider a variation of the two programs
  presented in Figure~\ref{fig:fFvariation}.  Even with an unbounded
  stack, if the language has introspection capabilities that allow
  programs to query the current stack size (the function {\tt
    stack\_size}), then the \emph{rhs} and the \emph{lhs} behave
  differently, since the stack size will be large only in the
  \emph{lhs}. Therefore \texttt{f(n)} and \texttt{F(n, 0, 0)} are not
  equivalent in this setting.
  We have also confirmed this difference between the two programs on a
  real system as well. We did this by developing a non-portable
  implementation of {\tt stack\_size()} (for the X86 architecture, by
  querying the {\tt RSP} register). The first program produces an
  error for sufficiently large \(n\), while the second program does not,
  for any \(n\).
\end{enumerate}
Therefore, telling whether the optimized tail-recursive version of a
function is equivalent to the original function is worth investigating
in a more principled manner.

\emph{Our solution.} We propose a method for proving program
equivalence based on modeling the operational semantics of the
language as a logically constrained term rewriting system
(LCTRS). This method allows us to compare two programs for equivalence
in various settings, by varying the underlying semantics defined as an
LCTRS.
\begin{figure}[t]
  \hrule
  \begin{subfigure}{0.45\textwidth}
\begin{verbatim}
int f(int n) {
  if (n == 0) {
    if (stack_size() > 10) {
      error;
    }
    return 0;
  } else {
    return n + f(n - 1);
  }
}
\end{verbatim}
  \end{subfigure}
  \begin{subfigure}{0.45\textwidth}
\begin{verbatim}
int F(int n, int i, int a) {
  if (i > n) {
    if (stack_size() > 10) {
      error;
    }
    return a;
  } else {
    return F(n, i + 1, a + i);
  }
}
\end{verbatim}
  \end{subfigure}
  \caption{\label{fig:fFvariation} A variation of the programs in
    Figure~\ref{fig:fF}. The only difference is in the base case.}
  \hrule
\end{figure}
We study the two programs in the running example above using as
operational semantics an imperative language featuring integer
variables, boolean conditions, if-then-else and while statements, and
function calls. We call the language \IMP{} and we introduce it
formally in the subsequent sections. We propose two versions of \IMP{}
with the same syntax but with different semantics:
\begin{enumerate*}
  
\item[\IMP{1}] has an idealized semantics, with an unbounded stack;

\item[\IMP{2}] has a more realistic semantics, with a bounded stack.
  
\end{enumerate*}

Our method proves that the two programs in Figure~\ref{fig:fF} are
equivalent in \IMP1, but the equivalence proof correctly fails in
\IMP2. Our method also shows that they are equivalent when the first
program is interpreted in \IMP1 and the second program in \IMP2. When
the two programs query the stack size as in
Figure~\ref{fig:fFvariation}, the equivalence proof correctly fails in
both \IMP1 and \IMP2. We write \IMP{} in the cases where the exact
version, \IMP1 or \IMP2, does not matter.

We encode the operational semantics of the language as a logically
constrained term rewriting system and the two programs as terms. An
LCTRS consists of rewrite rules of the form \(\rrule{l}{r}{\phi},\)
where \(l, r\) are terms and \(\phi\) is a first-order logical
constraint. In \IMP{}, \(l\) and \(r\) are terms of sort \(\Cfg\), 
representing program configurations. \IMP{} configurations are tuples
\(\cfgimp{[\msh{e_1}, \msh{\ldots}, \msh{e_n}]}{\menv}{\mfuncs}\) of:
\begin{enumerate}
\item a cons-list \(\fsh{[ \msh{e_1, \ldots, e_n} ]}\) of expressions
  and statements to be evaluated in order, representing the evaluation
  stack,
\item an environment \(\menv\) mapping identifiers to their value,
\item and an environment \(\mfuncs\) mapping function identifiers to the
  function bodies.
\end{enumerate}
We use a Haskell-like notation for cons-lists: \(\fsh{[]}\) is the empty
list, \(\lisymb\) is the (right-associative) list constructor, and
\(\fsh{[\msh{e_1, e_2, \ldots, e_n}]}\) is a shorthand for
\(e_1 \lisymb e_2 \lisymb \ldots \lisymb e_n \lisymb \fsh{[]}\). The
full order-sorted algebra defining the syntax of \IMP{} is given in
BNF-like notation in Figure~\ref{fig:impsyntax}. The operational
semantics of \IMP{} is given by logically constrained rewrite rules
such as the following, which define assignments and summations:
\begin{enumerate}

\item \(\cfgimp{\laassign{\vx}{\vi} \lisymb \vcs}{\venv}{\vfuncs}
  \rewrite
  \cfgimp{\vcs}{\bupdate(\venv, \vx, \vi)}{\vfuncs}\);
\item \(\cfgimp{\laassign{\vx}{\ve} \lisymb \vcs}{\venv}{\vfuncs}
  \rewrite
  \cfgimp{\ve \lisymb \laassign{\vx}{\square{}} \lisymb \vcs}{\venv}{\vfuncs}
  \myif \lnot \bval(\ve)\);
\item \(\cfgimp{\vx \lisymb \vcs}{\venv}{\vfuncs}
  \rewrite
  \cfgimp{\blookup(\venv, \vx) \lisymb \vcs}{\venv}{\vfuncs}\);
\item \(\cfgimp{\plus{\vesub1}{\vesub2} \lisymb \vcs}{\venv}{\vfuncs}
  \rewrite
  \cfgimp{\vesub1 \lisymb \plus{\square{}}{\vesub2} \lisymb \vcs}{\venv}{\vfuncs}
  \myif \lnot \bval(\vesub1)\);
\item \(\cfgimp{\visub1 \lisymb \plus{\square{}}{\vesub2} \lisymb \vcs}{\venv}{\vfuncs}
  \rewrite
  \cfgimp{\plus{\visub1}{\vesub2} \lisymb \vcs}{\venv}{\vfuncs}\);
\item \(\cfgimp{\plus{\visub1}{\vesub2} \lisymb \vcs}{\venv}{\vfuncs}
  \rewrite
  \cfgimp{\vesub2 \lisymb \plus{\visub1}{\square{}} \lisymb \vcs}{\venv}{\vfuncs}
  \myif \lnot \bval(\vesub2)\);
\item
  \(\cfgimp{\visub2 \lisymb \plus{\visub1}{\square{}} \lisymb \vcs}{\venv}{\vfuncs}
  \rewrite
  \cfgimp{\plus{\visub1}{\visub2} \lisymb \vcs}{\venv}{\vfuncs}\);
\item \( \cfgimp{\plus{\visub1}{\visub2} \lisymb \vcs}{\venv}{\vfuncs}
  \rewrite
  \cfgimp{\visub1 + \visub2 \lisymb \vcs}{\venv}{\vfuncs} \);
\item
  \(\cfgimp{\vi \lisymb \laassign{\vx}{\square{}} \lisymb \vcs}{\venv}{\vfuncs}
  \rewrite
  \cfgimp{\laassign{\vx}{\vi} \lisymb \vcs}{\venv}{\vfuncs}\).
\end{enumerate}
We use the following typographic conventions:
\begin{enumerate}
\item standard math font is used for meta-variables (e.g., \(l, r\)
  standing for terms and \(\phi\) standing for constraints),
\item \vsh{sans-serif,\ red\ font} for object-level variables (e.g.,
  the variable symbol \( \vx \), standing for program identifiers) and
\item \fsh{bold,\ blue\ font} for object-level non-variable symbols
  (e.g., the function symbols \( \bupdate, \bval \)).
\end{enumerate} 
The first rule defines the semantics of assigning an integer \(\vi\)
to the program identifier \(\vx\) (the variables \(\vi\) and \(\vx\)
are of sorts \(\sint\)eger and \(\sid\)entifier, respectively). The
variable \( \vcs \) matches the tail of the computation stack (we use
the Haskell convention of using the suffix \(\vsh{-s}\) to denote a
list). In this case, the environment is updated by using the
\(\bupdate\) function (in the usual theory of arrays). The second rule
defines assignment in the case where an expression \(\ve\), which is
not a value (i.e., not an integer) is assigned to the program
identifier \(\vx\). In this case, the expression \(\ve\) is
\emph{scheduled} for evaluation, by placing it in front of the
computation stack. The special constant \( \fsh{\square{}} \) is used
as a placeholder to recall which part of the statement has been
promoted for evaluation. The third rule defines the evaluation rules
for program identifiers (program variables), by using the \(\blookup\)
function in the theory of arrays. The following five rules define how
additions are evaluated (left to right). Note that
\(\ve, \vesub1, \vesub2\) are variables of sort \(\sexp\), while
\(\vi, \visub1, \visub2\) are variables of sort \(\sint\) (as
\(\sint < \sexp\), any term of sort \(\sint\) is also a term of sort
\(\sexp\), but not vice-versa). Once an expression promoted by the
second rule is completely evaluated and \emph{becomes} a term of sort
\(\sint\), the last rule is allowed to fire, which places the value of
the expression back into the assignment statement, which may then
proceed to execute using rule \(1\). We use \( \fsh{\plus{}{}} \)
(using teletype font) for the summation operator in the language
(\( \fsh{\plus{}{}} \) is a constructor for program expressions), and
\( \fsh{+} \) (usual mathmode font) for integer summation (\(\fsh{+}\)
is an interpreted function symbol: summation in the usual theory of
integers).

Here is how \(\fsh{\laassign{\cx}{\plus{\cx}{2}}}\) is executed in an
initial environment \(\menv = \fsh{\cx \mapsto 12}\) and with an
arbitrary map \(\mfuncs\):
\(\cfgimp{[\laassign{\cx}{\plus{\cx}{2}}]}{\menv}{\mfuncs}\)
\(\rewrite\)
\(\cfgimp{[\plus{\cx}{2}, \laassign{\cx}{\square{}}]}{\allowbreak\menv}{\mfuncs}\) \(\rewrite\)
\(\cfgimp{[\cx, \plus{\square{}}{2},
  \laassign{\cx}{\square{}}]}{\allowbreak\menv\allowbreak}{\allowbreak\mfuncs}\)
\(\rewrite\)
\(\cfgimp{[12, \plus{\square{}}{2}, \allowbreak \laassign{\cx}{\square{}}]}{\allowbreak\menv}{\mfuncs}\) \(\rewrite\)
\(\fsh{\langle[\plus{12}{2}, \laassign{\cx}{\square{}}],} \allowbreak\menv\fsh{, \mfuncs \rangle}\)
\(\rewrite\)
\(\cfgimp{[14, \laassign{\cx}{\square{}}]}{\allowbreak\menv}{\mfuncs}\) \(\rewrite\)
\(\cfgimp{[\laassign{\cx}{14}]}{\menv}{\mfuncs}\)
\(\rewrite\)
\(\cfgimp{[]}{\cx \mapsto 14}{\mfuncs}\) \(\notrewrite\).
This style of giving an operational semantics to a language is called
\emph{frame stack style} according to~\cite{PittsAS2000}, was
popularized by the K framework~\cite{StefanescuOOPSLA2016} and it
avoids the necessity of refocusing~\cite{Danvy2004} typical of
operational semantics based on evaluation contexts.

Recall the two recursive programs \(f\) and \(F\) introduced earlier
(Figure~\ref{fig:fF}). Formally,
\(f = \cfgimp{\fsh{f(\vn)}}{\menv}{\mfuncs}\) and
\(F = \cfgimp{\fsh{F(\vn, 0, 0)}}{\menv}{\mfuncs},\) where \menv is
some environment and \(\mfuncs = \fsh{\{\ \pvar{f} \mapsto \lambda \pvar{n}.
  \laite{\pvar{n} = 0}{0}{\plus{\pvar{n}}{\lacall{\pvar{f}(\pvar{n} -
      1)}}}}, \ \fsh{\pvar{F} \mapsto \lambda \pvar{n}.}\allowbreak\fsh{\lambda
  \pvar{i} . \lambda \pvar{a} .}\allowbreak \fsh{\laite{\pvar{i} \leq
    \pvar{n}}{\lacall{F(n,\plus{i}{1}, \plus{a}{i})}}{\pvar{a}}}\}\)
 is a map from identifiers to function
bodies defining \fsh{f} and \fsh{F}. The fact that the two programs
share the same input is represented by having the same free variable
\(\vn\) in both symbolic configurations. We now show the main
difficulty in an operational semantics-based proof of equivalence
between them. To illustrate the difficulty, we use the operational
behaviors of the two programs on the input \(\fsh{3}\), shown in
Figure~\ref{fig:phases}.
\begin{figure}[t]
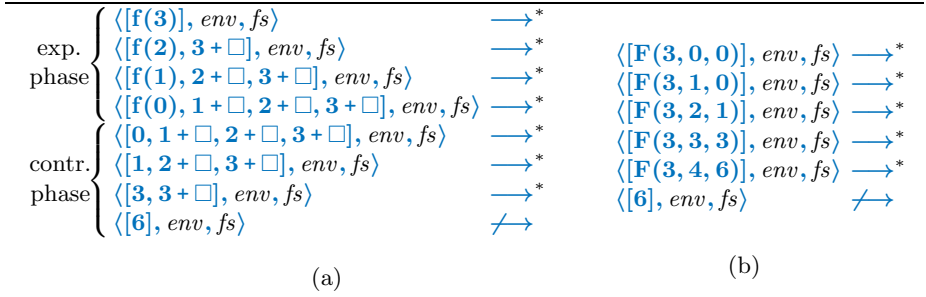

  \hrule
  \begin{subfigure}{0.70\textwidth}
    \[{\hspace{-1cm}}\begin{array}{ll}
        \mbox{\(\begin{array}{c}\mbox{exp.}\\ \mbox{phase}\end{array}\)}\hspace{-0.2cm} & \left\{
                                                                              \begin{array}{ll}
                                                                                \cfgimp{[ f(3) ]}{\menv}{\mfuncs} & \rewrite^* \\
                                                                                \cfgimp{[ f(2), \plus{3}{\square{}} ]}{\menv}{\mfuncs} &  \rewrite^* \\
                                                                                \cfgimp{[ f(1), \plus{2}{\square{}}, \plus{3}{\square{}} ]}{\menv}{\mfuncs} &  \rewrite^* \\
                                                                                \cfgimp{[ f(0), \plus{1}{\square{}}, \plus{2}{\square{}}, \plus{3}{\square{}}]}{\menv}{\mfuncs} &  \rewrite^*
                                                                              \end{array} \right. \\
        \mbox{\(\begin{array}{c}\mbox{contr.}\\ \mbox{phase}\end{array}\)}\hspace{-0.2cm} & \left\{
                                                                                \begin{array}{ll}
                                                                                  \mathrlap{\cfgimp{[ 0, \plus{1}{\square{}}, \plus{2}{\square{}}, \plus{3}{\square{}} ]}{\menv}{\mfuncs}}
                                                                                  \phantom{\cfgimp{[ f(0), \plus{1}{\square{}}, \plus{2}{\square{}}, \plus{3}{\square{}}]}{\menv}{\mfuncs}}
                                                                                  & \rewrite^*
                                                                                  \\
                                                                                  \cfgimp{[ 1, \plus{2}{\square{}}, \plus{3}{\square{}} ]}{\menv}{\mfuncs} &  \rewrite^* \\
                                                                                  \cfgimp{[ 3, \plus{3}{\square{}} ]}{\menv}{\mfuncs} &  \rewrite^* \\
                                                                                  \cfgimp{[ 6 ]}{\menv}{\mfuncs} & \notrewrite^{\phantom{*}}
                                                                                \end{array} \right. \\
      \end{array}\]
    \caption{\label{subfig:phasesa}}
  \end{subfigure}\hspace{-0.6cm}
  \begin{subfigure}{0.29\textwidth}
    \[\begin{array}{ll}
        \cfgimp{[ F(3, 0, 0) ]}{\menv}{\mfuncs} & \rewrite^* \\
        \cfgimp{[ F(3, 1, 0) ]}{\menv}{\mfuncs} & \rewrite^* \\
        \cfgimp{[ F(3, 2, 1) ]}{\menv}{\mfuncs} & \rewrite^* \\
        \cfgimp{[ F(3, 3, 3) ]}{\menv}{\mfuncs} & \rewrite^* \\
        \cfgimp{[ F(3, 4, 6) ]}{\menv}{\mfuncs} & \rewrite^* \\
        \cfgimp{[ 6 ]}{\menv}{\mfuncs} & \notrewrite^{\phantom{*}} \\
      \end{array}\]
    \caption{\label{subfig:phasesb}}
  \end{subfigure}
  \caption{\label{fig:phases}Operational steps of \(f(3)\) (in
    Subfigure~\ref{subfig:phasesa}) and of \(F(3, 0, 0)\) (in
    Subfigure~\ref{subfig:phasesb}). Note that \(F\) has a single phase,
    while \(f\) has two distinct phases.}
  \hrule
\end{figure}

Note that \(f\) has two phases in the execution:
\begin{enumerate}
\item the stack
expansion phase, before the recursive function reaches the base case
and
\item the stack compression phase, where the actual
  additions take place.
\end{enumerate}
Unlike this non-tail-recursive version, the tail-recursive function
\(F\) has a single phase, where the second argument (the index
\({\tt i}\)) increases in each step and the accumulator (the third
argument) holds in turn the integers
\(\fsh{0}, \fsh{0 + 1}, \fsh{0 + 1 + 2}, \) and
\(\fsh{0 + 1 + 2 + 3}\).

The method that we propose for equivalence proofs is based on two-way
simulation. To prove that \(f\) and \(F\) are equivalent, we show that
the configuration \(\cfgimp{\fsh{f(\vn)}}{\venv}{\mfuncs}\) simulates
\(\cfgimp{\fsh{F(\vn, 0, 0)}}{\venv}{\mfuncs}\) and vice-versa (under
the constraint \(\fsh{\vn \geq 0}\)). To show this, we set up a
relation \(R\) that relates configurations as in the following
diagram:
\[\begin{array}{lcl}
    \cfgimp{[ f(3) ]}{\venv}{\mfuncs}
    & \stackrel{R}{\leftrightarrow} &
                                      \cfgimp{[ F(3, 0, 0) ]}{\venv}{\mfuncs} \\

    \cfgimp{[ f(2), 3 \mathop{\fsh{\bf \texttt{+}}} \square{} ]}{\venv}{\mfuncs}
    & \stackrel{R}{\leftrightarrow} &
                                      \cfgimp{[ F(3, 0, 0) ]}{\venv}{\mfuncs} \\

    \cfgimp{[ f(1), 2 \mathop{\fsh{\bf \texttt{+}}} \square{}, 3 \mathop{\fsh{\bf \texttt{+}}} \square{} ]}{\venv}{\mfuncs}
    & \stackrel{R}{\leftrightarrow} &
                                      \cfgimp{[ F(3, 0, 0) ]}{\venv}{\mfuncs} \\
    
    \cfgimp{[ f(0), 1 \mathop{\fsh{\bf \texttt{+}}} \square{}, 2 \mathop{\fsh{\bf \texttt{+}}} \square{}, 3 \mathop{\fsh{\bf \texttt{+}}} \square{}]}{\venv}{\mfuncs}
    & \stackrel{R}{\leftrightarrow} &
                                      \cfgimp{[ F(3, 0, 0) ]}{\venv}{\mfuncs} \\

    \cfgimp{[ 0, 1 \mathop{\fsh{\bf \texttt{+}}} \square{}, 2 \mathop{\fsh{\bf \texttt{+}}} \square{}, 3 \mathop{\fsh{\bf \texttt{+}}} \square{} ]}{\venv}{\mfuncs}
    & \stackrel{R}{\leftrightarrow} &
                                      \cfgimp{[ F(3, 1, 0) ]}{\venv}{\mfuncs} \\
                                          
    \cfgimp{[ 1, 2 \mathop{\fsh{\bf \texttt{+}}} \square{}, 3 \mathop{\fsh{\bf \texttt{+}}} \square{} ]}{\venv}{\mfuncs}
    & \stackrel{R}{\leftrightarrow} &
                                      \cfgimp{[ F(3, 2, 1) ]}{\venv}{\mfuncs} \\

    \cfgimp{[ 3, 3 \mathop{\fsh{\bf \texttt{+}}} \square{} ]}{\venv}{\mfuncs}
    & \stackrel{R}{\leftrightarrow} &
                                      \cfgimp{[ F(3, 3, 3) ]}{\venv}{\mfuncs} \\
    
    \cfgimp{[ 6 ]}{\venv}{\mfuncs}
    & \stackrel{R}{\leftrightarrow} &
                                      \cfgimp{[ F(3, 4, 6) ]}{\venv}{\mfuncs} \\

    \cfgimp{[ 6 ]}{\venv}{\mfuncs}
    & \stackrel{R}{\leftrightarrow} &
                                      \cfgimp{[ 6 ]}{\venv}{\mfuncs}
  \end{array}\]
That is, the relation \(R\) relates:
\begin{enumerate}
\item the configurations in the expansion phase of
  \(\cfgimp{\fsh{f(\vn)}}{\venv}{\mfuncs}\) with the initial
  configuration \(\cfgimp{\fsh{F(\vn, 0, 0)}}{\venv}{\mfuncs}\) and
\item the configurations in contraction phase of
  \(\cfgimp{\fsh{f(\vn)}}{\venv}{\mfuncs}\) with the configurations in
  the single phase of \(\cfgimp{\fsh{F(\vn, 0, 0)}}{\venv}{\mfuncs}\).
\end{enumerate}
More formally, we would like \(R\) to relate configurations of the
form
\[\cfgimp{[ f(\vi), \plus{\mbrack{\vi + 1}}{\square{}}, \plus{\mbrack{\vi +
2}}{\square{}}, \ldots, \plus{\vn}{\square{}} ]}{\venv}{\mfuncs}\] to
\(\cfgimp{F(\vn, 0, 0)}{\venv}{\mfuncs}\) and configurations of the form
\[\cfgimp{[ \vs, \plus{\mbrack{\vi + 1}}{\square{}}, \plus{\mbrack{\vi + 2}}{\square{}},
  \ldots, \plus{\vn}{\square{}} ]}{\venv}{\mfuncs}\] to
\(\cfgimp{F(\vn, \vi + 1, \vs)}{\venv}{\mfuncs}\). However, in order to
even express this relation \(R\), we require a new technical
development in the context of constrained term rewriting systems that
we call \emph{axiomatized symbols}. Axiomatized symbols can be used to
mimic typical bigops in mathematics such as \(\Sigma\), \(\Pi\),
\(\forall\), etc. For our running example, we require a symbol called
\(\areduce\) axiomatized by the following constrained oriented
equations:
\begin{enumerate}
\item \(\fsh{\areduce(\vi, \vn) \mathrel{\rewrite} []\myif \vi > \vn}\);
\item
  \(\fsh{\areduce(\vi, \vn) \mathrel{\rewrite} \mbrack{\plus{\vi}{\square}}
    \lisymb \areduce(\vi + 1, \vn)\myif \vi \leq \vn}\).
\end{enumerate}
That is, \(\fsh{\areduce(\vi, \vn)}\) stands for the informally
presented cons-list
\[\fsh{[ \plus{\vi}{\square{}},} \allowbreak\fsh{\plus{\mbrack{\vi + 1}}{\square{}},}
  \ldots\fsh{, \plus{\vn}{\square{}} ]}.\] By using the axiomatized symbol
\(\areduce\), the relation \(R\) is formally defined as:
\begin{enumerate}
\item
  \((\cfgimp{[f(\vn)]}{\venv}{\mfuncs}, \cfgimp{[F(\vn, 0,
    0)]}{\venv}{\mfuncs}) \in R\mbox{ if }\fsh{0 \leq \vn}\);
\item
  \((\cfgimp{f(\vi) \lisymb \areduce(\vi + 1, \vn)}{\venv}{\mfuncs},
  \cfgimp{F(\vn, 0, 0)}{\venv}{\mfuncs}) \in R \\ \qquad \mbox{if }\fsh{0
  \leq \vi \leq \vn - 1}\);
\item
  \((\cfgimp{\vs \lisymb \areduce(\vi, \vn)}{\venv}{\mfuncs},
  \cfgimp{F(\vn, \vi, \vs)}{\venv}{\mfuncs}) \in R \mbox{ if }\fsh{1
  \leq \vi \leq \vn}\),
\end{enumerate}
\noindent where \(\mfuncs\) is the map defined earlier.
Our algorithm checks whether \(R\) is indeed a (weak) simulation in
the transition system generated by the LCTRS defining the operational
semantics of \IMP{}. To check equivalence of two symbolic program
configurations \(P\) and \(Q\) (e.g.,
\(\cfgimp{f(\vn)}{\msh\ldots}{\msh\ldots}\) and
\(\cfgimp{F(\vn, 0, 0)}{\msh\ldots}{\msh\ldots}\)), we check that there
exists simulations of \(P\) by \(Q\) and vice-versa. In practice, it
is often the case that \(R^{-1}\) works for the reverse direction. Our
proof method allows to conclude that \(f\) is \emph{fully simulated}
by \(F\), but we can only show that \(F\) is \emph{partially
  simulated} by \(f\). This is because it cannot establish that the
termination of \(F\) implies the termination of \emph{both} phases of
\(f\).

As we have already illustrated above, an axiomatized symbol is any
function symbol axiomatized by a set of oriented constrained
equations. Axiomatized symbols are necessary in defining powerful
relations, as shown above, but they can also be used to enable more
powerful specification in LCTRSs.

For example, the symbol \(\bval\) used in the rewrite rules above is
also an axiomatized symbol, and this symbol helps simplify the
presentation of the operational semantics (otherwise, we would have
had to enumerate all cases where expressions are values and
non-values, respectively).

Axiomatized symbols can also simulate a form of higher-order
rewriting. In order to define \IMP{} as an LCTRS, we mix \(\bullet\) an
environment based semantics for the global store (the environment maps
program identifiers to their integer value) and \(\bullet\) a
substitution-based semantics for the function calls. Substitution is
implemented by the symbol \( \asubst \) axiomatized as:
\begin{enumerate}
\item \(\asubst(\vx, \ve, \vx) \mathrel{\rewrite} \ve\),
\item \(\asubst(\vx, \ve, \vy) \mathrel{\rewrite} \vy \myif \fsh{\vx \mathrel{\not=} \vy}\),
\item
  \(\fsh{\asubst(\vx, \ve, \plus{\vesub1}{\vesub2}) \mathrel{\rewrite}
  \plus{\asubst(\vx, \ve, \vesub1)}{\asubst(\vx, \ve, \vesub2)}}\),
\item
  \(\fsh{\asubst(\vx, \ve, \laassign{\vy}{\vesub1}) \mathrel{\rewrite} \laassign{\vy}{\asubst(\vx, \ve, \vesub1)}}\),
  etc.
\end{enumerate}
\noindent Substitution-based function calls are formalized by the
following rule (the function with a parameter \(\vx\) and a body
\(\vfunbody\) is called on the integer argument \(\vi\)):

\noindent \(\bullet\)
\(\cfgimp{\lacall{\lambda \vx.\vfunbody(\vi)} \lisymb
  \vcs}{\venv}{\vfuncs}\)
\(\rewrite\)
\(\cfgimp{\asubst(\vx, \vi, \vfunbody) \lisymb \vcs
}{\venv}{\vfuncs}\).

We propose an algorithm, presented as a sound proof system, for
proving simulation between symbolic program configurations. Two-way
simulation is used to show equivalence. The simulation-checking
algorithm relies on an oracle for the problem of \emph{unification
  modulo axiomatized symbols}, which we formally define in this
paper. We have implemented the equivalence-checking algorithm as a
prototype in the RMT tool. Early work described here was presented,
without being formally published, at the Dagstuhl seminar 18151 and
the PERR 2019 workshop.

\emph{Contributions.}
\begin{enumerate}[wide, labelwidth=!, labelindent=0pt]
\item Our method is the first to allow equivalence checking in the
  case of bounded resources; being operationally-based, it is easy to
  check equivalence in various other settings (bounded versus
  unbounded stack, bounded versus unbounded integers, etc.) by simply
  varying the underlying LCTRS;
\item Unlike other relational logics, our method can easily handle
  structurally unrelated programs;
\item We extend our previous work on LCTRSs~\cite{CiobacaIJCAR2018} by
  adding \emph{axiomatized symbols}, which are critical for expressing
  powerful relations and we identify a new problem in rewriting, that
  of \emph{unification modulo axiomatized symbols}, which is a particular
  type of higher-order unification, and which appears naturally in the
  context of program equivalence;
\item We show that our method can be used to formalize read-sets and
  write-sets of expressions and statements; this enables the
  verification of program schemas, which we take advantage of to prove
  compiler optimizations correct;
\item We implement the proof method in the prototype RMT tool; it can
  prove correctness of optimizations that are out of the reach of
  other verifiers.
\end{enumerate}

\emph{Organization.}
In Section~\ref{sec:lctrss}, we introduce the technical notations and
background on LCTRSs, as well as the newly proposed notion of
axiomatized symbols. In Section~\ref{sec:semantics}, we give the
formal syntax of \IMP{} and its formal semantics as an LCTRS (for both
variations: \IMP1 and \IMP2). Section~\ref{sec:equivalence} contains
the formalization for the definitions of full/partial simulation and
equivalence and Section~\ref{sec:proofsystem} the algorithms for
checking simulations and equivalences. In
Section~\ref{sec:programschemas} we discuss how our method can be used
to formalize read/write-sets and prove compiler optimizations. In
Section~\ref{sec:related} we discuss related work before concluding in
Section~\ref{sec:conclusion}. Appendix~\ref{app:examplesemantics}
presents a complete example of an execution trace in \IMP{}.
Appendix~\ref{app:proofs} contains the
proofs. Appendix~\ref{app:optimization} describes in detail the
optimizations that we prove correct. Appendix~\ref{app:examples}
contains more details on the functional equivalence examples that
prove.
\section{LCTRSs}
\label{sec:lctrss}
We consider a presentation of LCTRSs that we have introduced in our
previous work~\cite{CiobacaIJCAR2018}, which we describe in this
section and we extend with \emph{axiomatized symbols}. We interpret
LCTRSs in a model combining order-sorted terms with builtins such as
integers, booleans, etc. Logical constraints are first-order formulae
interpreted over the fixed model.

We assume an order-sorted signature \(\Sigma = (S, \le, F)\) with the
following properties:

\begin{enumerate}
  
\item the set of sorts, \( S = S^b \cupdot S^c \), is partitioned into a
  set of \emph{builtin sorts} \( S^b \) and a set of \emph{``constructor''
    sorts} \( S^c \);

\item in the subsorting relation,
  \( {\leq} \subseteq (S^b \cup S^c) \times S^c \), builtin sorts do not
  have subsorts;

\item the set of function symbols,
  \( F = F^b \cupdot F^c \cupdot F^a \), is partitioned into a set of
  \emph{builtin symbols} \( F^b \), a set of \emph{constructor
    symbols} \( F^c \), and a set of \emph{axiomatized symbols}
  \( F^a \). 
  
\end{enumerate}

If a function symbol \( f \in F \) has arity
\( s_1 \times \ldots \times s_n \rightarrow s \), with
\( s_1, \ldots, s_n, s \in S \), we sometimes write
\( f \in F_{s_1 \ldots s_n, s} \). In particular, if \( n = 0 \) then
\( f \in \Sigma_{\varepsilon, s} \) is a constant of sort \( s \).

We assume that no constructor symbol \( f \in F^c \) is of arity
\( s_1 \times \ldots \times s_n \rightarrow s \), where
\( s \in S^b \) (no constructor symbol returns a builtin) and that any
builtin symbol \( f \in F^b \) has arity
\( f : s_1 \times \ldots \times s_n \rightarrow s \), with
\( s_1, \ldots, s_n, s \in S^b \).

We say that \( \Sigma^b = (S^b, F^b) \) is the many-sorted \emph{builtin
  subsignature} of \( \Sigma \). We assume that the set of builtin
sorts includes at least the sort \( \Bool \in S^b \) and that the
builtin signature \( \Sigma^b \) has a model \( M^b \) such that
\( M_{\Bool}^b = \{ \True, \False \} \), where the interpretation of
the boolean connectives such as \emph{and}
(\( \fsh{\land} : \Bool \times \Bool \to \Bool\)), \emph{or}, etc. are
standard and where the carrier set \( M^b_s \) of any builtin sort
\( s \in S^b \) is exactly the set of builtin constant symbols
\( F^b_{\epsilon,s} = M^b_{s} \) of the appropriate sort. By
\( F^b_0 \) we denote \( \cup_{s \in S^b} F^b_{\epsilon,s} \). As
\( M^b = F^b_0 \), the set of builtin function symbols \( F^b \) might
be infinite. We will assume that first-order \( \Sigma^b \) formulae
can be decided by an oracle that is implemented in practice by a
best-effort SMT solver. We let \(\X\) be an \(S\)-sorted set of
variables. The set of \( \Sigma \)-terms with variables in \( \X \) is
denoted by \( T_\Sigma(\X) \).

\begin{example}

  Let \( S^b = \{ \sbool, \sint, \sid \}\). Let
  \( F^b = \{ \fsh{0}, \fsh{1}, \fsh{2}, \ldots : \to
  \sint,\allowbreak \fsh{+} : \sint \times \sint \to \sint,\allowbreak
  \fsh{=} : \sint \times \sint \to \sbool, \fsh{true}, \fsh{false} :
  \to \sbool,\allowbreak \fsh{\land} : \sbool \times \sbool \to
  \sbool, \ldots \} \). We assume that first-order constraints over
  \( \Sigma^b \) can be solved by an SMT solver implementing integer
  arithmetic. 
  
  Let \( S^c = \{ \sexp, \sstack, \scfg, \senv, \sfuncs \} \) and
  \( F^c = \{ \fsh{\plus{}{}}, \fsh{;} : \sexp \times \sexp \to
  \sexp,\allowbreak \fsh{ite} : \sexp \times \sexp \times \sexp \to
  \sexp,\allowbreak \fsh{[]} : \to \sstack,\allowbreak \fsh{\lisymb} :
  \sexp \times \sstack \to \sstack,\allowbreak
  \cfgimp{\cdot}{\cdot}{\cdot} : \sstack \times \senv \times \sfuncs
  \to \scfg, \ldots \} \). Let
  \( {\leq} = \{ \sbool \leq \sexp, \sint \leq \sexp, \sid \leq \sexp,
  \ldots \} \). The constructors sorts and function symbols model the
  syntax of an imperative language with boolean and arithmetic
  expressions, a global environment and a function map.

  Let
  \( F^a = \{ \fsh{val} : \sexp \to \sbool, \fsh{reduce} : \sint
  \times \sint \to \sbool, \ldots \} \). The axiomatized function symbols
  correspond to those discussed in Section~\ref{sec:intro}.
  
\end{example}

The set \( \CF \) of \emph{constraint formulae} is the set of
first-order formulae with equality over the signature \( \Sigma
\). The set \( \CF^b \) of \emph{builtin constraint formulae} is the
set of first-order formulae with equality over the signature
\( \Sigma^b \).
  
\begin{definition}[LCTRS]\label{def:lctrs}
  A \emph{logically constrained rewrite rule} is a tuple
  \(( l, r, \phi )\), often written as \(\rrule{l}{r}{\phi}\), where
  \( l, r \) are terms in \(T_{\Sigma}(\X)\) of the same sort, and
  \( \phi \in \CF \) is a first-order formula. A \emph{logically
    constrained term rewriting system} \( \R \) is a set of logically
  constrained rewrite rules.
\end{definition}

\begin{definition}[Reduction Relation Induced by an LCTRS]\label{def:transition}
  Given a \( \Sigma \)-model \( M \), an LCTRS \( \R \) induces a
  reduction relation on the sorted carrier set of \( M \) defined by:
  \[ {\rewrite_{\R}^M} = \left\{ \Big(\rho(C[l]), \rho(C[r])\Big) \;\;\; \mid
    \begin{array}{l}
      \;\;\; \rrule{l}{r}{\phi} \in \R \\
      \;\;\; \mbox{\(\rho : \X \to M\) is a valuation s.t. \(\rho(\phi) = \True\)}
      \\
      \;\;\;
      \mbox{\(C\) is an arbitrary context}
    \end{array} \right\}. \]
\end{definition}

We consider the model \( M^a \) of \( \Sigma \) whose sorted carrier
set is defined inductively by the following equations:

\begin{enumerate}
\item for any builtin sort \( s \in S^b \):
  \( M_s^a = M_s^b \cup \{ \fsh{\msh{f}(\msh{m_1}, \msh{\ldots},
    \msh{m_n})} \mid f \in F^a_{s_1 \ldots s_n, s}, m_1 \in M^a_{s_1},
  \ldots, m_n \in M^a_{s_n} \} \cup \{ \fsh{\msh{f}(\msh{m_1},
    \msh{\ldots}, \msh{m_n})} \mid f \in F^b_{s_1 \ldots s_n, s}, m_1 \in
  M^a_{s_1}, \ldots, m_n \in M^a_{s_n}\mbox{ and there exists
    \(1 \leq i \leq n\) s.t. \( m_i \not\in M^b_{s_i}\)} \} \);

\item for any constructor sort \( s \in S^c \):
  \( M_s^a = \{ \fsh{\msh{f}(\msh{m_1}, \msh{\ldots}, \msh{m_n})} \mid
  f \in F^a_{s_1 \ldots s_n, s} \cup F^c_{s_1 \ldots s_n, s}, m_1 \in
  M^a_{s_1}, \ldots,\allowbreak m_n \in M^a_{s_n} \} \).
\end{enumerate}

The interpretation of the function symbols in \( M^a \)
is defined as: \begin{enumerate}
\item builtin symbols applied to elements of \( M^b \): the same
  interpretation as in \( M^b \);
\item builtin symbols applied to elements of \( M^a \setminus M^b \):
  interpreted as free symbols;
\item constructor symbols and axiomatized symbols:
  interpreted as free symbols.
\end{enumerate}
Note that, since \( M^b_s = F^b_{\epsilon, s} \) for any builtin sort
\( s \in S^b \), ground terms over \( F^b_0 \cup F^c \) are elements
of \( M^a \).

We say that a LCTRS \( \R^a \) axiomatizes the symbols in \( F^a \) if:
\begin{enumerate}
\item the reduction relation \( \rewrite_{R^a}^{M^a} \) induced by
  \( R^a \) on the model \( M^a \) defined above is convergent and
\item the normal form \( m\mathop{\downarrow} \) of any element
  \( m \in M^a_s \) w.r.t. to \( \rewrite_{R^a}^{M^a} \) is a ground
  term \( m \in T_{F^b_0 \cup F^c} \) built from nullary builtins and
  constructor symbols.
\end{enumerate}

In what follows, we assume that \( R^a \) is an LCTRS that axiomatizes
\( F^a \).

\begin{example}
  Continuing the previous example, we consider
  \(R^a = \{ \allowbreak \fsh{\areduce(\vi, \vn)}\allowbreak
  \mathrel{\rewrite} \fsh{[]\myif \vi > \vn}, \fsh{\areduce(\vi, \vn)
    \mathrel{\rewrite} \mbrack{\plus{\vi}{\square}} \lisymb
    \areduce(\vi + 1, \vn)\myif \vi \leq \vn}, \ldots \} \).
\end{example}

We now fix a model \( M \) (depending on \( R^a \)) defined as follows:

\begin{enumerate}
\item for any builtin sort \( s \in S^b \): \( M_s^a = M_s^b \);

\item for any constructor sort \( s \in S^c \):
  \( M_s^a = \{ \fsh{\msh{f}(\msh{m_1}, \msh{\ldots}, \msh{m_n})} \mid f
  \in F_{s_1 \ldots s_n, s}, m_1 \in M^a_{s_1}, \ldots,\allowbreak m_n
  \in M^a_{s_n} \} \).
\end{enumerate}

That is, the carrier set of \( M \) is the set of ground terms
\( T_{F^b_0 \cup F^c} \) built from nullary builtins and constructor
symbols.  In \( M \), the builtin symbols are interpreted as in
\( M^b \), the constructor symbols are interpreted as in \( M^c \) and
axiomatized symbols \( f \in F^a \) are interpreted by:
\( M_{f(m_1, \ldots, m_n)} = f(m_1, \ldots, m_n)\mathop{\downarrow} \)
(the normal form w.r.t. the reduction relation induced by \( \R^a \)).

We call solving equations over terms \( t_1, t_2 \in T_\Sigma(\X) \) in
the model \( M \) \emph{unification modulo axiomatized symbols}
(UMAS). Unification modulo axiomatized symbols is a generalization of
\emph{unification modulo builtins} (UMB) that we have introduced in
our previous work~\cite{CiobacaWOLLIC2018}. Unlike usual unification
problems, where a unifier is simply a substitution, in UMB (and
therefore in UMAS as well) a unifier is a pair \( (\phi, \sigma) \),
where \( \phi \in \CF^b \) is a builtin logical constraint and
\( \sigma \) is a substitution.

A complete set of unifiers modulo axiomatized symbols of
\( t_1, t_2 \) is a set \( \textit{umas}(t_1, t_2) \) of pairs of
builtin constraints and substitutions such that:
\begin{enumerate}
\item (soundness) for any
  \( (\phi, \sigma) \in \textit{umas}(t_1, t_2) \), we have:
  \( \rho(\sigma(t_1)) = \rho(\sigma(t_2)) \) for any valuation
  \( \rho \) such that \( \rho(\phi) = \True \).
\item (completeness) for any valuation \( \rho : \X \to M \) s.t.
  \( \rho(t_1) = \rho(t_2) \), there exists an unifier
  \( (\phi, \sigma) \in \textit{umas}(t_1, t_2) \) and a valuation
  \( \rho' \) such that \( \rho = \rho' \circ \sigma \) and \(
  \rho'(\phi) = \True \).
\end{enumerate}

\begin{example} Consider \( t_1 = \cfgimp{[]}{\venv}{\vfuncs} \) and
  \( t_2 = \cfgimp{\areduce(\vi, \vn)}{\venv'}{\vfuncs'} \). We have
  (for example) that
  \( \textit{umas}(t_1, t_2) = \{ (\vi \fsh{>} \vn, \{ \venv' \mapsto
  \venv, \vfuncs' \mapsto \vfuncs \}) \} \). Recall that
  \(R^a \supseteq \{ \fsh{\areduce(\vi, \vn)} \mathrel{\rewrite}
  \fsh{[]\myif \vi > \vn}\} \).
\end{example}

\begin{definition}[Top-most LCTRSs]
\label{def:tm-lctrs}
  An LCTRS \( \R \) is top-most on \( M \) if
  \[ {\rewrite_{R}^{M}} = \left\{ \Big(\rho(l), \rho(r)\Big) \qquad
    \mid
    \begin{array}{l}
      \qquad \rrule{l}{r}{\phi} \in \R \\
      \qquad \mbox{\(\rho\) is a valuation s.t. \(\rho(\phi) = \True\)}
    \end{array} \right\},\] that is, all rewritings take place at the
root.
\end{definition}
The fact that an LCTRS \( \R \) is top-most can be ensured by
requiring that all rewrite rules are of some sort \( s \in S \) with
the property that no function symbol takes elements of sort \( s \) as
arguments. Therefore, terms of sort \( s \) can only be rewritten at
the root. There exist techniques~\cite{SerbanutaENTCS2007} for
transforming an LCTRS into a top-most one.

\begin{example}
  We now consider the top-most LCTRS
  \( \R = \{ \cfgimp{\laassign{\vx}{\vi} \lisymb \vcs}{\venv}{\vfuncs}
  \rewrite \\\allowbreak \cfgimp{\vcs}{\bupdate(\venv, \vx,
    \vi)}{\vfuncs}, \cfgimp{\laassign{\vx}{\ve} \lisymb
    \vcs}{\venv}{\vfuncs} \rewrite \cfgimp{\ve \lisymb
    \laassign{\vx}{\square{}} \lisymb \vcs}{\venv}{\vfuncs} \myif
  \\\allowbreak \lnot \bval(\ve), \ldots \} \). Only the function
  symbol \( \cfgimp{\cdot}{\cdot}{\cdot} \) returns an element of sort
  \( \scfg \) and no function symbol takes \( \scfg \) as an argument;
  therefore \( \R \) is top-most.
\end{example}
\begin{definition}[Constrained Terms]
  A \emph{constrained term} \(\varphi\) of sort \(s\in S\) is a pair
  \(\fsh{(\msh{t}, \msh{\phi})}\) (written
  \(\ct{\msh{t}}{\msh{\phi}}\)), where \(t \in T_{\Sigma,s}(\X)\) and
  \(\phi \in \CF\).
\end{definition}

We consistently use \(\varphi\) for constrained terms and \(\phi\) for
constraint formulae.

\begin{definition}[Valuation Semantics of Constraints]
\label{def:valuationsemantics}
The \emph{valuation semantics} of a constraint \(\phi\) is the set
\(\sem{\phi} \eqbydef \{\alpha : X \to M^\Sigma \mid\allowbreak M^\Sigma, \alpha
\models \phi\}\).
\end{definition}
\begin{definition}[State Predicate Semantics of Constrained
  Terms]
\label{def:statepredicatesemantics}
The \emph{state predicate semantics} of a constrained term
\(\ct{t}{\phi}\) is the set
\\
\centerline{\(\tsem{\ct{t}{\phi}}\eqbydef\{\alpha(t)\mid\alpha\in\sem{\phi}\}.\)}
\end{definition}

\begin{definition}[Derivatives of Constrained Terms]
  The \emph{set of derivatives} of a constrained term \(\ct{t}{\phi}\)
  w.r.t. a rule \(\rrule{l}{r}{\phi_{\it lr}}\) is
\begin{align*}
  \Delta_{l,r,\phi_{\it lr}}\big(\ct{t}{\phi}\big) = \{\ct{\sigma'(r)}{\phi \land \phi'\land \phi_{\it lr}}\mid{} & (\phi', \sigma') \in \textit{umas}(t, l), \\
                                           & (\phi \land \phi'\land \phi_{\it lr})\textrm{~satisfiable}\}
\end{align*}
\end{definition}
A constrained term \(\varphi\) is \emph{\(\R\)-derivable} if
\( \Delta_{\R}(\varphi) \not= \emptyset \).

We assume as usual that the rewrite rules in \(\R^a\) and the rewrite
rules in \(\rrule{l}{r}{\phi_{\it lr}} \in R\) are
coherent in the generalized sense~\cite{MeseguerWRLA2018}.

\begin{restatable}{theorem}{deltacommutes2}
  \label{th:der2}
  If \(\ct{t}{\phi}\) is a constrained term, then
  \[\tsem{\Delta_{\R}\big(\ct{t}{\phi}\big)} = \{\gamma' \mid \gamma
 \rewrite_{\R}^{M} \gamma' \textrm{~for~some~}\gamma \in \tsem{\ct{t}{\phi}}\}.\]
\end{restatable}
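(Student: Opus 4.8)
The plan is to prove the two inclusions $\tsem{\Delta_{\R}(\ct{t}{\phi})} \subseteq \{\gamma' \mid \gamma \rewrite_{\R}^{M} \gamma'\text{ for some }\gamma\in\tsem{\ct{t}{\phi}}\}$ and its converse separately, unfolding the definitions of $\Delta$, of the state predicate semantics $\tsem{\cdot}$, and of the reduction relation $\rewrite_{\R}^{M}$, and using the soundness and completeness properties of $\textit{umas}$ as the bridge between the syntactic derivative operation and the semantic transition relation. A crucial simplification is that $\R$ is top-most on $M$ (Definition~\ref{def:tm-lctrs}): this means $\gamma \rewrite_{\R}^{M} \gamma'$ holds iff there is a rule $\rrule{l}{r}{\phi_{\it lr}} \in \R$ and a valuation $\rho$ with $\rho(\phi_{\it lr}) = \True$, $\gamma = \rho(l)$, and $\gamma' = \rho(r)$ — so no contexts need to be tracked, and the matching of $\gamma$ against $l$ is exactly a $\textit{umas}$-instance question.

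For the left-to-right inclusion, I would take an arbitrary $\gamma' \in \tsem{\Delta_{\R}(\ct{t}{\phi})}$. By definition of $\Delta_{\R}$ there is a rule $\rrule{l}{r}{\phi_{\it lr}}\in\R$ and a unifier $(\phi',\sigma')\in\textit{umas}(t,l)$ with $\phi\land\phi'\land\phi_{\it lr}$ satisfiable, and $\gamma'\in\tsem{\ct{\sigma'(r)}{\phi\land\phi'\land\phi_{\it lr}}}$; so $\gamma' = \alpha(\sigma'(r))$ for some valuation $\alpha$ satisfying $\phi\land\phi'\land\phi_{\it lr}$. Set $\rho = \alpha\circ\sigma'$. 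Since $\alpha\models\phi'$, soundness of $\textit{umas}$ gives $\rho(t)=\alpha(\sigma'(t))=\alpha(\sigma'(l))=\rho(l)$; call this common value $\gamma$. Then $\gamma=\rho(t)=\alpha(\sigma'(t))$ and $\alpha\models\phi$, so $\gamma\in\tsem{\ct{t}{\phi}}$; and since $\alpha\models\phi_{\it lr}$ we have $\rho(\phi_{\it lr})=\True$, so by top-mostness $\gamma\rewrite_{\R}^M\rho(r)=\alpha(\sigma'(r))=\gamma'$. This gives the inclusion. (Here one should be careful that $\rho(\phi_{\it lr}) = \alpha(\sigma'(\phi_{\it lr}))$ makes sense: since rule variables can be renamed apart, $\sigma'$ need not touch the variables of $\phi_{\it lr}$, or one can simply compose and note $\alpha\models\phi_{\it lr}$ directly because $\phi_{\it lr}$ appears as a conjunct.)

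For the right-to-left inclusion, take $\gamma'$ with $\gamma\rewrite_{\R}^M\gamma'$ for some $\gamma\in\tsem{\ct{t}{\phi}}$. Top-mostness yields a rule $\rrule{l}{r}{\phi_{\it lr}}\in\R$ and a valuation $\rho$ with $\rho(\phi_{\it lr})=\True$, $\gamma=\rho(l)$, $\gamma'=\rho(r)$; and $\gamma\in\tsem{\ct{t}{\phi}}$ gives a valuation $\alpha_0\models\phi$ with $\gamma=\alpha_0(t)$. By renaming rule variables apart from the variables of $t$ and $\phi$, I can combine $\rho$ and $\alpha_0$ into a single valuation $\beta$ that agrees with $\alpha_0$ on the variables of $t,\phi$ and with $\rho$ on the rule variables, so that $\beta(t)=\beta(l)=\gamma$. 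Now apply completeness of $\textit{umas}$ to $t,l$ and $\beta$: there is a unifier $(\phi',\sigma')\in\textit{umas}(t,l)$ and a valuation $\beta'$ with $\beta=\beta'\circ\sigma'$ and $\beta'\models\phi'$. Then $\beta'\models\phi$ (since $\beta = \beta'\circ\sigma'$ and we may assume $\sigma'$ does not move the $t$/$\phi$ variables, or argue $\beta'(\phi)=\beta(\phi)$ by the same apartness), $\beta'\models\phi_{\it lr}$, and $\phi\land\phi'\land\phi_{\it lr}$ is satisfied by $\beta'$ hence satisfiable, so $\ct{\sigma'(r)}{\phi\land\phi'\land\phi_{\it lr}}\in\Delta_{\R}(\ct{t}{\phi})$; finally $\gamma'=\rho(r)=\beta(r)=\beta'(\sigma'(r))\in\tsem{\ct{\sigma'(r)}{\phi\land\phi'\land\phi_{\it lr}}}\subseteq\tsem{\Delta_{\R}(\ct{t}{\phi})}$.

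The main obstacle is the bookkeeping around variable disjointness and the precise interface of $\textit{umas}$: one must be consistent about which variables are ``rule'' variables versus ``constrained-term'' variables, ensure the standard apartness assumption so that $\sigma'$ acts only on the relevant variables, and check that the completeness condition $\rho = \rho'\circ\sigma$ is invoked with the right combined valuation. There is also a small semantic point to verify: the model $M$ over which $\rewrite_{\R}^M$ and $\textit{umas}$ are defined is the same ground-term model fixed after $R^a$, so that $\rho(\phi_{\it lr})=\True$ in that model and satisfiability of $\phi\land\phi'\land\phi_{\it lr}$ refer to the same semantics — this is why the coherence/convergence assumptions on $R^a$ (used to well-define $M$) and the fact that first-order $\Sigma^b$-constraints are decidable matter in the background. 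Modulo these points, the argument is a direct definition-chase: the real content is entirely packaged in the soundness and completeness of $\textit{umas}$ together with top-mostness.
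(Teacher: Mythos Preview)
The paper does not actually contain a proof of this theorem: Appendix~\ref{app:proofs} only proves the soundness theorems for full and partial simulation, and Theorem~\ref{th:der2} is stated without proof (it extends the analogous result for unification modulo builtins from the authors' earlier work to the setting with axiomatized symbols). So there is no paper proof to compare against.

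Your approach is the natural one and is essentially correct. You rightly identify the three ingredients that make the argument go through: top-mostness of $\R$ on $M$ (so that no contexts appear and matching $\gamma$ against $l$ is exactly a $\textit{umas}$ question), soundness of $\textit{umas}$ for the $\subseteq$ direction, and completeness of $\textit{umas}$ for the $\supseteq$ direction. You also correctly flag the one genuinely delicate point, namely the variable-apartness bookkeeping between the constrained term $(t,\phi)$ and the rule $(l,r,\phi_{\it lr})$.

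One place where your write-up could be sharpened: in the derivative, the constraint is recorded as $\phi\land\phi'\land\phi_{\it lr}$ \emph{without} applying $\sigma'$ to $\phi$ or $\phi_{\it lr}$. For the argument to go through cleanly you need that $\sigma'$ acts as the identity on the variables of $\phi$ (so that $\alpha\models\phi$ iff $\alpha\circ\sigma'\models\phi$), and symmetrically on the rule side. This is exactly the standard freshness convention for rule variables together with the shape of the unifiers returned by $\textit{umas}$ (as in the paper's example, where $\sigma'$ maps only rule-side variables). You gesture at this in your parenthetical remarks, but it is worth stating the convention explicitly at the outset rather than patching it case by case; once that is fixed, both directions are a clean definition chase.
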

The theorem above ensures that the symbolic successors (derivatives)
of a constrained term are semantically correct. We write
\( \rewrite \) instead of \( \rewrite_{\R}^{M} \) when \( \R \) and
\( M \) can be inferred from the context.
%
\section{Language Semantics as LCTRSs}
\label{sec:semantics}

\begin{figure}[t!]
  \hrule
  {
    \centering \(\begin{array}{lll} \sexp
    & ::= & \sint \mid
            \sbool \mid \sid \mid \sexp {\tt\ binop\ }
            \sexp \mid {\tt unop\ }\sexp \mid
            \fsh{call}\ \sfuncall \mid
            \laskip \mid
    \\ & &  \laseq{\sexp}{\sexp} \mid
           \laassign{\sid}{\sexp} \mid 
           \lawh{\sexp}{\sexp} \mid
           \laite{\sexp}{\sexp}{\sexp}
    \\
    \sfuncall
    & ::= & \sid \mid \sfuncall(\sexp)
    \qquad \hfill
    \sfunbody
     ::= \sexp \mid \lambda \sid . \sfunbody
    \\
    \sstack
    & ::= & \fsh{[]} \mid \sexp \lisymb \sstack
    \hfill
    \scfg
    ::=
            \cfgimp{\sstack}{\senv}{\sfuncs}
\end{array}\)
              }
  \caption{\label{fig:impsyntax}The syntax of \IMP{}.}
  \hrule
\end{figure}
The operational semantics of a programming language can be encoded as
a top-most LCTRS. As a running example, we feature equivalence proofs
for programs written in an imperative language that we call \IMP{}. In
Figure~\ref{fig:impsyntax}, we define the syntax of the \IMP{}
language.
The syntax is given in a BNF-like notation, and it should be
understood as defining an order-sorted term signature. The
configuration is a tuple \(\cfgimp{\mcs}{\menv}{\mfuncs}\), consisting
of a stack \(\mcs\) of program expressions and statements to be
evaluated in order (as exemplified in Section~\ref{sec:intro}), a map
\(\menv\) from identifiers (program variables) to integers that acts
as a global store, and a map \(\mfuncs\) from function identifiers to
function bodies. In \IMP{}, both expressions and statements are
grouped under the syntactic category \sexp, and the difference between
them is encoded in the semantics: expressions are replaced by their
values in the computation stack, and statements are erased after their
effect is performed.
The language \IMP{} comes in two variations, \IMP{1} and \IMP{2}, both
sharing the same syntax. The difference is in the semantics: \IMP{1}
has an unbounded stack, while \IMP{2} has a bounded stack (we use a
parametric bound of \(k\) elements). As explained in the introduction,
we write \IMP{} instead of \IMP{1} and \IMP{2} when the exact
variation does not matter. In Figure~\ref{fig:impsemantics}, we define
the operational semantics of \IMP{1} in frame stack
style~\cite{PittsAS2000} as an LCTRS.
\begin{figure}[t!]
\hrule
\begin{center}
  \(\begin{array}{l}
      \cfgimp{\laassign{\vx}{\ve} \lisymb \vcs}{\venv}{\vfuncs}

      \rewrite

      \cfgimp{\ve \lisymb \laassign{\vx}{\square{}} \lisymb \vcs}{\venv}{\vfuncs}

      \myif \lnot \bval(\ve) \qquad \hfill
      \textit{assignment} \\
      
      \cfgimp{\vi \lisymb \laassign{\vx}{\square{}} \lisymb \vcs}{\venv}{\vfuncs}
      
      \rewrite

      \cfgimp{\laassign{\vx}{\vi} \lisymb \vcs}{\venv}{\vfuncs} \\
      
      \cfgimp{\laassign{\vx}{\vi} \lisymb \vcs}{\venv}{\vfuncs}

      \rewrite

      \cfgimp{\vcs}{\bupdate(\venv, \vx, \vi)}{\vfuncs} \\
    \hline
      \cfgimp{\vx \lisymb \vcs}{\venv}{\vfuncs}

      \rewrite

      \cfgimp{\b(\vx, \venv) \lisymb \vcs}{\venv}{\vfuncs}

      \qquad \hfill \textit{identifier lookup} \\

      \hline
      \cfgimp{\plus{\vesub{1}}{\vesub{2}} \lisymb \vcs}{\venv}{\vfuncs}

      \rewrite

      \cfgimp{\vesub{1} \lisymb \plus{\square{}}{\vesub{2}} \lisymb \vcs}{\venv}{\vfuncs}

      \myif  \lnot \bval(\vesub{1})
      \qquad \hfill \textit{binary} \\ 
      
      \cfgimp{\visub{1} \lisymb \plus{\square{}}{\vesub{2}} \lisymb \vcs}{\venv}{\vfuncs}

      \rewrite

      \cfgimp{\plus{\visub{1}}{\vesub{2}} \lisymb \vcs}{\venv}{\vfuncs} \hfill \textit{
      operations} \\
      
      \cfgimp{\plus{\visub{1}}{\vesub{2}} \lisymb \vcs}{\venv}{\vfuncs}

      \rewrite

      \cfgimp{\vesub{2} \lisymb \plus{\visub{1}}{\square{}} \lisymb \vcs}{\venv}{\vfuncs}

      \myif  \lnot \bval(\vesub{2}) \hfill \textit{} \\ 
      
      \cfgimp{\visub{2} \lisymb  \plus{\visub{1}}{\square{}} \lisymb \vcs}{\venv}{\vfuncs}

      \rewrite

      \cfgimp{\plus{\visub{1}}{\visub{2}} \lisymb \vcs}{\venv}{\vfuncs} \\ 
      
      \cfgimp{\plus{\visub{1}}{\visub{2}} \lisymb \vcs}{\venv}{\vfuncs}

      \rewrite

      \cfgimp{\visub{1} + \visub{2} \lisymb \vcs}{\venv}{\vfuncs} \\

      \hline


      \cfgimp{\lanot{\ve} \lisymb \vcs}{\venv}{\vfuncs} 

      \rewrite

      \cfgimp{\ve \lisymb \lanot{\square{}} \lisymb \vcs}{\venv}{\vfuncs}

      \myif  \lnot \bval(\ve) 

      \qquad \hfill \textit{unary} \\ 
      
      \cfgimp{\vb \lisymb \lanot{\square{}} \lisymb \vcs}{\venv}{\vfuncs}

      \rewrite

      \cfgimp{\lanot{\vb} \lisymb \vcs}{\venv}{\vfuncs} \hfill
      \textit{operations} \\
      
      \cfgimp{\lanot{\vb} \lisymb \vcs}{\venv}{\vfuncs} 

      \rewrite

      \cfgimp{\overline{\vb} \lisymb \vcs}{\venv}{\vfuncs} \hfill
      \textit{} \\
      \hline


      \cfgimp{\laite{\top}{\vesub{2}}{\vesub{3}} \lisymb \vcs}{\venv}{\vfuncs}

      \rewrite

      \cfgimp{\vesub{2} \lisymb \vcs}{\venv}{\vfuncs}

      \qquad \hfill \textit{if-then-else} \\ 

      \cfgimp{\laite{\bot}{\vesub{2}}{\vesub{3}} \lisymb \vcs}{\venv}{\vfuncs}
     
      \rewrite

      \cfgimp{\vesub{3} \lisymb \vcs}{\venv}{\vfuncs} \\ 

      \cfgimp{\laite{\vesub{1}}{\vesub{2}}{\vesub{3}} \lisymb \vcs}{\venv}{\vfuncs} 

      \rewrite

      \cfgimp{\vesub{1} \lisymb \laite{\square{}}{\vesub{2}}{\vesub{3}} \lisymb
   \vcs}{\venv}{\vfuncs} \\

   \hfill \myif  \lnot \bval(\vesub{1}) \\ 

      \cfgimp{\vb \lisymb \laite{\square{}}{\vesub{2}}{\vesub{3}} \lisymb \vcs}{\venv}{\vfuncs} 

      \rewrite

      \cfgimp{\laite{\vb}{\vesub{2}}{\vesub{3}} \lisymb \vcs}{\venv}{\vfuncs} \\

    \hline
    
      \cfgimp{\lawh{\vesub{1}}{\vesub{2}} \lisymb \vcs}{\venv}{\vfuncs} 

      \rewrite
       
      \qquad \hfill \textit{while loop} \\

    \qquad \cfgimp{\laite{\vesub{1}}{\mbrack{\laseq{\vesub{2}}{\lawh{\vesub{1}}{\vesub{2}}}}}{\laskip} \lisymb \vcs}{\venv}{\vfuncs} \\

    \hline
      \cfgimp{\laseq{\vesub{1}}{\vesub{2}} \lisymb \vcs}{\venv}{\vfuncs} 

      \rewrite

      \cfgimp{\vesub{1} \lisymb \vesub{2} \lisymb \vcs}{\venv}{\vfuncs} 

      \qquad \hfill \textit{sequence} \\ 
    \hline
      \cfgimp{\laskip \lisymb \vcs}{\venv}{\vfuncs} 

      \rewrite

      \cfgimp{\vcs}{\venv}{\vfuncs} 

      \qquad \hfill \textit{skip} \\ 
    \hline
      \cfgimp{\lacall{\vf} \lisymb \vcs}{\venv}{\vfuncs}

      \rewrite

      \cfgimp{\blookup(\vf, \vfuncs) \lisymb \vcs}{\venv}{\vfuncs}
    \hfill \textit{function calls}\\

      \cfgimp{\lacall{\vf(\ve)} \lisymb \vcs}{\venv}{\vfuncs}

      \rewrite

      \cfgimp{\lacall{\vf} \lisymb \lacall{\square{}(\ve)} \lisymb \vcs}{\venv}{\vfuncs} \\

      \cfgimp{\lambda \vx.\vfunbody \lisymb \lacall{\square{}(\ve)} \lisymb
      \vcs}{\venv}{\vfuncs}

      \rewrite

      \cfgimp{\lacall{\lambda \vx.\vfunbody (\ve)} \lisymb \vcs}{\venv}{\vfuncs}
      \\
      
      \cfgimp{\lacall{\lambda \vx.\vfunbody(\ve)} \lisymb \vcs}{\venv}{\vfuncs}

      \rewrite

      \cfgimp{\ve \lisymb \lacall{\lambda \vx.\vfunbody(\square{})} \lisymb \vcs}{\venv}{\vfuncs}
      
      \hfill \myif  \lnot \bval(\ve) \\
      
      \cfgimp{\vi \lisymb \lacall{\lambda \vx.\vfunbody(\square{})}
      \lisymb \vcs}{\venv}{\vfuncs}

      \rewrite

      \cfgimp{\lacall{\lambda \vx.\vfunbody(\vi)} \lisymb \vcs}{\venv}{\vfuncs} \\
      
      \cfgimp{\lacall{\lambda \vx.\vfunbody(\vi)} \lisymb \vcs}{\venv}{\vfuncs}

      \rewrite

    \cfgimp{\asubst(\vx, \vi, \vfunbody) \lisymb \vcs}{\venv}{\vfuncs}\\
    \end{array}\)
  \end{center}
  \caption{\label{fig:impsemantics}The small-step operational
    semantics of \IMP{1} encoded as an LCTRS in frame stack style. The
    language \IMP{1} has an unbounded stack. The variables have the
    following sorts: \( \vx : \sid, \ve : \sexp, \vcs : \sstack, \venv
    : \senv, \vfuncs : \sfuncs, \vi : \sint, \vb : \sbool, \vf : \sid,
    \vfunbody : \sfunbody \).}
\hrule
\end{figure}
Note the use of the axiomatized symbol \asubst in the last rule for
the function calls.

The axiomatized symbol \asubst is axiomatized by the following rules:
\begin{enumerate}
\item \(\asubst(\vx, \ve, \laassign{\vy}{\vesub{1}}) \mathop{\rewrite}
  \laassign{\vy}{\asubst(\vx, \ve, \vesub{1})}\),
  
\item \(\asubst(\vx, \ve, \vy) \mathop{\rewrite} \vy \myif \vx \fsh{\not=}
  \vy\),
  
\item \(\asubst(\vx, \ve, \plus{\vesub{1}}{\vesub{2}}) \mathop{\rewrite} \plus{\asubst(\vx, \ve,
    \vesub{1})}{\asubst(\vx, \ve, \vesub{2})}\),
  
\item \(\asubst(\vx, \ve, \vx) \mathop{\rewrite} \ve\),
  
\item \(\allowbreak\asubst(\vx, \ve, \lanot{\ve'}) \mathop{\rewrite}
  \lanot{\asubst(\vx, \ve, \ve')}\),
  
\item \(\asubst(\vx, \ve, \laite{\ve'}{\vesub{1}}{\vesub{2}}) \mathop{\rewrite}
  \\ \phantom{\qquad} \laite{\allowbreak\asubst(\vx, \ve,
    \ve')}{\asubst(\vx, \ve, \vesub{1})}{\asubst(\vx, \ve,
    \vesub{2})}\),
  
\item \(\asubst(\vx, \ve, \laseq{\vesub{1}}{\vesub{2}}) \mathop{\rewrite} \laseq{\asubst\allowbreak (\vx, \ve,
    \vesub{1})}{\asubst(\vx, \ve, \vesub{2})}\).
\end{enumerate}
Note that the symbols \blookup and \bupdate in the usual theory of
arrays are used for handling the environment in the imperative
part. The axiomatized symbol \bval returns true exactly for the values
of the language: integers and booleans:
\begin{enumerate*}
\item \( \fsh{\bval(\vi)} \mathop{\rewrite} \fsh{\top} \);
\item \( \fsh{\bval(\vb)} \mathop{\rewrite} \fsh{\top} \);
\item \( \fsh{\bval(\plus{\vesub{1}}{\vesub{2}})} \mathop{\rewrite}
  \fsh{\bot} \);
\item \( \fsh{\bval(\laseq{\vesub{1}}{\vesub{2}})} \mathop{\rewrite}
  \fsh{\bot} \); etc.
\end{enumerate*}

An example of evaluation using the operational semantics encoded as an
LCTRS is shown in Appendix~\ref{app:examplesemantics}.
The constrained rewrite rules defining \IMP{2} are presented in
Figure~\ref{fig:imp2semantics}. They are similar to the rules for
\IMP{1}, except that each rule that increases the stack size has an
additional constraint. The constraint prevents the rule from firing if
the stack would become too big.
\section{Simulation-based Equivalence Proofs}
\label{sec:equivalence}
In the following, we will assume that \(\R_L\) and \(\R_R\) are two LCTRSs
modeling the semantics of two programming languages: the \emph{left}
language and the \emph{right} language. We assume that the sorts
\(\CfgL\) and \(\CfgR\) are the sorts of configurations of the left
language and of the right language, respectively. The LCTRSs \(\R_L\)
and \(\R_R\) induce transition relations on the interpretations of
\(\CfgL\) and \(\CfgR\), transition relations that capture the operational
semantics of the two languages. The languages might be the same, or
they might be different; all our results are parametric in \(\R_L\) and
\(\R_R\). In the following examples we take \(\R_L = \R_R = \IMP{}\) and
\( \CfgL = \CfgR = \scfg \).

Formally, we define equivalence not between programs, but between
\emph{program configurations}. Program configurations typically
contain both a program and additional information such as program
counter, heap information, stack information or others, depending on
the particular programming language. In our examples, the program
configuration is a tuple
\(\cfgimp{\msh{[e_1, \ldots, e_n]}}{\menv}{\mfuncs}\) consisting of a
stack \(e_1, \ldots, e_n\) of expressions to be evaluated in this
order, an environment \menv mapping global program variables to
integers, and a map \mfuncs from identifiers to function bodies.

We sometimes distinguish between \emph{symbolic program
  configurations} (terms of sort \(\CfgL\) or \(\CfgR\), possibly with
variables) and \emph{ground program configurations} (elements of the
interpretation of the sorts \(\CfgL\) and \(\CfgR\)). Due to the
definition of the fixed model in which we work, any ground program
configuration is also a symbolic program configuration with \(0\)
variables. Our proof method shows equivalence between two symbolic
program configurations. The fact that the same variable occurs in both
symbolic configurations models that the two programs take the same
input.

\begin{figure}[t!]
\hrule
  \begin{center}
\(\begin{array}{l}
      \cfgimp{\laassign{\vx}{\ve} \lisymb \vcs}{\venv}{\vfuncs}

      \rewrite

      \cfgimp{\ve \lisymb \laassign{\vx}{\square{}} \lisymb \vcs}{\venv}{\vfuncs}

      \qquad \hfill \myif \lnot \bval(\ve) \land
      \alen(\vcs)  < k \\

      \cfgimp{\plus{\vesub{1}}{\vesub{2}} \lisymb \vcs}{\venv}{\vfuncs}

      \rewrite

      \cfgimp{\vesub{1} \lisymb \plus{\square{}}{\vesub{2}} \lisymb \vcs}{\venv}{\vfuncs}

      \qquad \hfill \myif  \lnot \bval(\vesub{1}) \land
      \alen(\vcs)  < k \\
      
      \cfgimp{\plus{i_1}{\vesub{2}} \lisymb \vcs}{\venv}{\vfuncs}

      \rewrite

      \cfgimp{\vesub{2} \lisymb \plus{i_1}{\square{}} \lisymb \vcs}{\venv}{\vfuncs}

      \qquad \hfill \myif  \lnot \bval(\vesub{2}) \land
      \alen(\vcs)  < k \\
      
      \cfgimp{\lanot{\vesub{1}} \lisymb \vcs}{\venv}{\vfuncs} 

      \rewrite

      \cfgimp{\vesub{1} \lisymb \lanot{\square{}} \lisymb \vcs}{\venv}{\vfuncs}

      \qquad \hfill \myif  \lnot \bval(\vesub{1}) \land
      \alen(\vcs)  < k \\
      
      \cfgimp{\laite{\vesub{1}}{\vesub{2}}{\ve_3} \lisymb \vcs}{\venv}{\vfuncs} 

      \rewrite

      \cfgimp{\vesub{1} \lisymb \laite{\square{}}{\vesub{2}}{\ve_3} \lisymb
      \vcs}{\venv}{\vfuncs} \\

      \qquad \hfill \myif  \lnot \bval(\vesub{1}) \land \alen(\vcs)  < k \\

      \cfgimp{\laseq{\vesub{1}}{\vesub{2}} \lisymb \vcs}{\venv}{\vfuncs} 

      \rewrite

      \cfgimp{\vesub{1} \lisymb \vesub{2} \lisymb \vcs}{\venv}{\vfuncs} 

      \qquad \hfill \myif  \alen(\vcs)  < k \\

      \cfgimp{\lacall{\vf(\ve)} \lisymb \vcs}{\venv}{\vfuncs}

      \rewrite

      \cfgimp{\lacall{\vf} \lisymb \lacall{\square{}(\ve)} \lisymb
      \vcs}{\venv}{\vfuncs}

      \qquad \hfill \myif  \alen(\vcs)  < k \\

      \cfgimp{\lacall{\lambda \vx.\vfunbody(\ve)} \lisymb \vcs}{\venv}{\vfuncs}

      \rewrite

      \cfgimp{\ve \lisymb \lacall{\lambda \vx.\vfunbody(\square{})} \lisymb \vcs}{\venv}{\vfuncs}
      
      \\ \qquad \hfill \myif  \lnot \bval(\ve) \land
      \alen(\vcs)  < k \\
  \end{array}\)
  \end{center}
  \caption{\label{fig:imp2semantics}The small-step operational
    semantics of \IMP{2}. The language \IMP{2} has a stack bounded to
    \(k + 1\) elements, where \(k\) is a parameter of the
    semantics. Only the rules different from the rules in \IMP{1} are
    presented. The constraint \(\alen(\vcs) < k\) ensures that the
    stack is of length at most \(k + 1\) in the rhs.}
  \hrule
\end{figure}

For our motivating example, we show the equivalence of the symbolic
program configurations
\[\cfgimp{[\lacall{f(\vN)}]}{\venv}{\funcsct}\mbox{ and
  }\cfgimp{[\lacall{F(\vN, 0, 0)}]}{\venv}{\funcsct}\mbox{ for
    \(\fsh{\vN \geq 0}\),}\] where \funcsct is the map containing the
function bodies corresponding to the function identifiers \fsh{f} and
\fsh{F}.  We use \(\fsh{F(\vN, 0, 0)}\) as an abbreviation for the
syntactic construct \(\fsh{F(\vN)(0)(0)}\) of sort \(\sfuncall\). Note
that \(\vN\) is a variable of sort \sint and \(\venv\) is a variable
of sort \senv (map from program identifiers to integers). 
The fact that both \(\vN\) and \(\venv\) occur in the two symbolic
configurations models that we want the two programs to take the same
input \(\vN\) and to start in the same environment \(\venv\). In the
initial configuration, the helper arguments of \(\fsh{F}\) are fixed
to \(\fsh{0}\).

Sometimes two programs perform the same computation, but record the
results in slightly different places. For example, an imperative
program might store its result in a global variable \pvar{result},
while a functional program would simply reduce to its final value. We
still want to consider these programs equivalent. Therefore, we
parameterize our definition for equivalence by a set of \emph{base
  cases}, which define the pairs of terminal ground configurations
that are known to be equivalent. We denote by
\(\GB\) (for base) the set of pairs of ground terminal program configurations
that are known to be equivalent. In our motivating example, \(\GB = \{
(\cfgimp{[\msh{i}]}{\menv}{\mfuncs},
\cfgimp{[\msh{i'}]}{\menv'}{\mfuncs'}) \mid i = i' \land i,i' \in
\mathbb{Z}
\}\) (both configurations have been reduced to the same integer \(i =
i'\), the environments are allowed to be different:
\(\textit{env}\) on the lhs and
\(\textit{env}'\) on the rhs, and the function maps are also allowed to
be different). See Example 4 in Appendix~\ref{app:examples} for a more
complex example.

We propose two definitions for the notion of functional equivalence of
programs, based on two-way simulations.
In the following, \(P\) denotes a symbolic configuration of sort
\(\CfgL\), \(Q\) denotes a symbolic configuration of sort \(\CfgR\) and
\(\phi\) denotes a first-order constraint.
In the following definitions, by a complete path
\(\rho(P) \gtran{\R_L}^* \GP'\), we mean that no further reduction
step is possible for \(\GP'\).
\begin{definition}[Full Simulation]\label{def:fullsimulation}
  We say that a symbolic program configuration \emph{\(P\) is fully
    simulated by a symbolic program configuration \(Q\) under constraint
    \(\phi\) with a set of base cases \(\GB\)} (denoted by
  \(\GB \models P \shprec Q\myif\phi\)) if, for
  any valuation \(\rho\) such that \(\rho(\phi) = \True\) and for any
  complete path \(\rho(P) \gtran{\R_L}^* \GP'\), there exists a complete
  path \(\rho(Q) \gtran{\R_R}^* \GQ'\) such that \((\GP', \GQ') \in
  \GB\);
\end{definition}
This intuitively states that for any terminating run of the left hand
side on some input, there is a terminating run of the right hand side
on the same input, such that the results are part of \(\GB\)
(e.g., the results are equal).
The notion of \emph{full simulation} is inspired from the usual notion
of \emph{full equivalence} in the relational program verification
literature. It can be seen as a lopsided version of full
equivalence. Full simulation is a transitive relation (assuming the
base cases are defined consistently).
\begin{definition}[Partial Simulation]
\label{def:psim}
  We say that a symbolic program configuration \emph{\(P\) is partially
    simulated by a symbolic program configuration \(Q\) under constraint
    \(\phi\) with a set of base cases \(\GB\)} (denoted by
  \(\GB \models P \shpreceq Q\myif \phi\)) if, for any
  valuation \(\rho\) such that \(\rho(\phi) = \True\) and for any
  complete path \(\rho(P) \gtran{\R_L}^* \GP'\), 
  one of the following holds: 
  \begin{itemize*}
  \item[\(\bullet\)]
  there exists a complete path \(\rho(Q)
    \gtran{\R_R}^* \GQ'\) s. t. \((\GP', \GQ') \in \GB\);
  \item[\(\bullet\)] there exists an infinite path \(\rho(Q) \gtran{\R_R} \ldots\).
  \end{itemize*}
\end{definition}
Intuitively, a terminating run of the left hand side on some input is
considered simulated by either (1) a terminating run of the right hand
side on the same input with the same output or (2) by an infinite run
of the right hand side on the same input.
The notion of \emph{partial simulation} is inspired from the usual
notion of \emph{partial equivalence} in the relation program
verification literature. It can be seen as a lopsided version of
partial equivalence. Partial simulation is not a transitive relation
(even for consistently defined sets of base cases).
Note that \({\shprec}\subseteq{\shpreceq}\) (for a fixed \GB), which
justifies the notation.
\begin{definition}[Full Equivalence]
  Two symbolic program configurations \(P\) and \(Q\) are fully equivalent
  under constraint \(\phi\) with a set of base cases \(\GB\),
  written \(\GB \models P \shsim Q\myif \phi,\) if
  \(\GB \models P \shprec Q\myif \phi\) and
  \(\GB^{-1} \models Q \shprec P\myif \phi\).
\end{definition}
Two-way full simulation gives the usual notion of full equivalence
(for determinate programs).
\begin{definition}[Partial Equivalence]
  Two symbolic program configurations \(P\) and \(Q\) are partially
  equivalent under constraint \(\phi\) with a set of base cases
  \(\GB\), written \(\GB \models P \shsimeq Q\myif\phi,\) if
  \(\GB \models P \shpreceq Q\myif\phi\) and
  \(\GB^{-1} \models Q \shpreceq P\myif\phi\).
\end{definition}
Two-way partial simulation is the usual notion of partial equivalence
(for determinate programs). Partial equivalence is not an equivalence
relation (hence the name \emph{partial}). The notation is justified by
the fact that \({\shsim}\subseteq{\shsimeq}\) (for a fixed \(\GB\)).
\section{Proof Systems}
\label{sec:proofsystem}
In this section, we give proof systems for partial and full
simulation. We work with simulation formulae of the form
\(P \shprec Q\myif \phi\) for full simulation and of the
form \(P \shpreceq Q\myif \phi\) for partial simulation,
where \(P\) is a symbolic configuration of sort \(\CfgL\), \(Q\) is a
symbolic configuration of sort \(\CfgR\) and \(\phi\) is a first-order
logical constraint.
We first present the proof system for full simulation. 
For a set \(R\) of simulation formulae
\(P \shprec Q\myif \phi\), its denotation is
\[\begin{array}{l}\llbracket R \rrbracket = \{ (\rho(P), \rho(Q)) \mid
   (P \shprec Q\myif\phi) \in R \mbox{ and } \rho \models \phi \}\end{array}\] i.e., 
   the pairs of instances of \(P\) and \(Q\) that satisfy \(\phi\).

We fix a set \(B\) of simulation formulae that under-approximates the
set \(\GB\) of base cases: \(\llbracket B \rrbracket \subseteq \GB\). We
also consider a set \(G\) (for \emph{goals}) consisting of simulation
formulae to be proven. The set \(G\) usually includes the actual goal,
but also a set of intermediate helper goals that are needed for the
proof that we call \emph{circularities}.
\begin{example}\label{ex:running}
  For our motivating example, we use the language \IMP{1} and the
  following set of base cases:
  \(B = \{ \cfgimp{[\vi]}{\venv}{\vfuncs} \shprec
  \cfgimp{[\vi']}{\venv'}{\vfuncs'} \myif \fsh{\vi=\vi'}\}\). 
  Thus, the set of
  base cases includes only pairs of identical terminal configurations
  where the stack of expressions contains the same integer. The set of goals includes the actual goal to be proven and
  two helper circularities:
  \[G = \left\{\begin{array}{l}

                 \cfgimp{[\lacall{f(\vN)}]}{\venv}{\funcsct}
                 \shprec 

                 \cfgimp{[\lacall{F(\vN, 0, 0)}]}{\venv}{\funcsct} 
                 \myif \mbox{\(\fsh{0 \leq \vN},\)} \\

                 \cfgimp{\lacall{f(\vI-1)} \lisymb \mathit{reduce}(\vI, \vN)}{\venv}{\funcsct}
                 \shprec 
                 
                 \cfgimp{[\lacall{F(\vN, 0, 0)}]}{\venv}{\funcsct}
                 \\
                 \qquad \qquad \myif \mbox{\(\fsh{0 \leq \vI
                 \leq \vN},\)} \\

                 \cfgimp{\vS \lisymb \mathit{reduce}(\vI,
                 \vN)}{\venv}{\funcsct} \shprec 

                 \cfgimp{[\lacall{F(\vN, \vI, \vS)}]}{\venv}{\funcsct} 
                 \myif \mbox{\(\fsh{1 \leq \vI \leq \vN},\)}
\end{array}
\right\}\]
\noindent where 
\(\funcsct = \fsh{\{\ \pvar{f} \mapsto \lambda \pvar{n}.
  \laite{\pvar{n} = 0}{0}{\plus{\pvar{n}}{\lacall{\pvar{f}(\pvar{n} -
      1)}}}}, \ \fsh{\pvar{F} \mapsto \lambda \pvar{n} . \lambda
  \pvar{i} . \lambda \pvar{a} .}\allowbreak \fsh{\laite{\pvar{i} \leq
    \pvar{n}}{\lacall{F(n,\plus{i}{1}, \plus{a}{i})}}{\pvar{a}}}\}\)
is the function map,
\( \pvar{n}, \pvar{i}, \pvar{a}, \pvar{f}, \pvar{F} \) are constants
of sort \sid (program identifiers), and where
\( \vN, \vI, \vS : \sint \) are variables.
\end{example}
The proof system for full simulation, presented in
Figure~\ref{fig:fullsim}, manipulates sequents of the form
\(G, B \vdash^g P \shprec Q\myif \phi\), where
\(g \in \{ 0, 1 \}\), \(G\) is the set of goals and \(B\) is the set of base
cases. The superscript \(g\) to the turnstile is a boolean flag
(representing a \emph{guard}) that denotes whether circularities are
enabled or not, as formalized in the proof rules.

The \textsc{Axiom} rule states that any \(P\) is simulated by any \(Q\)
under the constraint \(\bot\) (false). The \textsc{Base} rule handles
the case when the right hand side \(Q\) can take a number of steps into
\(Q'\) so that the base cases are reached (the pair (\(P\), \(Q'\)) is part
of the base cases). The constraint \(\textit{sub}((P, Q), R)\) expresses
that \(P \shprec Q\) is an instance of \(R\). The \textsc{Circ} rule handles
the case when \(Q\) reaches in a number of steps a configuration \(Q'\)
such that the pair \(P\), \(Q'\) is subsumed by some goal in \(G\). In order
to ensure soundness, this rule can only be applied when the
superscript for the turnstile, \(g\), is \(1\). The superscript becomes
\(1\) only in rule \textsc{Step}, which intuitively takes a step in the
left-hand side. Therefore, when the rule \textsc{Circ} is actually
used, it means that there was progress on the lhs (by a
\emph{previous} \textsc{Step} on the lhs). This ensures soundness.

Finally, rule \textsc{Step} can be used to take a symbolic step in the
left-hand side. Recall that \(\Delta\) computes the symbolic
successors of a configuration. Note that all possible symbolic steps
from \(P\) are taken. This corresponds to the fact that in our notion
of simulation, \emph{every} run of \(P\) must have a corresponding run
in \(Q\). The constraint
\(\fsh{\lnot \msh{\phi^1} \land \msh{\ldots} \land \lnot
  \msh{\phi^n}}\) describes the instances of \(P\) where no step can
be taken, and therefore these configurations must be solved by some
other rule, hence the second line in the hypotheses of \textsc{Step}.
\begin{figure}[t]
\hrule
\begin{center}
  Notation: \(\textit{sub}\Big((P, Q), R\Big)\eqbydef\bigvee_{P' \shprec Q'\myif\phi' \in R} \fsh{\exists} \var(P', Q', \phi')\fsh{.(
    \msh{\phi'} \land \msh{P}{=}\msh{P'} \land \msh{Q}{=}\msh{Q'})}\)
\\
\(\inferrule*[left=\textsc{Axiom}]{\ }{G, B \vdash^g P \shprec Q\myif\bot}\) \\
  \(\inferrule*[left=\textsc{Base},right=\mbox{\(\models \phi_B
    \fsh{\limplies}\hspace{-0.75cm} \fsh{\bigvee\limits_{\msh{Q'}\myif\msh{\phi'} \msh{\in}
      \msh{\Delta_{\R_R}^{\leq k}(Q)}}} \hspace{-0.75cm}\phi' \fsh{\limplies} \textit{sub}\Big((P, Q'),
    B\Big)\)}]{G, B \vdash^g P \shprec Q\myif\phi \mathrel{\fsh{\land}}
  \fsh{\lnot}\phi_B}{G, B \vdash^g P \shprec Q\myif\phi}\)
 \(\inferrule*[left=\textsc{Circ},right=\mbox{\(\models \phi_G
    \fsh{\limplies}\hspace{-0.75cm} \fsh{\bigvee\limits_{\msh{Q'}\myif\msh{\phi'} \msh{\in}
      \msh{\Delta_{\R_R}^{\leq k}(Q)}}}\hspace{-0.75cm} \phi' \fsh{\limplies} \textit{sub}\Big((P,
    Q'), G\Big)\)}]{G, B \vdash^1 P \shprec Q\myif\phi
  \mathrel{\fsh{\land}} \fsh{\lnot}\phi_G}{G, B \vdash^1 P \shprec Q\myif\phi}\)
  \(\inferrule*[left=\textsc{Step},right=\mbox{\(\Delta_{\R_L}(P\myif\phi){=}\{ P^i\myif\phi^i{\mid}1 \leq i \leq n\}\)}]{\mbox{\(\begin{array}{l}G, B \vdash^1 P^i
     \shprec Q\myif\phi^i\mbox{ (for all \(1 \leq i \leq
       n\))}\\ G, B \vdash^g P \shprec Q\myif \fsh{\msh{\phi}
     \land \lnot \msh{\phi^1} \land \msh{\ldots} \land \lnot \msh{\phi^n}}
\end{array}\)}}{G, B \vdash^g P
  \shprec Q\myif\phi}\)
\end{center}
\caption{\label{fig:fullsim}The proof system for full simulation.}
\hrule
\end{figure}

We write \(G, B \vdash^g G'\) if
\(G, B \vdash^g P \shprec Q\myif \phi\) for any formula
\(P \shprec Q\myif \phi \in G'\) (that is, all formulae in
\(G'\) are provable from \(G, B\)). We are now ready to give the main
soundness theorem of our result for full simulation.
\begin{restatable}[Soundness for full simulation]{theorem}{soundenessfullsim}
\label{thm:sound-fsim}
If \(G, B \vdash^0 G\) and \(\llbracket B \rrbracket \subseteq \GB\), then
for any simulation formula \(P \shprec Q\myif \phi \in
G,\) we have that \(\GB \models P \shprec Q\myif \phi.\)
\end{restatable}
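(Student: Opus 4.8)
The plan is to prove the statement by well-founded induction on the length of the terminating left-hand-side execution, exploiting the guard $g$ to record that circularities (rule \textsc{Circ}) are discharged only after genuine progress has been made on the left. First I would unfold $\GB \models P \shprec Q \myif \phi$: it suffices to show that for every $(P \shprec Q \myif \phi) \in G$, every valuation $\rho$ with $\rho(\phi) = \True$, and every complete (hence finite) path $\rho(P) \gtran{\R_L}^* \GP'$, there is a complete path $\rho(Q) \gtran{\R_R}^* \GQ'$ with $(\GP', \GQ') \in \GB$. Writing $n$ for the length of the left path, I proceed by strong induction on $n$; the induction hypothesis is exactly that this conclusion already holds for \emph{every} goal in $G$ and every left path of length $< n$ --- the ``circularities are valid for shorter runs'' assumption that makes \textsc{Circ} sound.

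For the inductive step I fix a goal, a valuation $\rho$, a complete left path of length $n$, and the derivation of $G, B \vdash^0 P \shprec Q \myif \phi$ supplied by the hypothesis $G, B \vdash^0 G$, and prove, by structural induction on derivations, the generalized claim: \emph{for every derivable sequent $G, B \vdash^g P \shprec Q \myif \phi$, every $\rho' \models \phi$, and every complete path $\rho'(P) \gtran{\R_L}^* \GP''$ of length $\ell$ with $\ell \leq n$ when $g = 0$ and $\ell \leq n-1$ when $g = 1$, there is a complete path $\rho'(Q) \gtran{\R_R}^* \GQ''$ with $(\GP'', \GQ'') \in \GB$}. Instantiating at the root ($g = 0$, $\ell = n$) gives the inductive step. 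The cases on the last rule are:
\begin{itemize}
\item \textsc{Axiom}: vacuous, since no valuation satisfies $\bot$.
\item \textsc{Step}: if $\ell = 0$ then $\rho'(P)$ is terminal, so, using Theorem~\ref{th:der2} together with the soundness and completeness of unification modulo axiomatized symbols and the top-most property of $\R_L$, $\rho'$ satisfies none of the derivative constraints, hence satisfies the constraint of the ``stuck'' premise, and we conclude by the structural IH on that premise (same $g$, same path). If $\ell \geq 1$, the first transition $\rho'(P) \gtran{\R_L} \gamma_1$ is, again by Theorem~\ref{th:der2}, witnessed by some derivative $P^i \myif \phi^i$ and a valuation $\rho_1$ that agrees with $\rho'$ on all relevant variables, satisfies $\phi^i$, and maps $P^i$ to $\gamma_1$; apply the structural IH to the premise $G, B \vdash^1 P^i \shprec Q \myif \phi^i$ with $\rho_1$ and the tail path $\gamma_1 \gtran{\R_L}^* \GP''$ of length $\ell - 1 \leq n - 1$ (permitted, since the guard there is $1$) and transport the resulting path back using $\rho_1(Q) = \rho'(Q)$.
\item \textsc{Base}: if $\rho' \not\models \phi_B$, conclude by the structural IH on the premise; otherwise the side condition, with Theorem~\ref{th:der2}, yields a configuration reachable from $\rho'(Q)$ in at most $k$ steps that forms, together with $\rho'(P)$, an instance of $B$; since $\llbracket B \rrbracket \subseteq \GB$ and every pair in $\GB$ consists of terminal configurations, $\rho'(P)$ is terminal (so $\GP'' = \rho'(P)$) and so is that configuration, which gives the required complete path.
\item \textsc{Circ}: here $g = 1$, so $\ell \leq n - 1 < n$ by hypothesis. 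If $\rho' \not\models \phi_G$, conclude by the structural IH on the premise; otherwise the side condition, with Theorem~\ref{th:der2}, yields a configuration $Q''$ reachable from $\rho'(Q)$ in at most $k$ steps, a goal $(P_G \shprec Q_G \myif \phi_G') \in G$, and a valuation $\rho''$ with $\rho'' \models \phi_G'$, $\rho''(P_G) = \rho'(P)$ and $\rho''(Q_G) = Q''$. Now invoke the \emph{outer} induction hypothesis at length $\ell < n$ for this goal, with $\rho''$ and the complete path $\rho''(P_G) = \rho'(P) \gtran{\R_L}^* \GP''$ of length $\ell$: it delivers a complete path $Q'' = \rho''(Q_G) \gtran{\R_R}^* \GQ''$ with $(\GP'', \GQ'') \in \GB$, and prefixing the $\leq k$-step path $\rho'(Q) \gtran{\R_R}^* Q''$ produces the desired complete path from $\rho'(Q)$.
\end{itemize}

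I expect the main obstacle to be in the symbolic-execution bookkeeping rather than in the structure of the induction. Theorem~\ref{th:der2} is a set-level identity between the semantics of the derivatives of a constrained term and the set of its one-step successors, whereas the \textsc{Step} and \textsc{Base}/\textsc{Circ} cases need the per-state refinement that a \emph{specific} ground configuration $\rho'(P)$ takes a step to $\gamma_1$ exactly when some derivative constraint is satisfied by an extension of $\rho'$ that carries that derivative onto $\gamma_1$, together with the analogous $\leq k$-step statement; making these precise requires careful use of the soundness and completeness of unification modulo axiomatized symbols and of the top-most property of $\R_L$ and $\R_R$. The only genuinely conceptual ingredient is the guard invariant --- the guard is $1$ only strictly below an application of \textsc{Step}, which has consumed at least one left transition --- and it is exactly this that licenses the appeal to the outer induction hypothesis in the \textsc{Circ} case; overall the proof is the constrained-rewriting analogue of the standard soundness argument for circular reachability-logic / coinductive proof systems, in the style of~\cite{CiobacaIJCAR2018}.
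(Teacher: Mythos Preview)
Your proposal is correct, but it follows a genuinely different route from the paper's own proof. The paper proceeds \emph{coinductively}: it gives a greatest-fixed-point characterization of full simulation via an operator $\mathit{fsim}$ (Lemma~\ref{lem:fsim-coind}), introduces an auxiliary monotone operator $f_{\GC}$, shows $f_{\GB} = \mathit{fsim}$, and then proves (Lemma~\ref{lem:rulestabf}) that the union $S$ of the denotations of \emph{all} sequents occurring in the proof trees of $G,B\vdash^0 G$ satisfies $S \subseteq f_{\GB\cup\GG}(S)$; a short computation collapsing the $\GG$-component into a $\mathrm{Reach}^{+}$ clause then yields $S \subseteq \mathit{fsim}(S)$, i.e.\ $S$ is a post-fixed point, hence contained in $\nu S.\,\mathit{fsim}(S) = S^{\prec}$.

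Your argument instead performs well-founded (strong) induction on the length of the terminating left-hand run, with an inner structural induction on derivations and the guard invariant ``$g=1$ only below a \textsc{Step}'' to ensure that \textsc{Circ} is invoked only on strictly shorter runs. This is the classical ``inductive unfolding'' of the coinductive argument, in the spirit of circular reachability-logic soundness proofs. It works here precisely because full simulation quantifies only over \emph{complete} (finite) left paths, so a length is available to induct on. The paper's coinductive packaging is more uniform---notice how little changes between its proofs for full and partial simulation, where in the partial case infinite right-hand runs must be accommodated and a pure length induction would no longer suffice---whereas your approach is arguably more elementary and makes the role of the guard $g$ (one genuine left step must separate two uses of a circularity) completely explicit. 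Both routes rest on the same implicit assumption that pairs in $\GB$ consist of terminal configurations (used in your \textsc{Base} case and in the paper's Lemma~\ref{lem:rulestabf}), and on the same per-state refinement of Theorem~\ref{th:der2} that you correctly flag as the main bookkeeping obstacle.
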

The theorem requires that \emph{all formulae} in \(G\) be proved in
order to trust \emph{any} of them. If a formula in \(G\) is not provable
then even if the others are provable, they cannot be trusted to hold
semantically. The starting superscript of the turnstile must be \(0\) in
order not to allow the \textsc{Circ} rule to fire immediately
(otherwise the \textsc{Circ} rule could be used to prove a circularity
by itself, leading to unsoundness).

%
\begin{example}
  \label{ex:running1}
  Continuing Example~\ref{ex:running}, we have \(G, B \vdash^0 G\). To
  save space, we sometimes abbreviate
  \(\cfgimp{[\lacall{f(\vN)}]}{\venv}{\funcsct}\) by \(\fsh{f}\) and
  \(\cfgimp{[\lacall{F(\vN, 0, 0)}]}{\venv}{\funcsct}\) by
  \(\fsh{F}\). As \(G, B \vdash^0 G\), \(\fsh{f \shprec F}\) for
  \(\fsh{\vN \geq 0}\). As explained in the introduction, our proof
  system cannot establish the other direction,
  \(G^{-1}, B^{-1} \vdash^0 G^{-1}\), intuitively because it cannot
  prove that the termination of \(\fsh{F}\) (one phase) implies the
  termination of \(\fsh{f}\) (two phases). The full simulation
  relation would require an operationally-based termination
  argument~\cite{BuruianaEPTCS2019} for the second phase of
  \(\fsh{f}\), which we leave for future work.
\end{example}
\emph{Proving partial simulation}. We adopt the same notation as above
for a set of base cases \(B\) and a set of goals \(G\), but replacing
\(\shprec\) by \(\shpreceq\). The proof system for partial simulation,
presented in Figure~\ref{fig:partsim}, manipulates sequents of the
form \(G, B \vdash^g P \shpreceq Q\myif \phi\), where \(G\) is the set
of goals (with \(\shpreceq\) instead of \(\shprec\)), \(B\) is the set
of base cases (with \(\shpreceq\) instead of \(\shprec\)), and
\(g \in \{ 0, 1 \}\). The meaning of \(g\) is different: \(g = 1\)
enables the \textsc{Circ} rule to not make progress on the rhs.

The proof rules for partial simulation are similar to those for full
simulation, and therefore we only underline the main differences. The
rules \textsc{Axiom} and \textsc{Base} are identical to those in the
proof system for full simulation (except for \(\shpreceq\) instead of
\(\shprec\)). The rule \textsc{Circ} is also very similar, but progress is
required on the rhs unless \(g = 1\). Therefore, for partial simulation,
it is allowed to discharge a goal directly by \textsc{Circ}, without
taking any step in the left hand side, but with progress on the rhs
(note the superscript \(\geq 1-g\)). This corresponds to the case where
a terminal configuration is partially simulated by an infinite loop.
Finally, another difference is that once rule \textsc{Step} is applied
to make progress on the lhs, circularities can be applied even if
there no progress on the rhs. This corresponds potentially to the case
where the left-hand side loops forever and the right hand side
finishes.
\begin{figure}[t]
  \hrule
  \begin{center}
  Notation: \(\textit{sub}\Big((P, Q), R\Big)\eqbydef\bigvee_{P' \shpreceq Q'\myif\phi' \in R} \fsh{\exists} \var(P', Q', \phi')\fsh{.(\msh{\phi'} \land \msh{P}{=}\msh{P'} \land \msh{Q}{=}\msh{Q'})}\) \\

  \(\inferrule*[left=\textsc{Axiom}]{\ }{G, B \vdash^g P \shpreceq Q\myif\bot}\)   \\
  \(\inferrule*[left=\textsc{Base},right=\mbox{\(\models \phi_B
    \fsh{\limplies}\hspace{-0.75cm} \fsh{\bigvee\limits_{\msh{Q'}\myif\msh{\phi'} \msh{\in}
      \msh{\Delta_{\R_R}^{\leq k}(Q)}}}\hspace{-0.75cm} \phi' \fsh{\limplies}\textit{sub}\Big((P, Q'),
    B\Big)\)}]{G, B \vdash^g P \shpreceq Q\myif\phi \mathrel{\fsh{\land}}
  \fsh{\lnot}\phi_B}{G, B \vdash^g P \shpreceq Q\myif\phi}\) 
  \(\inferrule*[left=\textsc{Circ},right=\mbox{\(\models \phi_G \fsh{\limplies}\hspace{-0.75cm}
    \fsh{\bigvee\limits_{\msh{Q'}\myif\msh{\phi'} \msh{\in} \msh{\Delta_{\R_R}^{\geq 1-g,
        \leq k}(Q)}}}\hspace{-0.75cm} \phi' \fsh{\limplies} \textit{sub}\Big((P, Q'),
    G\Big)\)}]{G, B \vdash^g P \shpreceq Q\myif\phi \mathrel{\fsh{\land}}
  \fsh{\lnot}\phi_G}{G, B \vdash^g P \shpreceq Q\myif\phi}\) 
  \(\inferrule*[left=\textsc{Step},right=\mbox{\(\Delta_{\R_L}(P\myif\phi) {=} \{ P^i\myif\phi^i{\mid}1 \leq i \leq n\}\)}]{\mbox{\(\begin{array}{l}G, B \vdash^1 P^i
  \shpreceq Q\myif\phi^i\mbox{ (for all \(1 \leq i \leq
       n\))}\\ G, B \vdash^g P \shpreceq Q\myif \fsh{\msh{\phi}
     \land \lnot \msh{\phi^1} \land \ldots \land \lnot \msh{\phi^n}}
\end{array}\)}}{G, B \vdash^g P
  \shpreceq Q\myif\phi}\)
\caption{\label{fig:partsim}The proof system for partial simulation.}
\end{center}
\hrule
\end{figure}
As for full simulation, we write \(G, B \vdash G'\) if
\(G, B \vdash P \shpreceq Q\myif \phi\) for any formula
\( P \shpreceq Q\myif \phi \in G'\) (that is, all formulae in
\(G'\) are provable from \(G, B\)). We now give the main soundness theorem
of our result for partial simulation.
\begin{restatable}[Soundness for partial simulation]{theorem}{soundenesspartsim}
\label{thm:sound-psim}
%
  If \(G, B \vdash^0 G\) and \(\llbracket B \rrbracket \subseteq \GB\), then
  for any formula \(P \shpreceq Q\myif \phi \in G,\) we
  have that \(\GB \models P \shpreceq Q\myif \phi.\)
%
\end{restatable}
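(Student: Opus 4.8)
The plan is to reduce the semantic claim to a ground statement and then prove it by ``playing'' the proof trees against a fixed terminating run of the left-hand side; this is the partial-simulation counterpart of the soundness argument for full simulation (Theorem~\ref{thm:sound-fsim}), the difference being that a non-terminating play will be shown to build an infinite right-hand path, which is exactly what the second clause of Definition~\ref{def:psim} permits. Concretely, fix a valuation $\rho$ with $\rho(\phi)=\True$, a goal $P \shpreceq Q \myif \phi \in G$, and a complete left path $\pi : \rho(P)=\gamma^0 \gtran{\R_L}\cdots\gtran{\R_L}\gamma^m$; I must exhibit either a complete right path $\rho(Q)\gtran{\R_R}^*\GQ'$ with $(\gamma^m,\GQ')\in\GB$, or an infinite right path from $\rho(Q)$. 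Since $G,B\vdash^0 G$, every formula of $G$ has a finite derivation, and since \textsc{Axiom} nodes (constraint $\bot$) are the only leaves, each branch of such a derivation must be exited through \textsc{Base} or \textsc{Circ} before a leaf is reached, as long as the accumulated constraint stays satisfiable.

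I would then describe a play whose state is a node of the derivation of some goal $P' \shpreceq Q' \myif \phi'$, an index $0\le i\le m$ into $\pi$, a right configuration $\gamma_R$ with $\rho(Q)\gtran{\R_R}^*\gamma_R$, together with a valuation $\rho'$ satisfying $\rho'(\phi')=\True$, $\rho'(P')=\gamma^i$ and $\rho'(Q')=\gamma_R$; it starts at the root of the derivation of $P \shpreceq Q \myif \phi$ with $i=0$, $\gamma_R=\rho(Q)$, $\rho'=\rho$. At a \textsc{Base} node, if $\rho'$ refutes $\phi_B$ we step to the premise; otherwise the side condition yields $Q''\myif\phi''\in\Delta^{\le k}_{\R_R}(Q')$ with $\rho'(\phi'')=\True$ and $(\rho'(P'),\rho'(Q''))\in\llbracket B\rrbracket\subseteq\GB$, so $\rho'(P')=\gamma^i$ is $\R_L$-irreducible (hence $i=m$) and $\rho(Q)\gtran{\R_R}^*\gamma_R\gtran{\R_R}^{\le k}\rho'(Q'')$ is a complete right path with $(\gamma^m,\rho'(Q''))\in\GB$, and the play halts successfully. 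At a \textsc{Circ} node, if $\rho'$ refutes $\phi_G$ we step to the premise; otherwise we obtain $Q''\myif\phi''\in\Delta^{\ge 1-g,\le k}_{\R_R}(Q')$ with $\rho'(\phi'')=\True$ and a goal of which $(\rho'(P'),\rho'(Q''))$ is an instance, and we jump to the root of that goal's derivation, keeping $i$, setting $\gamma_R:=\rho'(Q'')$ and taking the witnessing valuation as the new $\rho'$. At a \textsc{Step} node with $i<m$, the concrete step $\gamma^i\gtran{\R_L}\gamma^{i+1}$ is mirrored, by the correctness of derivatives (Theorem~\ref{th:der2}, applied with the right configuration $Q'$ carried unchanged through $\Delta_{\R_L}$), by one of the premises $G,B\vdash^1 P^j\shpreceq Q'\myif\phi^j$ and a valuation agreeing with $\rho'$ on $Q'$; we move into that premise and set $i:=i+1$. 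At a \textsc{Step} node with $i=m$, $\gamma^m$ is $\R_L$-irreducible, so $\rho'$ refutes every $\phi^j$ and we move to the last premise. \textsc{Axiom} nodes are never reached, since $\rho'(\bot)=\True$ is impossible.

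It remains to show that the play either halts in \textsc{Base} or runs forever; by the above, each outcome supplies the required witness. Group the transitions into phases, a phase being the segment between two consecutive \textsc{Circ}-jumps (or from the start up to the first jump). Within a phase the state moves through strictly smaller subtrees of one finite derivation, so each phase is finite and ends by halting in \textsc{Base} or by a \textsc{Circ}-jump. The index $i$ is non-decreasing and bounded by $m$; the guard turns from $0$ to $1$ during a phase only through a \textsc{Step} at $i<m$, which also increments $i$, so a phase ending by a \textsc{Circ}-jump at guard $1$ has strictly increased $i$, whereas a \textsc{Circ}-jump at guard $0$ consumes at least one right-hand step (the lower bound $1-g$ equals $1$ when $g=0$). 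Hence at most $m$ phases can end at guard $1$, so if the play does not halt it performs infinitely many guard-$0$ \textsc{Circ}-jumps, each appending at least one step to the right path, producing an infinite $\R_R$-path from $\rho(Q)$. As $\rho$, the goal and $\pi$ were arbitrary, $\GB\models P\shpreceq Q\myif\phi$ for every goal in $G$.

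The main obstacle I expect is the precise interface with the symbolic machinery rather than the phase/measure argument: formulating Theorem~\ref{th:der2} for configuration \emph{pairs} (so that $\Delta_{\R_L}$ leaves the right component intact), threading and re-synchronizing the valuation $\rho'$ at every \textsc{Step} and \textsc{Circ}-jump -- in particular checking that a substitution returned by unification modulo axiomatized symbols does not perturb the right configuration or the already-fixed data -- and verifying that $\rho'$ refutes all left-rule constraints exactly when $\gamma^i$ is irreducible, which is where the soundness direction of derivatives is used. A minor point to treat explicitly is the degenerate run $m=0$: there $i$ never moves, so halting can occur only through \textsc{Base}, and any non-halting play loops on the right -- precisely the behaviour that partial simulation sanctions, and the feature that makes this proof differ from the one for full simulation.
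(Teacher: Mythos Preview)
Your argument is correct and takes a genuinely different route from the paper's. The paper proceeds coinductively: it first gives a greatest-fixed-point characterisation of partial simulation (Lemma~\ref{lem:psim-coind}), introduces an auxiliary operator $p_{\GC}(S)$ with $p_{\GB}=\mathit{psim}$ (Lemma~\ref{lem:peqpsim}), and shows that the union $S$ of all denotations $\llbracket\varphi\rrbracket$ occurring in the proof trees satisfies $S\subseteq p_{\GB\cup\GG}(S)$ (Lemma~\ref{lem:rulestabp}); a short algebraic calculation using the partition of $G$ into $G_0,G_1$ then collapses this to $S\subseteq p_{\GB}(S)=\mathit{psim}(S)$, whence $S\subseteq\nu Y.\,\mathit{psim}(Y)$.

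You instead fix a complete left run in advance and play it against the proof trees directly, with a two-layer measure (subtree height within a phase; the left index $i$ bounding the number of guard-$1$ phases) replacing the coinductive step. This is more elementary---no fixed-point machinery---and makes the operational content of the guard discipline explicit: a \textsc{Circ} at guard $0$ always buys a right-hand step, a \textsc{Circ} at guard $1$ has been paid for by an earlier left-hand step, and along a terminating left path there are only finitely many of the latter. The paper's route is more modular (the $p_{\GC}$ lemmas parallel the $f_{\GC}$ lemmas for full simulation and are reusable) and, by working with denotations rather than a specific valuation, avoids precisely the bookkeeping you flag as the main obstacle: threading $\rho'$ through \textsc{Step} and \textsc{Circ} so that the right component and previously fixed data are not perturbed by the substitutions coming out of unification modulo axiomatized symbols. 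That bookkeeping is real but routine under the usual freshness conventions for rule variables; the paper's proof skates over the analogous point in the \textsc{Step} case of Lemma~\ref{lem:rulestabf}/\ref{lem:rulestabp}, so your level of detail is comparable.
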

Note that still all goals in \(G\) must be proven to trust any of them.
\begin{example}
  \label{ex:running2}
  Continuing Example~\ref{ex:running}, we have shown that
  \(G, B \vdash^0 G\) and that \(G^{-1}, B^{-1} \vdash^0 G^{-1}\) in
  the sense of partial simulation (with \(\shprec\) changed to
  \(\shpreceq\) in \(G\) and \(B\)) and therefore
  \(\fsh{f} \shpreceq \fsh{F}\) and \(\fsh{F} \shpreceq \fsh{f}\) for
  \(\fsh{\vN \geq 0}\) (i.e., \( \fsh{f} \) and \( \fsh{F} \) are
  partially equivalent).
\end{example}
\emph{Implementation.} We have a prototype implementation of the two
proof systems in the \RMT{} tool
(\url{http://profs.info.uaic.ro/~stefan.ciobaca/rmteq}). \RMT{}
implements order-sorted logically constrained term rewriting, that is
rewriting of mixed terms, which contains both free symbols and symbols
in some theory solvable by an SMT solver. \RMT{} relies on \ZTREI, and
therefore any combination of theories solvable by \ZTREI can be used
(e.g., bitvectors, arrays, LIA, etc.). \RMT{} already supported
reachability proofs~\cite{CiobacaIJCAR2018}. In this paper, we extend
it with axiomatized symbols and we implement the algorithms for checking
partial and full simulation in Section~\ref{sec:proofsystem}. To prove
equivalence, we check two-way simulation. The implementation for
checking partial/full simulation works in two phases:
\begin{itemize}[wide, labelwidth=!, labelindent=0pt]
\item[\(\bullet\)] the \textit{left} phase implements the
  \textsc{Step} rule: to prove \(L \shprec R\myif \phi\)
  (\(L \shpreceq R\myif \phi\)), \RMT{} finds all symbolic successors
  of \(L\) and, for each successor \(L'\myif \phi'\), attempts to
  prove \(L' \shprec R\myif \phi \fsh{\land} \phi'\)
  (\(L' \shpreceq R\myif \phi \fsh{\land} \phi'\)). If \(L'\) unifies with
  the lhs of either a base case or a circularity goal (e.g., there is
  a chance to reach the base case or a circularity), the algorithm
  moves to the \textit{right} phase.
\item[\(\bullet\)] the \textit{right} phase implements the \textsc{Base}
  and \textsc{Circ} proof rules: to prove \(L \shprec R\mbox{
    \(\textit{if}\) }\phi\) (\(L \shpreceq R\myif \phi\)),
  the symbolic successors of \(R\) are searched, in order to find a
  constraint \(\phi^1\) s.t. \(L \shprec R'\myif \fsh{\msh{\phi} \land
  \msh{\phi^1}}\) (\(L \shpreceq R'\myif \phi \fsh{\land} \phi^1\)) are
  either base cases or circularities for some symbolic successor \(R'\)
  of \(R\myif \phi\). If \(\phi^1\) is valid, the proof is
  done. Otherwise, the \textit{left} phase resumes, limiting the
  search space to \(\fsh{\lnot} \phi^1\).
\end{itemize}
In both phases, we fix a user-settable bound on the number of symbolic
steps (by default, \(100\)). If the bound is reached, then the current
branch of the proof fails. Unification modulo
axiomatized symbols is not fully implemented. Instead, we use a couple
of heuristics to handle it the cases of interest:
\begin{enumerate*}
\item in order to compute symbolic successors of a term (possibly
  containing axiomatized symbols), we first unroll the definition of
  the axiomatized symbols and
\item in order to check whether the current goal is an instance of the
  base cases or of a circularity (rules \textsc{Base} and
  \textsc{Circ}), we perform a bounded search with the equations of
  the symbol.
\end{enumerate*}

\emph{Examples}. We use the \IMP{} language defined in
Section~\ref{sec:semantics}. We have worked out the following
equivalence examples using our method:
\begin{enumerate*}
\item[\(\bullet\)] We show that \(\fsh{f(\vN) \shprec F(\vN, 0, 0)}\),
  \(\fsh{f(\vN) \shpreceq F(\vN, 0, 0)}\) and that
  \(\fsh{F(\vN, 0, 0) \shpreceq f(\vN)}\) under the constraint
  \(\fsh{\vN \geq 0}\) for our running example in the language
  \IMP{1}. Our method is not sufficiently powerful to show
  \(\fsh{F(\vN, 0, 0) \shprec f(N)\myif\vN \geq 0}\).
\item[\(\bullet\)] In \IMP{2}, none of
  \(\fsh{f(\vN) \shprec F(\vN, 0, 0)}\),
  \(\fsh{f(\vN) \shpreceq F(\vN, 0, 0)}\),
  \(\fsh{F(\vN, 0, 0) \shpreceq f(\vN)}\),
  \(\fsh{F(\vN, 0, 0) \shpreceq f(\vN)}\) hold under the constraint
  \(\fsh{\vN \geq 0}\), and therefore the proofs of these goals
  (correctly) fail.
\item[\(\bullet\)] Our method can prove programs in two different
  languages as well. We show that \(\fsh{f(\vN)}\), interpreted in \IMP{1}, is
  partially equivalent to \(\fsh{F(\vN, 0, 0)}\), interpreted in \IMP{2}, when
  \(\fsh{\vN \geq 0}\). For one direction (\(\fsh{f \shprec F}\)), we establish full
  simulation; for the other direction, just partial simulation.
\item[\(\bullet\)] We show that a {\tt while} loop is partially equivalent
  to a recursive function, when both compute the sum of the first \(\vN\)
  naturals.
\item[\(\bullet\)] We prove full equivalence for an instance of loop
  unswitching, showing that our method can handle programs that are
  structurally unrelated.
\end{enumerate*}
More details on these examples can be found in
Appendix~\ref{app:examples}.
\section{Proving Equivalence of Program Schemas}
\label{sec:programschemas}
The method that we have introduced in Section~\ref{sec:proofsystem}
can be used to show full/partial simulations between \emph{symbolic
  program configurations}. Symbolic program configurations can contain
variables. In our running example, we prove two-way simulations
between the symbolic configurations
\(\cfgimp{[\lacall{f(\vN)}]}{\venv}{\funcsct}\) and
\(\cfgimp{[\lacall{F(\vN, 0, 0)}]}{\venv}{\funcsct}\), under the
constraint \(\fsh{\vN \geq 0}\). The variables \(\venv\) and \(\vN\)
occur in both configurations, denoting the fact that their value is
shared in both configurations, and \(\funcsct\) is the map defined in
Example~\ref{ex:running}.

It is also possible to use variables of sort \(\sexp\) (variables
standing for program expressions or statements) in a symbolic
configuration. For example, we might want to prove that
\(\cfgimp{[\ve]}{\venv}{\funcsct}\) and
\(\cfgimp{[\ve]}{\venv}{\funcsct}\) (both configurations are the same)
are equivalent. The variable \(\ve\) of sort \(\sexp\) denotes a
program expression. We call such variables, which stand for parts of
the program (such as \(\ve\)), \emph{structural variables}, in
constrast to other variables (such as \(\vN\) or \(\venv\)).

Our proof method fails when trying to prove equivalence between
symbolic configuration containing structural variables\footnote{It may
  seem surprizing that our system cannot prove an expression
  equivalent to itself, but our method is \emph{operational}, not
  axiomatic -- there is no rule for reflexivity.}. The issue is in the
\textsc{Step} rule, which tries to compute the possible symbolic
successors of \(\cfgimp{[\ve]}{\venv}{\funcsct}\). When computing all
successors, a case analysis on the structural variable \(\ve\) is
performed: \(\ve\) might be an addition
\(\plus{\vesub{1}}{\vesub{2}}\), an {\tt ite} statement
\(\laite{\vesub{0}}{\vesub{1}}{\vesub{2}}\), etc. This case analysis occurs
ad-infinitum, and no real progress is made in the proof.

We call such symbolic program configurations, with variables denoting
program parts, \emph{program configuration schemas} or simply
\emph{configuration schemas}. We also use \emph{program schema} when
we refer just to the part of the configuration holding the
program. The naming is because such a program schema denotes several
programs, depending on how the structural variables are instantiated.

Therefore, even if our proof system can technically handle program
configuration schemas, it cannot be used directly to show interesting
properties of such schemas. In particular, it is not possible to
\emph{directly use} our proof system to show the correctness of
program optimizations such as the constant propagation optimization
presented in Figure~\ref{fig:exschemas}.
\begin{figure}[t]
  \hrule
  {
    \centering
    \(\begin{array}{ccc}
        \textit{Initial Program} & \phantom{space} & \textit{Optimized Program} \\
        \fsh{\laseq{\laseq{\laassign{\cxsub{1}}{\vesub{1}\;}}{\;\vstsub{1}\;}}{\laassign{\;\cxsub{2}}{\vesub{1}}}}
                                 & &
                                     \fsh{\laseq{\laseq{\laassign{\cxsub{1}}{\vesub{1}\;}}{\;\vstsub{1}\;}}{\laassign{\;\cxsub{2}}{\cxsub{1}}}}
      \end{array}\)
      \caption{\label{fig:exschemas}Two program schemas, where:
        \(\cxsub{1}, \cxsub{2} : \sid\) are two identifiers,
        \(\vesub{1} : \sexp\) is an expression, and
        \(\vstsub{1} : \sexp\) is a statement. The optimization is
        valid assuming that \(\vstsub{1}\) does not change
        \(\cxsub{1}\) or the program variables in \(\vesub{1}\).  }
   }
   \hrule
 \end{figure}
However, our proof system \emph{can directly prove} instances of this
optimization (i.e., for particular instantiations of
\(\vesub{1}, \vstsub{1}\)).

We show how our method can be easily extended to prove simulations
between program schemas. This extension crucially relies on the fact
that our proof method is parametric in the operational semantics.  We
consider two configuration schemas. In order to prove their
equivalence, we transform the structural variables into fresh
constants. We give semantics to the new constants by \emph{adding new
  rules} to the operational semantics of the language. These rules
capture the read-set of expressions and the read-set and write-set of
statements.

We explain this encoding on the example in
Figure~\ref{fig:exschemas}. We create fresh constants \(\cesub{1}\)
and \(\cstsub{1}\) of sort \(\sexp\). The constants are not considered
values of the language:
\(\bval(\cesub{1}) \rewrite \fsh{\bot}, \bval(\cstsub{1}) \rewrite
\fsh{\bot}\). In the two programs schemas above, we replace the
structural variables \(\vesub{1}\) and \(\vstsub{1}\) by the new
constants \(\cesub{1}\) and \(\cstsub{1}\), respectively. We say that
the constant \(\cesub{1}\) (\(\cstsub{1}\)) abstracts the variable
\(\vesub{1}\) (\(\vstsub{1}\)). After this abstraction, we obtain the following
configurations that we would like to prove equivalent:
\[\cfgimp{[\laseq{\laassign{\cxsub{1}}{\cesub{1}}}{\laseq{\cstsub{1}}{\laassign{\cxsub{2}}{\cesub{1}}}}]}{\venv}{\funcsct}\mbox{
    and }
  \cfgimp{\laseq{\laassign{\cxsub{1}}{\cesub{1}}}{\laseq{\cstsub{1}}{\laassign{\cxsub{2}}{\cxsub{1}}}}}{\venv}{\funcsct}.\]

A new problem when proving such abstracted configurations is that they
block whenever \(\cesub{1}\) or \(\cstsub{1}\) reach the top of the
evaluation stack, as there is no operational rule that describes their
semantics. Thus, a configuration like
\(\cfgimp{\cesub{1} \lisymb
  \msh{\ldots}}{\msh{\ldots}}{\msh{\ldots}}\) is stuck. We take
advantage of the fact that our proof systems for showing simulation
and equivalence are \emph{parametric} in the operational semantics and
we add semantic rules that specify the behaviors of \(\cesub{1}\) and
\(\cstsub{1}\).

The new rules formalize in a rigorous manner the notions of
\emph{read-set} and \emph{write-set}. The read-set of an expression is
the set of program variables that the expression is allowed to depend
on. In addition to a \emph{read-set}, a statement, such as \(\cstsub{1}\), also
has a \emph{write-set}, that is a set of program variables that the
statement is allowed to write to. Assume that the read-set of \(\cesub{1}\) is
\(\{ \cysub{2}, \cxsub{2} \}\). We formalize this read-set by adding to the
semantics the rule
\[\cfgimp{\cesub{1} \lisymb \vcs}{\venv}{\vfuncs} \rewrite
  \cfgimp{\ciesub{1}(\blookup(\venv, \cysub{2}), \blookup(\venv,
    \cxsub{2})) \lisymb \vcs}{\venv}{\vfuncs},\] where
\(\ciesub{1} : \sint \times \sint \to \sint\) is a fresh builtin
uninterpreted function symbol. This rule models the fact that
\(\cesub{1}\) terminates, is deterministic, and evaluates to a value
\(\ciesub{1}(\blookup(\venv, \cysub{2}), \blookup(\venv, \cxsub{2}))\)
that only depends on the program variables \(\cysub{2}\) and
\(\cxsub{2}\). In general, for an expression having read-set
\(x_1, \ldots, x_n\), we add a fresh \(n\)-ary builtin symbol and we
use it in a rule such as the one above.

\begin{figure}[t]
  \hrule
  {
    \centering
    \(
    \begin{array}{ll}
      \mbox{New constructors:} & \cesub{1} : \sexp, \cstsub{1} : \sexp \\
      \hline
      \mbox{New builtins:} & \ciesub{1} : \sint^2 \to \sint, \cistsub{1} : \sint^4 \to \sint \\
      \hline
      \mbox{New rules:} & \cfgimp{
                          \cesub{1} \lisymb \vcs}{\venv}{\vfuncs} \rewrite
                          \cfgimp{\ciesub{1}(\venv[\cysub{2}], \venv[\cxsub{2}]) \lisymb \vcs}{\venv}{\vfuncs}\\
      \cfgimp{\cstsub{1} \lisymb \vcs}{\venv}{\vfuncs} \rewrite &
                                                             \cfgimp{s}{\venv[\cysub{1} \mapsto \cistsub{1}(\venv[\cysub{1}],
                                                             \venv[\cysub{2}], \venv[\cxsub{1}], \venv[\cxsub{2}])]}{\vfuncs}
    \end{array}
    \)
  }
  \caption{\label{fig:exabstraction}Abstraction process required to
    prove the optimization described in
    Figure~\ref{fig:exschemas}. The notation \(\menv[x]\) and
    \(\menv[x \mapsto w]\) are short for \(\blookup(\menv, x)\) and
    \(\bupdate(\menv, x, w)\), respectively.}
  \hrule
\end{figure}
To model write-sets of statements, we add rules that modify in the
environment only the program variables that are written to. For
example, assume that the read-set of \(\cstsub{1}\) is \(\cysub{1}, \cysub{2}, \cxsub{1}, \cxsub{2}\) and
that the write-set of \(\cstsub{1}\) is \(\cysub{1}\). We add the following rule:
\(\cfgimp{\cstsub{1} \lisymb \vcs}{\venv}{\vfuncs} \rewrite
\cfgimp{s}{\bupdate(\venv, \cysub{1}, \msh{nv})}{\vfuncs},\)
where
\(\msh{nv} = \cistsub{1}(\venv[\cysub{1}],
\venv[\cysub{2}],\allowbreak \venv[\cxsub{1}], \venv[\cxsub{2}]) \),
the symbol \(\cistsub{1} : \sint^4 \to \sint\) is a fresh builtin and
\(\menv[x]\) is short for \(\blookup(\menv, x)\). This rule models
that \(\cstsub{1}\) terminates, is deterministic, writes to
\(\cysub{1}\) only and the value computed and written to \(\cysub{1}\)
only depends on \(\cysub{1}, \cysub{2}, \cxsub{1}, \cxsub{2}\). We
summarize the abstraction process in Figure~\ref{fig:exabstraction}.

With the encoding in Figure~\ref{fig:exabstraction}, our proof system
and \RMT{} can show the equivalence of the two programs in
Figure~\ref{fig:exschemas} and therefore the correctness of this
optimization in the context of \IMP{}. \emph{Comparison with CORK and
  PEC.}  We show that our approach also generalizes to a number of
compiler optimizations previously discussed in the context of the
\texttt{CORK}~\cite{LopesMonteiro} and PEC~\cite{KunduPLDI2009}
optimization correctness verification tools. The comparison is shown
Figure~\ref{fig:pec-cork-rmt}. We use two annotations for special
cases:
\begin{enumerate*}
\item the mark \(\ocircle\) denotes that, even if the two programs
  schemas are functionally equivalent, there is no simulation of one
  by the other -- instead, we prove two different simulations, one for
  each of the two output variables;
\item \(\square\) denotes that we have used an upper bound on one of the
  program variables -- the bound is not a weakness of \RMT{} or of our
  proof method, but of the fact that the SMT solver that we use, \ZTREI,
  does not handle non-linear arithmetic well enough.
\end{enumerate*}
Using another SMT solver for non-linear integer arithmetic, like CVC4,
could potentially allow us to prove these examples, marked with
\(\square\), in the unbounded case as well. In comparison with existing
approaches, we can prove the correctness of an optimization (Loop
flattening) that the two previous
approaches~\cite{KunduPLDI2009,LopesMonteiro} cannot. However, our
tool is not automated when loops are involved and must be guided by
\emph{helper circularities}, as explained in Example~\ref{ex:running}
on Page~\pageref{ex:running}. We describe the proof of each
optimization in turn in Appendix~\ref{app:optimization}. In this
appendix, we also develop a methodology that help us find such helper
circularities, giving evidence that our equivalence/simulation checker
could be automated for optimization correctness verification purposes.
\begin{figure}[t]
\hrule
  \centering
  \begin{tabular}{|l | c | r | c r |}
    \hline
    Optimization & \texttt{PEC} & \texttt{CORK} &  & \RMT{} \\
    \hline
    Code hoisting & \(\checkmark\)  & 0.32s&  & 0.41s  \\
    Constant propagation & \(\checkmark\) & 0.33s & & 0.31s  \\
    Copy propagation & \(\checkmark\) & 0.33s & & 0.26s  \\
    If-conversion & \(\checkmark\) & 0.34s & & 0.48s  \\
    Partial redundancy & \(\checkmark\) & 0.34s & & 0.75s  \\
     elimination & & & & \\
    Loop invariant & \(\checkmark\) & 3.48s & & 3.79s  \\
     code motion & & & & \\
    Loop peeling & \(\checkmark\) & 3.26s & & 0.97s  \\
    Loop unrolling & \(\checkmark\) & 12.17s & & 7.09s  \\
    Loop unswitching & \(\checkmark\) & 8.19s & & 4.71s  \\
    \hline
  \end{tabular}
  \begin{tabular}{|l | c | r | c r |}
    \hline
    Optimization & \texttt{PEC} & \texttt{CORK} &  & \RMT{} \\
    \hline
    Software pipelining & \(\checkmark\) & 8.02s & & 3.56s  \\
    Loop fission & \(\checkmark_p\) & 23.45s & \(\ocircle\) & 10.40s  \\
    Loop fusion & \(\checkmark_p\) & 23.34s & \(\ocircle\) & 9.67s  \\
    Loop interchange & \(\checkmark_p\) & 29.30s & \(\square\) & 108.63s  \\
    Loop reversal & \(\checkmark_p\) & 8.41s & & 2.70s  \\
    Loop skewing & \(\checkmark_p\) & 8.50s & & 7.68s  \\
    Loop flattening & \(\times\) & \(\times\) & \(\square\) & 8.14s  \\
    Loop strength & \(\times\) & 5.63s & & 5.26s  \\
    reduction & & & &  \\
    \hline
    Loop tiling 01 & \multirow{2}{*}{\(\times\)}  & \multirow{2}{*}{10.94s} & & 25.41s  \\
    Loop tiling 02 &  &  & \(\square\) & 21.58s  \\
    \hline
  \end{tabular}
  \caption{\label{fig:pec-cork-rmt} Optimizations on which we compare
    the tools PEC, CORK, and \RMT{} (our prototype). Columns one-three are
    due to Lopes and Monteiro~\cite{LopesMonteiro} (\(\checkmark_p\) means
    PEC needs a heuristic called \emph{permute}). The third column is
    based on our own benchmark. The annotations \(\ocircle\) and
    \(\square\) are described in the main text.}
  \hrule
\end{figure}
%
\section{Related Work}
\label{sec:related}
\(\bullet\)~In the series of
papers~\cite{PittsLICS1996,PittsICALP1998,PittsMSCS2000,PittsAS2000},
Pitts was one of the first to propose the use of operationally-based
notions of contextual equivalence
. The differences to our work is that we only consider functional
equivalence, and not contextual equivalence, but in our approach the
operational semantics can be varied. We also explicitly allow for
nondeterminism and there is no need to define explicitly a logical
relation for the entire language: instead, the user defines a
simulation relation that depends only on the particular programs to be
shown equivalent. In~\cite{PittsAS2000}, the \emph{frame stack}
approach for small-step semantics that we use is introduced. The same
style of using a frame stack was popularized by the K
framework~\cite{StefanescuOOPSLA2016} in the rewriting based semantics
of several large languages~\cite{EllisonPOPL2012,BogdanasPOPL2015}. We
make extensive use of this frame-stack technique, which enables
simpler equivalence proofs
.
\(\bullet\)~Logical relations and bisimulation can be used to prove
contextual equivalence.  Bisimulation techniques such
as~\cite{SangiorgiTOPLAS2011} are usually language dependent and
proofs of congruence and other properties need to be established
independently. Language features such as higher-order functions are
handled by enhancing the bisimulation with an \emph{environment}
holding the current knowledge of the observer. Instead, by reducing
the scope to functional equivalence instead of contextual equivalence,
we allow to use simpler simulation relations that depend only on the
particular pair of programs to be proven equivalent (there is no need
to prove congruence in our case). Logical relations techniques such
as~\cite{DreyerLMCS2011} can be used to prove contextual equivalences
for various languages. Logical relations can also used in mechanized
frameworks for separation logic such as Iris~\cite{JungJFP2018} in
order to handle contextual equivalence in the presence of state
(see~\cite{TimanyPOPL2017}) or continuations
(see~\cite{TimanyPHD2018}). However, logical relations may be
difficult to adapt to different languages and may require additional
indexing to account for language features. Mechanized proofs may be
quite long and tedious. Game semantics can be used to reason
denotationally (see, e.g.,~\cite{MurawskiFTPL2016}) about contextual
equivalence, but it does not enjoy good algorithmic
properties~\cite{MurawskiFMSD2018}; however, proof search can be
implemented for languages with higher-order functions and effects, as
shown by Jaber~\cite{JaberPOPL2020}.
\(\bullet\)~Several relational Hoare logics were proposed
(e.g.,~\cite{BentonPOPL2004,BentonPPDP2018,AguirreICFP2017}) for
reasoning about pairs of programs. Typically, such logics are
developed for a particular language and can usually be used to prove
equivalence of syntactically similar programs. For example, they
usually assume that two matching while loops will both take the same
number of steps. In contract, the logic that we propose can also be
used to reason about structurally dissimilar programs
. Relational
higher-order logic, introduced in~\cite{AguirreICFP2017}, allows both
synchronous and asynchronous reasoning about pairs of programs in a
higher-order lambda calculus.
It can be used to show functional equivalence, but also for other
properties such as relational cost analysis. In contrast to RHOL, the
logic that we propose here is formally less expressive (RHOL is as
expressive as HOL). However, unlike RHOL, it is simpler to use and mechanize.
Relational separation logic~\cite{YangTCS2007} enhances relational
Hoare logic with the ability to reason about the
heap. In~\cite{BanerjeeFSTTCS2016}, the authors propose a relational
logic with a framing rule that enables a SMT-friendly encoding of the
heap, but also enhances the ability to reason about less structurally
related programs. Unlike these logics, we do not currently handle the
heap, but our proof system is much simpler, because most of the
complexity of reasoning about the language features goes to the LCTRS
encoding the language semantics. Also, in our case, it is much simpler
to experiment with variations of the language semantics, as explained
in Sections~\ref{sec:proofsystem} and~\ref{sec:programschemas}. A
concept close to relational Hoare logic is that of
product-program~\cite{BartheJLAMP2016}, which are programs that mimic
the behavior of two programs; they allow to reduce relational
reasoning to reasoning about a single program. In our work, there is
no need to construct such product programs. Such a product
construction is possible in a rewriting-based scenario as
well~\cite{CiobacaSYNASC2014,CiobacaWADT2014}. Compared to all
approaches above, the logic that we introduce in this paper has the
advantage that the underlying operational semantics of the language
can be easily changed. This makes it easy to experiment with various
settings. In our examples, we show how we go from a semantics with an
unbounded stack to a semantics with a bounded stack, but other
variations of interest could be using fixed-size integers (bitvectors)
instead of unbounded integers, enabling or disabling language features
such as exceptions, introspection, etc. in order to check how each
affects functional equivalence. In~\cite{KunduPLDI2009}, an
implementation of a parametrized equivalence prover is presented and
we compare against the tool in Section~\ref{sec:programschemas}. Grimm
et al.~\cite{GrimmCPP2018} propose a general method for relational
proofs based on encoding the state transformation as a monad in the {\tt F*}
proof assistant.  After encoding, relational proofs then require user
interaction, although significant parts are solved directly by an SMT
solver. Maillard et al.~\cite{MaillardPOPL2020} show how to generalize
this to arbitrary monadic effects.
\(\bullet\)~In~\cite{StrichmanFMSD2015RV}, Chaki et al. propose a new
definition of equivalence suitable for nondeterministic programs,
extending the usual definition of partial equivalence for
deterministic programs, and introduce sound proof rules for regression
verification of multithreaded programs. Our definition of equivalence,
defined as two-way simulation, is implied by the definition of partial
equivalence proposed here -- the difference is that our notion of
equivalence allows a terminating execution on some input to be
simulated by an infinite execution of the other program on the same
input. However, we also additionally propose a definition for full
equivalence suitable for a non-deterministic setting; this definition
is more involved than the usual definition of full equivalence as
partial equivalence plus mutual termination as outlined
in~\cite{ElenbogenHVC2012}, since a non-deterministic program could
have both terminating and non-terminating runs starting with the same
input. Felsing et al.~\cite{FelsingASE2014} propose an automated
method for regression verification. Lahiri et al.~\cite{LahiriCAV2012}
present a method based on translation into the intermediate
verification language Boogie for checking \emph{semantical
  differences} between programs. A technique for automated discovery
of simulation relations is proposed
in~\cite{FedyukovichLPAR2015}. Their technique is automated using
\ZTREI as a solver. Techniques based on an efficient encodings of the
relational property as a set of constrained Horn clause are described
in~\cite{DeAngelisSAS2016}. Another technique for automatic proving of
equivalences for procedural programs that is also based on LCTRSs is
proposed in~\cite{FuhsTOCL2017}. Unlike our approach,
in~\cite{FuhsTOCL2017} the two C-like programs are translated by a
tool called C2LCTRS into LCTRSs. An advantage of their approach is
automation by using a constrained version of the well-known technique
of rewriting induction. However, the C2LCTRS tool contains an implicit
semantics of the C-like language and therefore, unlike in our work,
variations of the semantics that change various language features
(like stack size, integer semantics, etc.) require changing the
tool. Moreover, in~\cite{FuhsTOCL2017}, the two programs are also
assumed to be deterministic. Even if we changed the C2LCTRS program to
explicitly model a stack, constrained rewriting induction would fail
in general to find an equivalence proof between two programs such as
example in Section~\ref{sec:intro}, as the simulation relation
requires \emph{axiomatized symbols} to state.
\(\bullet\)~Early ideas on adding logical constraints to deduction rules in
general date back to the 1990s, in work like~\cite{Kirchner1990}
and~\cite{DarlingtonCTRS1990}. Logically constrained term rewriting
systems, which combine term rewriting and SMT constraints are
introduced in~\cite{KopFROCOS2013}. LCTRSs generalize previous
formalisms like TRSs enriched with numbers and Presburger constraints
(e.g., as in~\cite{FalkeRTA2008}) by allowing arbitrary theories that
can be handled by SMT solvers. Rewriting modulo SMT is introduced
in~\cite{RochaJLAMP2017} for analyzing open
systems. In~\cite{BaeFACS2017}, the authors introduce guarded terms,
which generalize logically constrained terms. A narrowing calculus for
constrained rewriting is introduced
in~\cite{AguirrePPDP2017}. In~\cite{NagaoPPDP2016}, an approach to
proving \emph{inequalities} based on constrained rewriting induction
is proposed. Finally, logically constrained rewriting enjoys
completion procedures, as shown in~\cite{WinklerFSCD2018}.
\(\bullet\)~\emph{Our own related work.} We first considered
semantics-based equivalence in~\cite{LucanuFAOC2015} for symbolic
programs in the context of the K framework
in~\cite{StefanescuOOPSLA2016}, but for a notion of behavioural
equivalence of deterministic programs. In~\cite{CiobacaFAOC2016}, we
give a semantics-based proof system for full equivalence. Our present
work improves on this by adding axiomatized symbols, using different
notions of equivalence that handle non-determinism and are more
modular (we now also test for one-way simulation) and providing a
working implementation based on LCTRSs with several novel
examples. Most of the infrastructure required for LCTRSs is based on
our earlier work on proving reachability in
LCTRSs~\cite{CiobacaIJCAR2018} and solving unification modulo
builtins~\cite{CiobacaWOLLIC2018}. However, the present work includes
\emph{axiomatized symbols}, which pose new technical
challenges.



\section{Conclusion and Future Work}
\label{sec:conclusion}

We have introduced and implemented in \RMT{} a new method for proving
simulation and equivalence in languages whose semantics are defined by
LCTRSs in frame stack style. Our method allows to easily check program
equivalence in various settings, such as unbounded versus bounded
stack, arbitrary precision versus fixed size integers, etc. To express
simulation relations, we enrich standard LCTRSs with \emph{axiomatized
  symbols}, which raise new research questions such as
\emph{unification modulo axiomatized symbols}. We also generalize existing
definitions for full/partial equivalence. Our approach allows for
nondeterminism in the definitions and in the proofs, but we currently
do not exploit this, as we only have simple examples. We also show an
advantage of an operational semantics-based approach: we can easily
model read-sets and write-set and prove simulation/equivalence of
program schemas.

As future work, we would like to apply our methods to more challenging
concurrent programs and to realistic language definitions, available
as part of the K framework~\cite{EllisonPOPL2012,BogdanasPOPL2015}. We
would also like to integrate an external termination checker to handle
full equivalence better. Other directions for future work include
relational cost analysis, as in~\cite{RadicekPOPL2017}, possibly by
simply using an appropriate set \(\GB\) of base cases, and generalizing
to contextual equivalence, possibly by extending the techniques in
Section~\ref{sec:programschemas}.

\newpage

\bibliographystyle{splncs04}
\bibliography{refsacm}

\clearpage
\appendix

\clearpage
\section{Example of Program Execution}
\label{app:examplesemantics}

Here is an example of how evaluation proceeds for the program
\[ \fsh{\laassign{\cx}{\lacall{\pvar{f}(0)}}} \]
in an initial environment mapping the program identifier \(\cx\) to
\( \fsh{12}\), and a function map \( \mfuncs \), mapping the program
identifier \( \fsh{f} \) to
\[\fsh{(\lambda {\cy}.\laite{\cy > 5}{\plus{\cy}{\cx}}{0})}\] according to the semantics of \IMP{}
introduced in Section~\ref{sec:semantics}: \begin{enumerate}

\item \cfgimp{[\laassign{\cx}{\lacall{f(10)}]}
  }{\cx \mapsto 12
}{\mfuncs} \rewrite

\item \cfgimp{\lacall{f(10)} \lisymb \laassign{\cx}{\square{}} \lisymb
  []}{\cx \mapsto 12}{\mfuncs} \rewrite

\item \cfgimp{\lacall{f} \lisymb \lacall{\square{} (10)} \lisymb
  \laassign{x}{\square{}} \lisymb []
}{x \mapsto 12}{\mfuncs} \rewrite

\item \cfgimp{\blookup(f, {\mfuncs}) \lisymb
  \lacall{\square{} (10)} \lisymb \laassign{\cx}{\square{}} \lisymb []}{\cx
  \mapsto 12}{\mfuncs} \fsh{=}

\item \cfgimp{(\lambda \cy.\laite{\cy > 5}{\plus{\cy}{\cx}}{0}) \lisymb \lacall{\square{}
    (10)} \lisymb \laassign{\cx}{\square{}} \lisymb []}{\cx \mapsto 12}{\mfuncs} \rewrite

\item \cfgimp{\lacall{(\lambda \cy. \laite{\cy > 5}{\plus{\cy}{\cx}}{0})} (10) \lisymb
  \laassign{\cx}{\square{}} \lisymb []}{\cx \mapsto 12}{\mfuncs} \rewrite

\item \cfgimp{\asubst(\cy, 10, ( \laite{\cy > 5}{\plus{\cy}{\cx}}{0})) \lisymb \laassign{\cx}{\square{}} \lisymb []}{\cx \mapsto 12}{\mfuncs} \rewrite

\item \cfgimp{\laite{10 > 5}{\plus{10}{\cx}}{0} \lisymb \laassign{\cx}{\square{}} \lisymb []}{\cx
  \mapsto 12}{\mfuncs} \rewrite

\item \cfgimp{10 > 5 \lisymb \laite{\square{}}{\plus{10}{\cx}}{0} \lisymb \laassign{\cx}{\square{}} \lisymb []}{\cx \mapsto  12}{\mfuncs} \rewrite
    
\item \cfgimp{\top \lisymb \laite{\square{}}{\plus{10}{\cx}}{0} \lisymb \laassign{\cx}{\square{}} \lisymb []}{\cx \mapsto 12}{\mfuncs} \rewrite

\item \cfgimp{\laite{\top}{\plus{10}{\cx}}{0} \lisymb \laassign{\cx}{\square{}} \lisymb []}{\cx
  \mapsto 12}{\mfuncs} \rewrite

\item \cfgimp{\plus{10}{\cx} \lisymb \laassign{\cx}{\square{}} \lisymb []}{\cx \mapsto 12}{\mfuncs} \rewrite

\item \cfgimp{\cx \lisymb \plus{10}{\square{}} \lisymb \laassign{\cx}{\square{}} \lisymb []}{\cx \mapsto 12}{\mfuncs} \rewrite

\item \cfgimp{\blookup(\cx,{\cx \mapsto 12})
  \lisymb \plus{10}{\square{}} \lisymb \laassign{\cx}{\square{}} \lisymb []}{\cx \mapsto 12}{\mfuncs} =

\item \cfgimp{12 \lisymb \plus{10}{\square{}} \lisymb \plus{\cx}{\square{}} \lisymb []}{\cx \mapsto 12}{\mfuncs} \rewrite

\item \cfgimp{\plus{10}{12} \lisymb \laassign{\cx}{\square{}} \lisymb []}{\cx \mapsto 12}{\mfuncs} \rewrite

\item \cfgimp{22 \lisymb \laassign{\cx}{\square{}} \lisymb []}{\cx \mapsto
  12}{\mfuncs} \rewrite

\item \cfgimp{[\laassign{\cx}{22}]}{\cx \mapsto 12}{\mfuncs} \rewrite

\item \cfgimp{[]}{\cx \mapsto 22}{\mfuncs} \notrewrite.

\end{enumerate}


\clearpage
\section{Proofs}
\label{app:proofs}

This section includes the proof for the soundness theorems. The proof
principle used is a kind of parametric coinduction.  We start with two
lemmas that state coinductive characterizations for full simulation
and partial simulation, respectively.

\begin{lemma}
\label{lem:fsim-coind}
 \(\GB \models P \fsh{\prec} Q\myif \phi\) iff
\(\llbracket P \fsh{\prec} Q\myif \phi \rrbracket\subseteq\nu\,S\,.\,\mathit{fsim}(S)\), where
\begin{align*}
\mathit{fsim}(S)= \{(\GP,\GQ)\mid\,& \GP{\Downarrow}\fsh{\limplies} \exists\,\GQ'.\,\GQ\rewrite_{\R_L}^* \GQ'\fsh{\land}(\GP,\GQ')\in \GB\,\fsh{\land}\\
&\fsh{\lnot} \GP{\Downarrow}\fsh{\limplies}\forall\,\GP'.\,\GP\rewrite_{\R_L} \GP'\fsh{\limplies}\\
&\phantom{\fsh{\lnot} \GP{\Downarrow}\fsh{\limplies}\forall\,\GP'\,.}\exists\,\GQ'.\, \GQ\rewrite_{\R_R}^* \GQ'\fsh{\land}(\GP',\GQ')\in S\}
\end{align*}
and \(\GP{\Downarrow}\) means that \(\GP\) is a \({\rewrite_{\R_R}{}}\)\!\!-irreducible configuration, i.e., \(\forall \GP'\,.\,\GP \rewrite_{\R_L}^* \GP'\fsh{\limplies} \GP=\GP'\).
\end{lemma}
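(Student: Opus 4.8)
The plan is to reduce the lemma to a single identity between relations on pairs of ground configurations, and then to prove that identity by standard coinduction. I would introduce the \emph{pointwise} full-simulation relation
\[
\mathit{FSim} = \{(\GP,\GQ)\mid \text{for every complete path }\GP\rewrite_{\R_L}^*\GP'\text{ there is a complete path }\GQ\rewrite_{\R_R}^*\GQ'\text{ with }(\GP',\GQ')\in\GB\}.
\]
By Definition~\ref{def:fullsimulation} and the definition of $\llbracket\cdot\rrbracket$, $\GB\models P\shprec Q\myif\phi$ holds iff $(\rho(P),\rho(Q))\in\mathit{FSim}$ for every $\rho$ with $\rho(\phi)=\True$, i.e.\ iff $\llbracket P\shprec Q\myif\phi\rrbracket\subseteq\mathit{FSim}$; so it suffices to show $\mathit{FSim}=\nu\,S\,.\,\mathit{fsim}(S)$. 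Throughout I would use that $\mathit{fsim}$ is monotone (the variable $S$ occurs only positively), so the greatest fixed point exists and satisfies $\nu\,S\,.\,\mathit{fsim}(S)=\mathit{fsim}(\nu\,S\,.\,\mathit{fsim}(S))$ together with the coinduction principle ``$X\subseteq\mathit{fsim}(X)$ implies $X\subseteq\nu\,S\,.\,\mathit{fsim}(S)$''; and that the pairs in $\GB$ are pairs of $\rewrite$-irreducible configurations, so any path $\GQ\rewrite_{\R_R}^*\GQ'$ with $(\GP,\GQ')\in\GB$ is automatically complete.

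For $\mathit{FSim}\subseteq\nu\,S\,.\,\mathit{fsim}(S)$ I would check that $\mathit{FSim}$ is a post-fixed point, $\mathit{FSim}\subseteq\mathit{fsim}(\mathit{FSim})$. Fix $(\GP,\GQ)\in\mathit{FSim}$. If $\GP$ is terminal, the empty path $\GP\rewrite_{\R_L}^*\GP$ is a complete path from $\GP$, so membership in $\mathit{FSim}$ gives exactly the first clause of $\mathit{fsim}$. If $\GP$ reduces, then for any $\GP'$ with $\GP\rewrite_{\R_L}\GP'$ I claim $(\GP',\GQ)\in\mathit{FSim}$; granting this, the witness $\GQ':=\GQ$, reached by the empty path on the right, satisfies the second clause. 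The claim holds because any complete path $\GP'\rewrite_{\R_L}^*\GP''$ extends on the left to a complete path $\GP\rewrite_{\R_L}^*\GP''$, which by $(\GP,\GQ)\in\mathit{FSim}$ is matched by a complete path $\GQ\rewrite_{\R_R}^*\GQ''$ with $(\GP'',\GQ'')\in\GB$; as $\GP''$ ranges over the terminals reachable from $\GP'$ (a subset of those reachable from $\GP$), this is precisely the condition defining $(\GP',\GQ)\in\mathit{FSim}$.

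For the converse $\nu\,S\,.\,\mathit{fsim}(S)\subseteq\mathit{FSim}$, writing $\nu$ for the greatest fixed point, I would take $(\GP,\GQ)\in\nu$ and a complete path $\GP=\GP_0\rewrite_{\R_L}\cdots\rewrite_{\R_L}\GP_n=\GP'$, and induct on $n$, unfolding $\nu=\mathit{fsim}(\nu)$ at each step. For $n=0$ the configuration $\GP=\GP'$ is terminal, so the first clause yields $\GQ\rewrite_{\R_R}^*\GQ'$ with $(\GP',\GQ')\in\GB$, a complete path as noted. For $n\ge 1$ the configuration $\GP$ is not terminal, so the second clause applied to $\GP_1$ supplies $\GQ_1$ with $\GQ\rewrite_{\R_R}^*\GQ_1$ and $(\GP_1,\GQ_1)\in\nu$; the induction hypothesis applied to $(\GP_1,\GQ_1)$ and the length-$(n-1)$ complete path $\GP_1\rewrite_{\R_L}^*\GP'$ yields a complete path $\GQ_1\rewrite_{\R_R}^*\GQ'$ with $(\GP',\GQ')\in\GB$, and concatenating the two right-hand segments finishes the case.

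The step I expect to be the main obstacle is the post-fixed-point check for $\mathit{FSim}$ --- concretely, recognising that a single left-hand step need not be matched by any right-hand step, because $\mathit{FSim}$ constrains only the terminating left-hand computations and reduction can only shrink the set of reachable terminal configurations; getting the quantifier alternation right (``for every complete left path there exists a matching right path'') is the delicate point, while the rest is routine bookkeeping. The companion coinductive characterisation used for partial simulation would be proved by the same two-inclusion argument, with an extra disjunct for infinite right-hand paths threaded through both the post-fixed-point check and the induction on path length.
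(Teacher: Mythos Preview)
Your proposal is correct and follows essentially the same approach as the paper: both introduce the pointwise relation you call $\mathit{FSim}$ (the paper writes it $S^{\prec}$), reduce the lemma to the identity $\mathit{FSim}=\nu\,S.\,\mathit{fsim}(S)$, prove one inclusion by checking that $\mathit{FSim}$ is a post-fixed point via the same case split on $\GP{\Downarrow}$, and prove the other by walking along a complete left-hand path step by step. The only cosmetic difference is that the paper phrases the converse as ``every post-fixed point of $\mathit{fsim}$ is contained in $S^{\prec}$'' rather than working directly with $\nu$, and you are more explicit than the paper about why paths ending in $\GB$ are automatically complete (namely that $\GB$ consists of pairs of terminal configurations, as assumed in Section~\ref{sec:equivalence}).
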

\begin{proof}
We first notice that we have \(\GB \models P \fsh{\prec} Q\myif \phi\) iff \(\llbracket P \fsh{\prec} Q\myif \phi \rrbracket\subseteq {S^\fsh{\prec}}\), where
\begin{align*}
{S^\fsh{\prec}}=\{(\GP,\GQ)\mid \forall\,\GP'.\,&\GP\rewrite_{\R_L}^* \GP'\fsh{\land} \GP'{\Downarrow}\fsh{\limplies}\\
&\exists\,\GQ'.\, \GQ\rewrite_{\R_R}^* \GQ'\fsh{\land}(\GP',\GQ')\in \GB\}
\end{align*}
\\%
\(\Rightarrow\). We show that \({S^\fsh{\prec}}\)
is post-fixed point for \(\mathit{fsim}\), i.e., \({S^\fsh{\prec}}\subseteq \mathit{fsim}({S^\fsh{\prec}})\). Let \((\GP,\GQ)\in {S^\fsh{\prec}}\). There are two cases:
\begin{enumerate}
\item \(\GP{\Downarrow}\).  Then \(\exists\,\GQ'\,.\,\GQ\rewrite_{\R_L}^* \GQ'\fsh{\land}(\GP,\GQ')\in \GB\) by the definition of \({S^\fsh{\prec}}\).
\item \(\fsh{\lnot} \GP{\Downarrow}\). Let \(\GP'\) be such that \(\GP\rewrite_{\R_L} \GP'\). We show that \((\GP',\GQ)\in {S^\fsh{\prec}}\). Let \(\GP''\) be such that  \(\GP'\rewrite_{\R_L}^* \GP''\fsh{\land} \GP''{\Downarrow}\).  Then \(\GP\rewrite_{\R_L}^* \GP''\fsh{\land} \GP''{\Downarrow}\), which implies \(\exists\,\GQ'\,.\,\GQ\rewrite_{\R_L}^* \GQ'\fsh{\land}(\GP'',\GQ')\in \GB\) (recall that \((\GP,\GQ)\in {S^\fsh{\prec}}\)). Hence \((\GP',\GQ)\in {S^\fsh{\prec}}\).
Since \(\GP'\) is arbitrary, it follows that 
\(\forall\,\GP'.\,\GP\rewrite_{\R_L} \GP'\fsh{\limplies}(\GP',\GQ)\in {S^\fsh{\prec}}\).
\end{enumerate}
From the above case analysis we may conclude \((\GP,\GQ)\in \mathit{fsim}({S^\fsh{\prec}})\).
\\
\(\Leftarrow\). We show that if \(S\) is a post-fixed point for \(\mathit{fsim}\) then \(S\subseteq{S^\fsh{\prec}}\).
 Let \((\GP,\GQ)\in S\) and assume that \(\GP\rewrite_{\R_L}^* \GP'\fsh{\land} \GP'{\Downarrow}\).
In order to show that \((\GP,\GQ)\in {S^\fsh{\prec}}\) we have to prove that \(\exists\,\GQ'\,.\,\GQ\rewrite_{\R_L}^* \GQ'\fsh{\land}(\GP',\GQ')\in \GB\).
We have two cases:
\begin{enumerate}
\item \(\GP=\GP'\). Then \(\exists\,\GQ'\,.\,\GQ\rewrite_{\R_L}^* \GQ'\fsh{\land}(\GP,\GQ')\in \GB\) by the first part of the definition of \(\mathit{fsim}\), which implies \((\GP,\GQ)\in{S^\fsh{\prec}}\).
\item \(\GP\not=\GP'\) (which implies \(\fsh{\lnot} \GP{\Downarrow}\)). 
Let \(\GP'_1\) be s.t.  \(\GP\rewrite_{\R_L} \GP'_1\rewrite_{\R_L}^* \GP'\). We obtain \((\GP'_1,\GQ'_1)\in S\) by the second part of the definition of \(\mathit{fsim}\). We repeat the same reasoning until we obtain \(\GP'_n=\GP'\) and \((\GP'_n,\GQ'_n)\in S\). Then \(\exists\,\GQ'\,.\,\GQ'_n\rewrite_{\R_L}^* \GQ'\fsh{\land}(\GP'_n,\GQ')\in \GB\) is proved in a similar way to the first case. Since \(\GP'_1\) is arbitrarily chosen, it follows that \[\forall\,\GP'.\,\GP\rewrite_{\R_L}^* \GP'\fsh{\land} \GP'{\Downarrow}\fsh{\limplies}\exists\,\GQ'.\, \GQ\rewrite_{\R_R}^* \GQ'\fsh{\land}(\GP',\GQ')\in \GB,\] which implies \((\GP,\GQ)\in{S^\fsh{\prec}}.\)
\end{enumerate}
\qed 
\end{proof}

\begin{corollary}
1. \({S^\fsh{\prec}}=\nu\,S\,.\,\mathit{fsim}(S)\).\\
2. \(\GB\) is a post-fixed point of \(\mathit{fsim}\), i.e., \(\GB\subseteq \mathit{fsim}(\GB)\).
\end{corollary}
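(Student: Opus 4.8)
The plan is to obtain both claims as immediate consequences of the proof of Lemma~\ref{lem:fsim-coind}, reusing the two implications established in it. For item~1, I would first check that $\mathit{fsim}$ is monotone on the complete lattice of sets of pairs of ground configurations: the bound variable $S$ occurs only positively in the defining formula (it appears solely inside the subformula $(\GP',\GQ')\in S$), so $S_1 \subseteq S_2$ implies $\mathit{fsim}(S_1) \subseteq \mathit{fsim}(S_2)$. By Knaster--Tarski, $\nu\,S\,.\,\mathit{fsim}(S)$ therefore exists and equals the greatest post-fixed point of $\mathit{fsim}$. The ``$\Rightarrow$'' half of the proof of Lemma~\ref{lem:fsim-coind} shows exactly that ${S^\fsh{\prec}} \subseteq \mathit{fsim}({S^\fsh{\prec}})$, that is, that ${S^\fsh{\prec}}$ is a post-fixed point, whence ${S^\fsh{\prec}} \subseteq \nu\,S\,.\,\mathit{fsim}(S)$. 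Conversely, $\nu\,S\,.\,\mathit{fsim}(S)$ is itself a post-fixed point, so the ``$\Leftarrow$'' half of that proof---which establishes that every post-fixed point of $\mathit{fsim}$ is contained in ${S^\fsh{\prec}}$---gives $\nu\,S\,.\,\mathit{fsim}(S) \subseteq {S^\fsh{\prec}}$. Combining the two inclusions yields ${S^\fsh{\prec}} = \nu\,S\,.\,\mathit{fsim}(S)$.

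For item~2, I would fix an arbitrary pair $(\GP,\GQ) \in \GB$ and use that, by the definition of $\GB$, both components are terminal ground configurations; in particular $\GP$ is $\rewrite_{\R_L}$-irreducible, so $\GP{\Downarrow}$ holds. To conclude $(\GP,\GQ) \in \mathit{fsim}(\GB)$, I would verify the two conjuncts of the defining formula of $\mathit{fsim}(\GB)$ separately. The first conjunct, whose premise $\GP{\Downarrow}$ holds, asks for some $\GQ'$ with $\GQ \rewrite_{\R_R}^{*} \GQ'$ and $(\GP,\GQ') \in \GB$; it is witnessed by $\GQ' = \GQ$ taking zero reduction steps, together with the hypothesis $(\GP,\GQ) \in \GB$. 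The second conjunct has premise $\lnot\,\GP{\Downarrow}$, which is false, so that implication holds vacuously. Since $(\GP,\GQ)$ was arbitrary, $\GB \subseteq \mathit{fsim}(\GB)$.

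I do not expect a genuine obstacle here, as both items are corollaries of the lemma just proved. The only two points that require attention are (i) recording that $\mathit{fsim}$ is monotone, so that $\nu\,S\,.\,\mathit{fsim}(S)$ is well-defined and coincides with the greatest post-fixed point, and (ii) explicitly invoking the convention that the elements of $\GB$ are pairs of \emph{terminal} configurations, which is precisely what makes the vacuous-premise argument for the second conjunct in item~2 work.
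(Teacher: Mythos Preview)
Your proposal is correct. The paper states this corollary without proof, treating it as an immediate consequence of Lemma~\ref{lem:fsim-coind}; your argument supplies exactly the natural details---monotonicity of \(\mathit{fsim}\) plus Knaster--Tarski for item~1, and the direct verification using terminality of the configurations in \(\GB\) for item~2---so there is nothing to compare.
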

\begin{lemma}
\label{lem:psim-coind}
 \(\GB \models P \fsh{\preceq} Q\myif \phi\) iff
\(\llbracket P \fsh{\preceq} Q\myif \phi \rrbracket\subseteq\nu\,S\,.\,\mathit{psim}(S)\), where
\begin{align*}
\mathit{psim}(S)= \{(\GP,\GQ)\mid\,& (\GP{\Downarrow}\fsh{\limplies} \exists\,\GQ'.\,\GQ\rewrite_{\R_R}^* \GQ'\fsh{\land}(\GP,\GQ')\in \GB\fsh{\land}{}\\
&~~\fsh{\lnot} \GP{\Downarrow}\fsh{\limplies}\forall\,\GP'.\,\GP\rewrite_{\R_L} \GP'\fsh{\limplies}\\
&~~\phantom{\fsh{\lnot} \GP{\Downarrow}\fsh{\limplies}\forall\,\GP'.\,}\exists\,\GQ'\,.\, \GQ\rewrite_{\R_R}^* \GQ'\fsh{\land}(\GP',\GQ')\in S\\
&)\lor{}\\
&\exists\,\GQ'.\,\GQ\rewrite_{\R_R} \GQ'\fsh{\land}(\GP,\GQ')\in S\\
\}
\end{align*}
\end{lemma}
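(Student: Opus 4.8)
The plan is to mirror the proof of Lemma~\ref{lem:fsim-coind}, routing the argument through an explicit description of the greatest fixed point. First I would unfold Definition~\ref{def:psim} into a purely set-theoretic statement. Define
\[
S^{\fsh{\preceq}} = \Big\{\, (\GP, \GQ) \;\Big|\; \forall\, \GP'.\ \big(\GP \rewrite_{\R_L}^* \GP' \land \GP'{\Downarrow}\big) \Rightarrow \big(\exists\, \GQ'.\ \GQ \rewrite_{\R_R}^* \GQ' \land (\GP', \GQ') \in \GB\big) \lor \big(\GQ \text{ admits an infinite } \rewrite_{\R_R}\text{-path}\big) \,\Big\}.
\]
Since a complete path $\rho(P) \rewrite_{\R_L}^* \GP'$ is by definition one ending in a $\rewrite_{\R_L}$-irreducible $\GP'$ (i.e.\ $\GP'{\Downarrow}$), and since every pair in $\GB$ consists of terminal configurations, $\GB \models P \fsh{\preceq} Q\myif \phi$ holds exactly when $\llbracket P \fsh{\preceq} Q\myif \phi \rrbracket \subseteq S^{\fsh{\preceq}}$. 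It then suffices to prove $S^{\fsh{\preceq}} = \nu\,S\,.\,\mathit{psim}(S)$, which I split into the two inclusions.

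For $S^{\fsh{\preceq}} \subseteq \nu\,S\,.\,\mathit{psim}(S)$ I would show that $S^{\fsh{\preceq}}$ is a post-fixed point, i.e.\ $S^{\fsh{\preceq}} \subseteq \mathit{psim}(S^{\fsh{\preceq}})$, and invoke coinduction. Fix $(\GP, \GQ) \in S^{\fsh{\preceq}}$ and case on whether $\GQ$ admits an infinite $\rewrite_{\R_R}$-path. If it does, some one-step successor $\GQ'$ of $\GQ$ still does, hence $(\GP, \GQ') \in S^{\fsh{\preceq}}$ (the infinite-path disjunct is met for every relevant $\GP'$), and the second disjunct of $\mathit{psim}(S^{\fsh{\preceq}})$ holds. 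If $\GQ$ has no infinite path, then for every $\GP'$ reachable from $\GP$ with $\GP'{\Downarrow}$ the $\GB$-reaching disjunct must hold; taking $\GP' = \GP$ when $\GP{\Downarrow}$ handles the first conjunct of the first disjunct, and when $\lnot\GP{\Downarrow}$ one witnesses $\GQ \rewrite_{\R_R}^* \GQ$ (zero steps) together with $(\GP', \GQ) \in S^{\fsh{\preceq}}$ for each one-step successor $\GP'$ of $\GP$ — valid because any $\GP''{\Downarrow}$ reachable from $\GP'$ is also reachable from $\GP$ — so the first disjunct of $\mathit{psim}(S^{\fsh{\preceq}})$ holds.

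For the reverse inclusion I would prove that every post-fixed point $S$ of $\mathit{psim}$ satisfies $S \subseteq S^{\fsh{\preceq}}$ (this applies to $\nu\,S\,.\,\mathit{psim}(S)$ in particular). Given $(\GP, \GQ) \in S$ and an arbitrary complete path $\GP \rewrite_{\R_L}^* \GP'$, proceed by induction on its length $k$. At each stage I use $S \subseteq \mathit{psim}(S)$ at $(\GP, \GQ)$: if the second disjunct of $\mathit{psim}$ keeps applying, dependent choice yields an infinite $\rewrite_{\R_R}$-path from $\GQ$ and we are done; otherwise, after finitely many right steps $\GQ \rewrite_{\R_R}^* \GQ_m$ we reach $(\GP, \GQ_m) \in S$ whose membership is witnessed by the first disjunct of $\mathit{psim}$. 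If $k = 0$ then $\GP = \GP'{\Downarrow}$ and the first conjunct of that disjunct gives $\GQ_m \rewrite_{\R_R}^* \GQ'$ with $(\GP', \GQ') \in \GB$; if $k > 0$ then $\lnot\GP{\Downarrow}$, so for the first step $\GP \rewrite_{\R_L} \GP_1$ the second conjunct gives $\GQ_m \rewrite_{\R_R}^* \GQ''$ with $(\GP_1, \GQ'') \in S$, and the induction hypothesis applied to $(\GP_1, \GQ'')$ with the remaining path of length $k-1$ finishes the case — composing the right-side reductions, and propagating an infinite right-hand path back to $\GQ$ in the divergent sub-case.

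I expect the main obstacle to be the second disjunct of $\mathit{psim}$, which lets the right-hand configuration move while the left-hand one stands still: unlike in Lemma~\ref{lem:fsim-coind}, a plain induction on the left path no longer suffices, and one must argue — by a dependent-choice / well-foundedness style step — that iterating this disjunct either terminates in a configuration governed by the first disjunct or produces a genuine infinite right-hand run, which is exactly the alternative that partial simulation permits. Keeping this bookkeeping consistent across the induction, and correctly threading the infinite-path alternative through it, is the delicate point; once that is handled, the rest is a routine adaptation of the full-simulation argument.
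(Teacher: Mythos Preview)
Your proposal is correct and follows essentially the same route as the paper: both introduce the explicit set \(S^{\fsh{\preceq}}\), reduce the lemma to \(S^{\fsh{\preceq}}=\nu S.\,\mathit{psim}(S)\), and prove the two inclusions by showing \(S^{\fsh{\preceq}}\) is a post-fixed point and that every post-fixed point is contained in \(S^{\fsh{\preceq}}\). Your reverse inclusion is in fact organized more carefully than the paper's --- where the paper writes informally ``steps 2 and 3 are repeated until \ldots'', you make the induction on the length of the left path and the dependent-choice argument for the infinite right run explicit, which is exactly the bookkeeping needed to handle the second disjunct of \(\mathit{psim}\).
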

\begin{proof}
We have \(\GB \models \GP \fsh{\preceq} \GQ\myif \phi\) iff \(\llbracket \GP \fsh{\preceq} \GQ\myif \phi \rrbracket\subseteq {S^\fsh{\preceq}}\), where
\begin{align*}
{S^\fsh{\preceq}}=\{(\GP,\GQ)\mid {}&\forall\,\GP'.\,\GP\rewrite_{\R_L}^* \GP'\fsh{\land} \GP'{\Downarrow}\fsh{\limplies}\\
&\phantom{\forall\,\GP'\,.}\exists\,\GQ'.\, \GQ\rewrite_{\R_R}^* \GQ'\fsh{\land}(\GP',\GQ')\in \GB\\
&\lor \GQ{\Uparrow}
\}
\end{align*}
and \(\GQ{\Uparrow}\) means that there is an infinite execution starting from \(\GQ\), i.e., \(\exists\, \GQ_1,\GQ_2,\ldots\) such that \(\GQ\rewrite_{\R_R}{}\GQ_1\rewrite_{\R_R}{}\GQ_2\rewrite_{\R_R}{}\cdots\).
\\
\(\Rightarrow\). We show that \({S^\fsh{\preceq}}\)
is post-fixed point for \(\mathit{psim}\), i.e., \({S^\fsh{\preceq}}\subseteq \mathit{psim}({S^\fsh{\preceq}})\). Let \((\GP,\GQ)\in {S^\fsh{\preceq}}\). There are three cases:
\begin{enumerate}
\item \(\GP{\Downarrow}\) and \(\exists\,\GQ'\,.\,\GQ\rewrite_{\R_R}^* \GQ'\fsh{\land}(\GP,\GQ')\in \GB\). 
\item \(\fsh{\lnot} \GP{\Downarrow}\) and \(\forall\,\GP''.\,\GP\rewrite_{\R_L}^* \GP''\fsh{\land} \GP''{\Downarrow}\fsh{\limplies}\exists\,\GQ'.\, \GQ\rewrite_{\R_R}^* \GQ'\fsh{\land}(\GP'',\GQ')\in \GB\).
Let \(\GP\rewrite_{\R_L} \GP'\) arbitrary. Then \(\forall\,\GP''.\,\GP'\rewrite_{\R_L}^* \GP''\fsh{\land} \GP''{\Downarrow}\fsh{\limplies}\exists\,\GQ'.\, \GQ\rewrite_{\R_R}^* \GQ'\fsh{\land}(\GP'',\GQ')\in \GB\). Hence \((\GP',\GQ')\in{S^\fsh{\preceq}}\).
\item \(\GQ{\Uparrow}\), i.e., \(\exists\, \GQ_1,\GQ_2,\ldots\) such that \(\GQ\rewrite_{\R_L}{}\GQ_1\rewrite_{\R_R}{}\GQ_2\rewrite_{\R_R}{}\cdots\). Then \((\GP,\GQ'=Q_1)\in{S^\fsh{\preceq}}\).
\end{enumerate}
The above case-analysis shows that \((\GP,\GQ)\in\mathit{psim}({S^\fsh{\preceq}})\) in all the cases.
\\
\(\Leftarrow\). We show that if \(S\) is a post-fixed point for \(\mathit{psim}\) then \(S\subseteq{S^\fsh{\preceq}}\).
Let \((\GP,\GQ)\in S\subseteq \mathit{psim}(S)\). We have the following cases:
\begin{enumerate}
\item \(\GP{\Downarrow}\) and \( \exists\,\GQ'.\,\GQ\rewrite_{\R_R}^* \GQ'\fsh{\land}(\GP,\GQ')\in \GB\). We obviously have \((\GP,\GQ)\in{S^\fsh{\preceq}}\).
\item\label{it:doi} \(\exists\,\GQ_1.\,\GQ\rewrite_{\R_R} \GQ_1\fsh{\land} \GP_1=\GP\fsh{\land}\fsh{\land}(\GP_1,\GQ_1)\in S\). 
\item\label{it:trei} \(\forall\,\GP_1\,.\,\exists\,\GQ_1.\,\GP\rewrite_{\R_L}^* \GP_1\fsh{\land}\GQ\rewrite_{\R_R}^* \GQ_1\fsh{\land}(\GP_1,\GQ_1)\in S\).    
\end{enumerate}
The steps~\ref{it:doi} and~\ref{it:trei} are repeated until  
for each \(\GP_n\) with \(\GP\rewrite_{\R_L}^* \GP_n\fsh{\land} \GP_n{\Downarrow}\) either exists \(\GQ_{n+1}\) such that \(\GQ\rewrite_{\R_R}^* \GQ_n\fsh{\land}(\GP_n,\GQ_n)\in \GB\) and \((\GP_i,\GQ_i)\in S\) for \(i\in\{1,\ldots, n\}\), or we obtain an infinite sequence \(\GQ\rewrite_{\R_R}\GQ_1\rewrite_{\R_R}\GQ_2,\ldots\) with \((\GP,\GQ_i)\in S\). In both cases we obtain \((\GP,\GQ)\in{S^\fsh{\preceq}}\).
\qed 
\end{proof}
\begin{corollary}
1. \(S^\fsh{\preceq} =\nu\,S\,.\,\mathit{psim}(S)\).\\
2. \(\GB\) is a post-fixed point of \(\mathit{psim}\), i.e., \(\GB\subseteq \mathit{psim}(\GB)\).
\end{corollary}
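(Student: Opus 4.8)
The plan is to read off both claims from the proof of Lemma~\ref{lem:psim-coind}, after first recording that \(\mathit{psim}\) is monotone: in its definition the set variable \(S\) occurs only positively (inside \(\fsh{\land}\), under existential quantifiers, and in the consequents of implications), so \(\nu\,S\,.\,\mathit{psim}(S)\) exists by Knaster--Tarski and is the union of all post-fixed points of \(\mathit{psim}\), i.e.\ the greatest \(S\) with \(S\subseteq\mathit{psim}(S)\).

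For part~1 I would extract the two inclusions already proved for Lemma~\ref{lem:psim-coind}. Its \(\Rightarrow\)-direction shows verbatim that \(S^{\fsh{\preceq}}\) is a post-fixed point, \(S^{\fsh{\preceq}}\subseteq\mathit{psim}(S^{\fsh{\preceq}})\), whence \(S^{\fsh{\preceq}}\subseteq\nu\,S\,.\,\mathit{psim}(S)\). Its \(\Leftarrow\)-direction shows that \emph{every} post-fixed point \(S\) of \(\mathit{psim}\) satisfies \(S\subseteq S^{\fsh{\preceq}}\); applying this to \(S:=\nu\,S\,.\,\mathit{psim}(S)\) (which, being a fixed point, is in particular a post-fixed point) gives \(\nu\,S\,.\,\mathit{psim}(S)\subseteq S^{\fsh{\preceq}}\). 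Combining the two inclusions yields \(S^{\fsh{\preceq}}=\nu\,S\,.\,\mathit{psim}(S)\); the same argument with \(\mathit{fsim}\) gives the corresponding identity in the full-simulation corollary.

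For part~2 I would use the standing assumption that \(\GB\) consists only of pairs of \emph{terminal} ground configurations, so for every \((\GP,\GQ)\in\GB\) we have \(\GP{\Downarrow}\). Fixing such a pair, I verify \((\GP,\GQ)\in\mathit{psim}(\GB)\) via the first disjunct of the definition of \(\mathit{psim}\): the implication guarded by \(\fsh{\lnot}\GP{\Downarrow}\) is vacuously true since \(\GP{\Downarrow}\), and the implication guarded by \(\GP{\Downarrow}\) is discharged by taking \(\GQ':=\GQ\), using the zero-step reduction \(\GQ\rewrite_{\R_R}^*\GQ\) and the hypothesis \((\GP,\GQ)\in\GB\) itself for the side condition \((\GP,\GQ')\in\GB\). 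Hence \(\GB\subseteq\mathit{psim}(\GB)\). I do not expect a real obstacle here; the only points needing care are (i) treating the two halves of the proof of Lemma~\ref{lem:psim-coind} as genuine set inclusions rather than merely as a biconditional about denotations of constrained terms, and (ii) explicitly invoking terminality of base-case pairs, which is precisely what makes the non-termination branch of \(\mathit{psim}\) vacuous.
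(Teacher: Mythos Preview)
Your proposal is correct and matches the intended argument: the paper states this corollary without proof, treating it as immediate from the two directions established in the proof of Lemma~\ref{lem:psim-coind}, which is exactly what you unpack. Your explicit invocation of monotonicity (for Knaster--Tarski) and of the terminality assumption on \(\GB\) (to discharge part~2) fills in precisely the details the paper leaves implicit.
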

%
\soundenessfullsim*
Before proving Theorem~\ref{thm:sound-fsim}, we introduce the following notations, where \(\GP\) and \(\GQ\) denote ground configurations , \(\GC\) and \(S\) a set of pairs of ground configurations:
\begin{description}
\item[]\(\mathit{CoReach}((\GP,\GQ),\GC)\equiv \exists\,\GQ'\,.\,\GQ\rewrite_{\R_R}^*\GQ'\fsh{\land} (\GP,\GQ')\in \GC\),
\item[]\(\mathit{Reach}^{\!+}((\GP,\GQ),S)\equiv \fsh{\lnot}\GP{\Downarrow}\fsh{\land} \forall\,\GP'\,.\,\GP\rewrite_{\R_L}\GP'\fsh{\limplies} \exists\,\GQ'\,.\,\GQ\rewrite_{\R_R}^*\GQ'\fsh{\land} (\GP',\GQ'){\in} S\),
\item[]\(f_\GC(S)=\{(\GP,\GQ)\mid \mathit{CoReach}((\GP,\GQ),\GC)\lor \mathit{Reach}^{\!+}((\GP,\GQ),S)\}\).
\end{description}
\begin{lemma}
\label{lem:fprop}
1. \(f_\GC\) is monotonic.\\
2. If \(\GC \subseteq \GD\) then \(f_\GC(S)\subseteq f_\GD(S)\) for any \(S\).\\
3. \(f_{\GC\cup\GD}(S)=f_\GC(S)\cup f_\GD(S)\).
\end{lemma}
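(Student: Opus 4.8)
The plan is to prove all three parts by unfolding the definition of $f_\GC(S)$ as the disjunction $\mathit{CoReach}((\GP,\GQ),\GC)\lor\mathit{Reach}^{\!+}((\GP,\GQ),S)$ and tracking which of its two arguments each disjunct depends on, and in what monotone fashion. No coinduction or rewriting-specific fact is needed here: this is a purely set-theoretic bookkeeping lemma, whose only role is to license a later Knaster--Tarski / parametric-coinduction argument applied to $f_\GB$, and to let $\GB$ be decomposed into base cases.

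For part~1, fix $\GC$ and suppose $S\subseteq S'$. The predicate $\mathit{CoReach}((\GP,\GQ),\GC)$ does not mention $S$, so it is unaffected. In $\mathit{Reach}^{\!+}((\GP,\GQ),S)$ the only occurrence of $S$ is in the consequent $\exists\,\GQ'.\,\GQ\rewrite_{\R_R}^*\GQ'\land(\GP',\GQ')\in S$; replacing $S$ by the larger $S'$ can only make this easier to satisfy, so $\mathit{Reach}^{\!+}((\GP,\GQ),S)\Rightarrow\mathit{Reach}^{\!+}((\GP,\GQ),S')$. A disjunction of a predicate independent of $S$ and a predicate monotone in $S$ is monotone in $S$, hence $f_\GC(S)\subseteq f_\GC(S')$. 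Part~2 is symmetric: fix $S$ and assume $\GC\subseteq\GD$. Now $\mathit{Reach}^{\!+}((\GP,\GQ),S)$ does not mention $\GC$, while in $\mathit{CoReach}((\GP,\GQ),\GC)$ any witness $\GQ'$ with $\GQ\rewrite_{\R_R}^*\GQ'$ and $(\GP,\GQ')\in\GC$ also satisfies $(\GP,\GQ')\in\GD$; hence $\mathit{CoReach}((\GP,\GQ),\GC)\Rightarrow\mathit{CoReach}((\GP,\GQ),\GD)$, the disjunction is preserved, and $f_\GC(S)\subseteq f_\GD(S)$.

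For part~3, the inclusion $f_\GC(S)\cup f_\GD(S)\subseteq f_{\GC\cup\GD}(S)$ is immediate from part~2 applied to $\GC\subseteq\GC\cup\GD$ and to $\GD\subseteq\GC\cup\GD$, taking the union of the two resulting inclusions. For the reverse inclusion, take $(\GP,\GQ)\in f_{\GC\cup\GD}(S)$ and case-split on the defining disjunction. If $\mathit{Reach}^{\!+}((\GP,\GQ),S)$ holds, then $(\GP,\GQ)$ lies in both $f_\GC(S)$ and $f_\GD(S)$. Otherwise $\mathit{CoReach}((\GP,\GQ),\GC\cup\GD)$ holds, so there is $\GQ'$ with $\GQ\rewrite_{\R_R}^*\GQ'$ and $(\GP,\GQ')\in\GC\cup\GD$; since then $(\GP,\GQ')\in\GC$ or $(\GP,\GQ')\in\GD$, that same $\GQ'$ witnesses $\mathit{CoReach}((\GP,\GQ),\GC)$ or $\mathit{CoReach}((\GP,\GQ),\GD)$, so again $(\GP,\GQ)\in f_\GC(S)\cup f_\GD(S)$.

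I do not expect a real obstacle: the only point that is not purely mechanical is the observation used in part~3 that $\mathit{CoReach}$ distributes over unions in its second argument, i.e.\ that the existential over $\GQ'$ commutes with the disjunction $(\GP,\GQ')\in\GC\lor(\GP,\GQ')\in\GD$, which is trivially valid. The rest is just keeping the dependence of $\mathit{CoReach}$ on its set argument and of $\mathit{Reach}^{\!+}$ on $S$ carefully separated.
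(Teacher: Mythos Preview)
Your proof is correct and takes essentially the same approach as the paper: both treat all three parts as direct consequences of unfolding the definition of $f_\GC(S)$ as the disjunction of $\mathit{CoReach}$ and $\mathit{Reach}^{\!+}$. The paper's own proof is the single sentence ``The conclusions of the lemma are direct consequences of the definition,'' so your write-up simply spells out what the paper leaves implicit.
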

\begin{proof}
The conclusions of the lemma are direct consequences of the definition.
\qed
\end{proof}
\begin{lemma}
\label{lem:coreach}
Let $\GC$ and $S_\GC$ be such that \(\GC=\{(P',Q')\mid\mathit{Reach}^{\!+}((P',Q'),S_\GC)\}\). Then 
\[\models \mathit{CoReach}((\GP,\GQ),\GC)\fsh{\limplies} 
\mathit{Reach}^{\!+}((\GP,\GQ),S_\GC)\]
\end{lemma}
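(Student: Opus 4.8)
The statement is a pure unfolding-of-definitions argument whose only moving part is transitivity of $\rewrite_{\R_R}^*$. Fix arbitrary ground configurations $\GP,\GQ$ and assume $\mathit{CoReach}((\GP,\GQ),\GC)$. By definition of $\mathit{CoReach}$ this gives a configuration $\GQ'$ with $\GQ\rewrite_{\R_R}^*\GQ'$ and $(\GP,\GQ')\in\GC$. Since by hypothesis $\GC=\{(P',Q')\mid\mathit{Reach}^{\!+}((P',Q'),S_\GC)\}$, membership $(\GP,\GQ')\in\GC$ unfolds to $\mathit{Reach}^{\!+}((\GP,\GQ'),S_\GC)$, i.e.\ to $\fsh{\lnot}\GP{\Downarrow}$ together with the statement that for every $\GP'$ with $\GP\rewrite_{\R_L}\GP'$ there is a $\GQ''$ such that $\GQ'\rewrite_{\R_R}^*\GQ''$ and $(\GP',\GQ'')\in S_\GC$.

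Next I would read off the two conjuncts of the goal $\mathit{Reach}^{\!+}((\GP,\GQ),S_\GC)$. The first conjunct, $\fsh{\lnot}\GP{\Downarrow}$, does not refer to the right-hand configuration at all, so it transfers verbatim from $\mathit{Reach}^{\!+}((\GP,\GQ'),S_\GC)$. For the second conjunct, take any $\GP'$ with $\GP\rewrite_{\R_L}\GP'$; the witness $\GQ''$ produced above already satisfies $(\GP',\GQ'')\in S_\GC$, and chaining $\GQ\rewrite_{\R_R}^*\GQ'$ with $\GQ'\rewrite_{\R_R}^*\GQ''$ via transitivity of $\rewrite_{\R_R}^*$ yields $\GQ\rewrite_{\R_R}^*\GQ''$. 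Hence $\GQ''$ is the required witness, so $\mathit{Reach}^{\!+}((\GP,\GQ),S_\GC)$ holds. As $\GP,\GQ$ were arbitrary, the implication holds in the model, which is exactly the claimed $\models$-statement.

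\textbf{Main obstacle.} There is essentially none: the only point requiring care is bookkeeping the source configuration of each $\rewrite_{\R_R}^*$ chain and observing that $\mathit{Reach}^{\!+}$ is closed under prepending right-hand reduction steps to its second component — if $(\GP,\GQ')\in\GC$ ``co-reaches'' $S_\GC$, then so does $(\GP,\GQ)$ whenever $\GQ\rewrite_{\R_R}^*\GQ'$. This is the same transitivity phenomenon that underlies the monotonicity facts in Lemma~\ref{lem:fprop}, and it is exactly what is needed to feed this lemma into the soundness argument for full simulation.
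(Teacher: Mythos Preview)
Your proposal is correct and follows essentially the same approach as the paper's proof: both unfold $\mathit{CoReach}$ and the hypothesis on $\GC$, then use transitivity of $\rewrite_{\R_R}^*$ to absorb the intermediate configuration $\GQ'$ into the witness required by $\mathit{Reach}^{\!+}((\GP,\GQ),S_\GC)$. The paper presents this as a chain of equivalences and one implication, while you spell out the two conjuncts of the goal separately, but the content is identical.
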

\begin{proof}
\begin{align*}
\mathit{CoReach}((\GP,\GQ),\GC)&\iff
\exists\,\GQ''\,.\,\GQ\rewrite_{\R_R}\GQ''\fsh{\land} (\GP,\GQ'')\in\GC\\
&\iff
\exists\,\GQ''\,.\,\GQ\rewrite_{\R_R}\GQ''\fsh{\land} \fsh{\lnot}\GP{\Downarrow}\fsh{\land}{}\\
&\phantom{\iff\exists\,\GQ''\,.\,~} \forall\,\GP'\,.\,\GP\rewrite_{\R_L}\GP'\fsh{\limplies}\\
&\phantom{\iff\exists\,\GQ''\,.\, \forall\,\GP'\,.\,~} \exists\,\GQ'\,.\,\GQ''\rewrite_{\R_R}^*\GQ'\fsh{\land} (\GP',\GQ')\in S_\GC\\
&\fsh{\limplies}
\fsh{\lnot}\GP{\Downarrow}\fsh{\land} \forall\,\GP'\,.\,\GP\rewrite_{\R_L}\GP'\fsh{\limplies}\\
&\phantom{\iff \fsh{\lnot}\GP{\Downarrow}\fsh{\land} \forall\,\GP'\,.\,~} \exists\,\GQ'\,.\,\GQ\rewrite_{\R_R}^*\GQ'\fsh{\land} (\GP',\GQ')\in S_\GC\\
&\iff
\mathit{Reach}^{\!+}((P,Q),S_\GC)
\end{align*}
\qed
\end{proof}
\begin{corollary}
\label{cor:coreach}
If \(\GC=\{(P',Q')\mid\mathit{Reach}^{\!+}((P',Q'),S_\GC)\}\)
then \(f_\GC(S)\subseteq f_\emptyset(S\cup S_\GC)\).
\end{corollary}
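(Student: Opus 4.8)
The plan is to prove the inclusion pointwise, unfolding the definition of \(f_\GC(S)\) and treating its two disjuncts separately. Recall that \((\GP,\GQ)\in f_\GC(S)\) means that \(\mathit{CoReach}((\GP,\GQ),\GC)\) holds or \(\mathit{Reach}^{\!+}((\GP,\GQ),S)\) holds. I would dispatch the \(\mathit{CoReach}\)-branch using Lemma~\ref{lem:coreach}, whose hypothesis is precisely the characterization \(\GC=\{(P',Q')\mid\mathit{Reach}^{\!+}((P',Q'),S_\GC)\}\) assumed in the present corollary, and dispatch the \(\mathit{Reach}^{\!+}\)-branch using nothing more than the fact that \(\mathit{Reach}^{\!+}((\GP,\GQ),\cdot)\) is monotone in its second argument — that argument occurs only positively in the definition — a fact which is also subsumed by part~1 of Lemma~\ref{lem:fprop}.

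Concretely, the steps in order would be: (i) observe that \(\mathit{CoReach}((\GP,\GQ),\emptyset)\) is vacuously false, so that \(f_\emptyset(S\cup S_\GC)=\{(\GP,\GQ)\mid\mathit{Reach}^{\!+}((\GP,\GQ),S\cup S_\GC)\}\); (ii) fix \((\GP,\GQ)\in f_\GC(S)\) and case-split on the defining disjunction; (iii) in the case \(\mathit{CoReach}((\GP,\GQ),\GC)\), apply Lemma~\ref{lem:coreach} to obtain \(\mathit{Reach}^{\!+}((\GP,\GQ),S_\GC)\), then use \(S_\GC\subseteq S\cup S_\GC\) together with monotonicity to get \(\mathit{Reach}^{\!+}((\GP,\GQ),S\cup S_\GC)\), hence \((\GP,\GQ)\in f_\emptyset(S\cup S_\GC)\) by (i); (iv) in the case \(\mathit{Reach}^{\!+}((\GP,\GQ),S)\), use \(S\subseteq S\cup S_\GC\) and monotonicity to get \(\mathit{Reach}^{\!+}((\GP,\GQ),S\cup S_\GC)\), hence again \((\GP,\GQ)\in f_\emptyset(S\cup S_\GC)\). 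Taking the union of the two cases yields \(f_\GC(S)\subseteq f_\emptyset(S\cup S_\GC)\).

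I do not expect any genuine obstacle here: all the substance is already contained in Lemma~\ref{lem:coreach}, and the corollary merely repackages it using the definition of \(f_{(\cdot)}(\cdot)\) and the positivity of the second argument of \(\mathit{Reach}^{\!+}\). The only point that warrants a moment's care is reading the right-hand side of the inclusion correctly, namely that instantiating the parameter \(\GC\) at \(\emptyset\) annihilates the \(\mathit{CoReach}\) disjunct, so that membership in \(f_\emptyset(S\cup S_\GC)\) reduces to the single predicate \(\mathit{Reach}^{\!+}((\GP,\GQ),S\cup S_\GC)\) that both cases produce.
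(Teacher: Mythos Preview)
Your proposal is correct and follows essentially the same argument as the paper: unfold \(f_\GC(S)\), replace the \(\mathit{CoReach}\) disjunct by \(\mathit{Reach}^{\!+}((\GP,\GQ),S_\GC)\) via Lemma~\ref{lem:coreach}, merge the two \(\mathit{Reach}^{\!+}\) disjuncts into \(\mathit{Reach}^{\!+}((\GP,\GQ),S\cup S_\GC)\), and recognize this as \(f_\emptyset(S\cup S_\GC)\). The only difference is cosmetic: the paper presents it as a chain of set inclusions rather than a pointwise case split.
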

\begin{proof}
\begin{align*}
f_\GC(S)&=\{(P,Q)\mid \mathit{CoReach}((P,Q),\GC)\lor \mathit{Reach}^{\!+}((P,Q),S)\}\\
&\subseteq\{(P,Q)\mid \mathit{Reach}^{\!+}((P,Q),S_\GC)\lor \mathit{Reach}^{\!+}((P,Q),S)\}\\
&=\{(P,Q)\mid \mathit{Reach}^{\!+}((P,Q),S\cup S_\GC)\}\\
&=f_\emptyset(S\cup S_\GC)
\end{align*}
\qed
\end{proof}
\begin{lemma}
\label{lem:feqfsim}
\(f_\GB(S) = \mathit{fsim}(S)\).
\end{lemma}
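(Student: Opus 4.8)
The plan is to reduce this set equality to a routine propositional case analysis, the only non-syntactic ingredient being a fact about $\GB$. First I would record that, by definition, $\GB$ is a set of pairs of ground \emph{terminal} program configurations (Section~\ref{sec:equivalence}); hence whenever $(\GP,\GQ')\in\GB$ the left component $\GP$ is $\rewrite_{\R_L}$-irreducible, i.e.\ $\GP{\Downarrow}$. Consequently $\mathit{CoReach}((\GP,\GQ),\GB)$, which asserts the existence of such a $\GQ'$ reachable from $\GQ$, can hold only when $\GP{\Downarrow}$ holds.

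With this in hand, I would fix a pair $(\GP,\GQ)$, unfold the definitions of $f_\GB(S)$ and $\mathit{fsim}(S)$, and split on whether $\GP{\Downarrow}$. In the case $\GP{\Downarrow}$: the disjunct $\mathit{Reach}^{\!+}((\GP,\GQ),S)$ of $f_\GB(S)$ is false (it carries the conjunct $\lnot\GP{\Downarrow}$), so $(\GP,\GQ)\in f_\GB(S)$ iff $\mathit{CoReach}((\GP,\GQ),\GB)$; on the other side, the $\lnot\GP{\Downarrow}$-guarded implication in $\mathit{fsim}(S)$ is vacuous and the $\GP{\Downarrow}$-guarded one collapses to its conclusion $\exists\GQ'.\,\GQ\rewrite_{\R_R}^*\GQ'\land(\GP,\GQ')\in\GB$, which is $\mathit{CoReach}((\GP,\GQ),\GB)$. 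So both sides coincide.

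In the case $\lnot\GP{\Downarrow}$: the first observation makes $\mathit{CoReach}((\GP,\GQ),\GB)$ false, so $(\GP,\GQ)\in f_\GB(S)$ iff $\mathit{Reach}^{\!+}((\GP,\GQ),S)$; and in $\mathit{fsim}(S)$ the $\GP{\Downarrow}$-guarded implication is vacuous while the $\lnot\GP{\Downarrow}$-guarded one has a true antecedent, collapsing to $\forall\GP'.\,\GP\rewrite_{\R_L}\GP'\Rightarrow\exists\GQ'.\,\GQ\rewrite_{\R_R}^*\GQ'\land(\GP',\GQ')\in S$, which, together with the standing hypothesis $\lnot\GP{\Downarrow}$, is exactly $\mathit{Reach}^{\!+}((\GP,\GQ),S)$. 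Combining the two cases gives $f_\GB(S)=\mathit{fsim}(S)$.

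I do not expect a genuine obstacle: the argument is pure propositional bookkeeping over the two guards $\GP{\Downarrow}$ and $\lnot\GP{\Downarrow}$. The one step that must not be glossed over is the terminality of the left components of base pairs: without it, the inclusion $f_\GB(S)\subseteq\mathit{fsim}(S)$ would fail when $\lnot\GP{\Downarrow}$, since then $\mathit{CoReach}((\GP,\GQ),\GB)$ could hold while $\mathit{Reach}^{\!+}((\GP,\GQ),S)$ does not.
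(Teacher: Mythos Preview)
Your proposal is correct and follows essentially the same approach as the paper: both reduce the equality to a propositional equivalence between $(U\Rightarrow V)\land(\lnot U\Rightarrow W)$ and $V\lor(\lnot U\land W)$, which requires the side condition that $\mathit{CoReach}((\GP,\GQ),\GB)$ forces $\GP{\Downarrow}$. The paper packages the argument as a single propositional fact (with hypothesis $V\Rightarrow W$, which itself follows from the terminality of left components of $\GB$), whereas you unfold it into an explicit case split on $\GP{\Downarrow}$; your write-up is actually more careful than the paper's in isolating and justifying the terminality assumption, which the paper leaves implicit.
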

\begin{proof}
We first notice that the following fact holds:
\[
\models (V\fsh{\limplies} W)\fsh{\limplies} ((U\fsh{\limplies} V) \fsh{\land} (\fsh{\lnot} U\fsh{\limplies} W)\iff V\lor (\fsh{\lnot} U\fsh{\land} W))
\]
We have 
\begin{align*}
\mathit{fsim}(S) &{}=\{(\GP,\GQ)\mid \GP{\Downarrow}\fsh{\limplies} \mathit{CoReach}((\GP,\GQ),\GB)\fsh{\land} \fsh{\lnot}\GP{\Downarrow}\fsh{\limplies} \mathit{Reach}^((\GP,\GQ),S)\}\\
&{}=\{(\GP,\GQ)\mid \mathit{CoReach}((\GP,\GQ),\GB)\lor \mathit{Reach}^((\GP,\GQ),S)\}\\
&{}=f_\GB(S)
\end{align*}
by applying the above fact, where
\(U\equiv \GP{\Downarrow}\), \(V\equiv \mathit{CoReach}((P,Q),\GB)\), and \(W\equiv  \forall\,\GP'\,.\,(\GP\rewrite_{\R_L}\GP')\fsh{\limplies} (\exists\,\GQ'\,.\,\GQ\rewrite_{\R_R}^*\GQ'\fsh{\land} (\GP',\GQ')\in S_\GB)\).
\qed
\end{proof}
\begin{lemma}
\label{lem:rulestabf}
Let \(\mathit{PT}\) the set of the proof trees of \(G, B \vdash^0 G\) and let \(S\) be the union of all the sets  \(\llbracket \varphi\rrbracket\) with \(\varphi\) occurring in \(\mathit{PT}\). 
Then \(\llbracket \varphi\rrbracket\subseteq f_{\GB\cup\GG}(S)\) for each \( \varphi\) occurring in \(\mathit{PT}\).
\end{lemma}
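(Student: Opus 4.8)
The plan is to prove $\llbracket\varphi\rrbracket \subseteq f_{\GB\cup\GG}(S)$ (with $\GG = \llbracket G\rrbracket$) by structural induction on the finite proof tree rooted at $\varphi$, performing a case analysis on the last rule applied at the root. Two facts are used uniformly: (i) the induction hypothesis gives $\llbracket\varphi_j\rrbracket \subseteq f_{\GB\cup\GG}(S)$ for every immediate child $\varphi_j$ of $\varphi$; and (ii) by the very definition of $S$, the denotation of \emph{any} formula occurring in $\mathit{PT}$ — in particular of any child of $\varphi$ — is contained in $S$. The \textsc{Axiom} case is immediate, since $\llbracket P \shprec Q \myif \bot\rrbracket = \emptyset$.

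For \textsc{Base} the unique child is $\varphi' = (P \shprec Q \myif \phi \land \lnot\phi_B)$, and the side condition says that whenever $\phi_B$ holds, some derivative $Q'\myif\phi' \in \Delta_{\R_R}^{\leq k}(Q)$ has both $\phi'$ and $\textit{sub}((P,Q'),B)$ satisfied, i.e. $Q$ reaches within $k$ steps a configuration that forms a base-case instance with $P$. Fix a ground pair $(\GP,\GQ) = (\rho(P),\rho(Q))$ with $\rho \models \phi$. If $\rho \not\models \phi_B$, then $(\GP,\GQ) \in \llbracket\varphi'\rrbracket \subseteq f_{\GB\cup\GG}(S)$ by the induction hypothesis. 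If $\rho \models \phi_B$, then iterating Theorem~\ref{th:der2} at most $k$ times yields $\GQ \rewrite_{\R_R}^* \rho(Q')$ with $(\GP,\rho(Q')) \in \llbracket B\rrbracket \subseteq \GB$, so $\mathit{CoReach}((\GP,\GQ),\GB\cup\GG)$ holds and $(\GP,\GQ) \in f_{\GB\cup\GG}(S)$. The \textsc{Circ} case is identical, reading $G$ and $\GG$ in place of $B$ and $\GB$; the guard superscript is irrelevant here, since the lemma only constrains denotations.

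For \textsc{Step}, write $\Delta_{\R_L}(P\myif\phi) = \{P^i\myif\phi^i \mid 1 \le i \le n\}$; the children are the formulas $P^i \shprec Q \myif\phi^i$ together with $P \shprec Q \myif \phi \land \lnot\phi^1 \land \cdots \land \lnot\phi^n$. Fix $(\GP,\GQ) = (\rho(P),\rho(Q))$ with $\rho \models \phi$. If $\GP$ is $\rewrite_{\R_L}$-irreducible, then $\rho$ refutes every $\phi^i$, so $(\GP,\GQ)$ lies in the denotation of the last child and hence in $f_{\GB\cup\GG}(S)$ by the induction hypothesis. Otherwise $\lnot\GP{\Downarrow}$, and I will establish $\mathit{Reach}^{+}((\GP,\GQ),S)$: by Theorem~\ref{th:der2} every $\GP'$ with $\GP \rewrite_{\R_L} \GP'$ is of the form $\GP' = \rho'(P^i)$ for some index $i$ and some valuation $\rho'$ with $\rho'(Q) = \GQ$ and $\rho' \models \phi^i$; then $(\GP',\GQ) = (\rho'(P^i),\rho'(Q)) \in \llbracket P^i \shprec Q \myif \phi^i\rrbracket \subseteq S$, and the empty $\R_R$-path witnesses $\GQ \rewrite_{\R_R}^* \GQ$ with $(\GP',\GQ)\in S$. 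Since $\GP'$ is arbitrary, $\mathit{Reach}^{+}((\GP,\GQ),S)$ holds, so $(\GP,\GQ) \in f_{\GB\cup\GG}(S)$.

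The step I expect to be the main obstacle is the \textsc{Step} case, specifically the passage from Theorem~\ref{th:der2} — which equates the semantics of the \emph{set} of derivatives with the set of all one-step $\R_L$-successors of all instances of $\ct{P}{\phi}$ — to the two per-instance statements used above: ``every $\rewrite_{\R_L}$-successor of $\rho(P)$ is an instance $\rho'(P^i)$ under a valuation $\rho'$ that satisfies $\phi^i$ and agrees with $\rho$ on $Q$'', and ``if $\rho(P)$ is irreducible then no extension of $\rho$ satisfies any $\phi^i$''. Making these precise will rely on the completeness direction of unification modulo axiomatized symbols, on $\R_L$ being top-most, on the coherence assumption between $\R^a$ and $\R_L$, and on treating carefully the fresh variables that the unifiers and rule right-hand sides introduce into the $\phi^i$ (so that $\lnot\phi^i$ is read with those variables existentially closed). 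The rest — choosing the right disjunct of $f_{\GB\cup\GG}(S)$ in each case and invoking (ii) to place child denotations inside $S$ — is routine bookkeeping.
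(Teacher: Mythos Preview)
Your proposal is correct and follows essentially the same route as the paper: structural induction on the proof tree with a case split on the last rule, using the \(\mathit{CoReach}\) disjunct of \(f_{\GB\cup\GG}\) for \textsc{Base}/\textsc{Circ} (on the \(\phi_B\), resp.\ \(\phi_G\), portion) together with the inductive hypothesis on the remaining premise, and the \(\mathit{Reach}^{+}\) disjunct for \textsc{Step} via Theorem~\ref{th:der2}. The paper's argument (labelled a ``Sketch'') glosses over precisely the per-instance refinements of Theorem~\ref{th:der2} that you single out as the delicate point in the \textsc{Step} case; your explicit treatment of the fresh variables in the \(\phi^i\) and of the agreement \(\rho'(Q)=\GQ\) is a genuine improvement in rigor over the paper's presentation, but the underlying idea is the same.
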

\begin{proof}[Sketch]
We proceed by induction on the height of \(\mathit{PT}\) and case analysis on the rule applied in the root.
\\
\textsc{Axiom}. We have \(\llbracket \varphi\rrbracket = \emptyset\).
\\
\textsc{Base}.
\( \phi' \fsh{\limplies} \textit{sub}\Big((P, Q'), B\Big)\) means \( \phi' \fsh{\limplies} \llbracket (P, Q')\rrbracket\subseteq \llbracket B\rrbracket \subseteq \GB\), which implies \(\phi'\fsh{\limplies} \forall\,\GP\in\llbracket P\myif\phi'\rrbracket\,.\,\GP{\Downarrow}\fsh{\land} \forall\,\GQ'\in\llbracket Q'\myif\phi'\rrbracket\,.\,\GQ'{\Downarrow}\) (i.e, \(\phi'\) implies that \(\GP\) and \(\GQ'\) are terminal). Since \(\phi'\) is a path condition derived from \(Q'\), it follows that it does not affect the termination of \(P\), i.e. we have \(\models \forall\,\GP\in\llbracket P\myif\top\rrbracket\,.\,\GP{\Downarrow}\). It follows that \(\models \forall\,(\GP,\GQ)\in\llbracket P \fsh{\prec} Q\myif\phi\rrbracket\,.\,\mathit{CoReach}((\GP,\GQ),\GB)\), which implies
\(\llbracket P \fsh{\prec} Q\myif\phi\rrbracket\subseteq f_\GB(\emptyset)\cup f_{\GB\cup\GG}(\llbracket P \fsh{\prec} Q\mbox{ \textit{if} } \phi\fsh{\land} \fsh{\lnot} \phi_B\rrbracket)\subseteq f_{\GB\cup\GG}(S)\) by the definition of \(f_{\GB}\), Lemma~\ref{lem:fprop}, and the inductive hypothesis.
\\
\textsc{Circ}.
\( \phi' \fsh{\limplies} \textit{sub}\Big((P, Q'), G\Big)\) means \( \phi' \fsh{\limplies} \llbracket (P, Q')\rrbracket\subseteq \GG\), which implies \(\models\forall\,(\GP,\GQ)\in \llbracket (P, Q)\rrbracket\,.\,\mathit{CoReach}((P,Q),\GG)\). We obtain
\[\llbracket P \fsh{\prec} Q\myif\phi\rrbracket\subseteq f_\GG(\llbracket P \fsh{\prec} Q\mbox{ \textit{if} } \phi\rrbracket)\cup f_{\GB\cup\GG}(\llbracket P \fsh{\prec} Q\mbox{ \textit{if} } \phi\fsh{\land} \fsh{\lnot} \phi_B\rrbracket)\subseteq f_{\GB\cup\GG}(S) \] by the definition of \(f_{\GG}\), Lemma~\ref{lem:fprop}, and the inductive hypothesis.
\\
\textsc{Step}. 
We have 
\(\llbracket P \fsh{\prec} Q\myif\phi\rrbracket =\llbracket  P \fsh{\prec} Q\myif\phi \fsh{\land} (\phi_1\lor\cdots\lor\phi_n)\rrbracket\cup \llbracket  P \fsh{\prec} Q\myif\phi \fsh{\land}\fsh{\lnot}\phi_1\fsh{\land}\cdots\fsh{\land}\fsh{\lnot}\phi_n\rrbracket\).
We obtain \(\llbracket  P \fsh{\prec} Q\myif\phi \fsh{\land}\fsh{\lnot}\phi_1\fsh{\land}\cdots\fsh{\land}\fsh{\lnot}\phi_n\rrbracket \subseteq f_{\GB\cup\GG}(S) \) by the inductive hypothesis.
We have \(\{\GP'\mid \GP\rewrite_{\R_L}\GP',\GP\in\llbracket P\myif\phi\rrbracket\}=\llbracket \Delta_{\R_L}(P\myif\phi)\rrbracket\)  (by Theorem~\ref{th:der2}) and \(\{(\GP',\GQ)\mid \GP\rewrite_{\R_L}\GP',\GP\in\llbracket P\myif\phi\rrbracket,\GQ\in\llbracket Q\myif\phi\rrbracket \}=\{(\GP',\GQ)\mid \GP'\in\llbracket \Delta_{\R_L}(P\myif\phi)\rrbracket,\GQ\in\llbracket Q\myif\phi\rrbracket\}={}
\llbracket \{P^i \fsh{\prec} Q\myif\phi^i\mid 1 \leq i \leq n\}\rrbracket\), which implies\\
\begin{align*}
\models\forall\,(\GP,\GQ)\,.\,&(\GP,\GQ)\in \llbracket P \fsh{\prec} Q\myif\phi \fsh{\land} (\phi_1\lor\cdots\lor\phi_n)\rrbracket\fsh{\limplies}\\
&\mathit{Reach}^{\!+}((\GP,\GQ), \llbracket \{P^i \fsh{\prec} Q\myif\phi^i\mid 1 \leq i \leq n\}\rrbracket)
\end{align*}
We obtain
\[\begin{array}{l}\llbracket P \fsh{\prec} Q\myif\phi\rrbracket\subseteq\\ f_{\GB\cup\GG}(\llbracket \{P^i \fsh{\prec} Q\myif\phi^i\mid 1 \leq i \leq n\}\rrbracket)\cup \llbracket P \fsh{\prec} Q\myif\phi\fsh{\land}\fsh{\lnot}\phi_1\fsh{\land}\ldots\fsh{\land}\fsh{\lnot}\phi_n\rrbracket \subseteq \\f_{\GB\cup\GG}(S)\end{array} \] by the definition of \(f_{\GG}\), Lemma~\ref{lem:fprop}, and the inductive hypothesis.
\qed
\end{proof}

\begin{proof}[Theorem~\ref{thm:sound-fsim}]
Let \(\mathit{PT}\) the set of the proof trees of \(G, B \vdash^0 G\) and let \(S\) be the union of all the sets  \(\llbracket P \fsh{\prec} Q\myif\phi\rrbracket\) with \( P \fsh{\preceq} Q\myif\phi\) occurring in \(\mathit{PT}\). We obtain \(S\subseteq f_{\GB\cup\GG}(S)\) by Lemma~\ref{lem:rulestabf}  and Lemma~\ref{lem:fprop}.
Let \((G_0,G_1)\) denote the partition of \(G\) such  that the proof trees corresponding to \(G_i\) uses only instances of inference rules  \(G,B\vdash^g\varphi\) with \(g\le i\). Let \(\GG_i\) denote \(\llbracket G_i\rrbracket\). For \(G_0\) we have \(f_{\GG_0}(S) \subseteq f_\GB(S)\) since only \textsc{Base} \textsc{Axiom} rules are applied in the proof of \(G_0\). 

We obtain \(S\subseteq f_{\GB\cup\GG}(S) =f_{\GB}(S)\cup f_{\GG_0}(S)\cup f_{\GG_1}(S)\subseteq f_{\GB}(S)\cup f_{\emptyset}(S_{\GG_1})=f_{\GB}(S\cup S_{\GG_1})=f_{\GB}(S)=\mathit{fsim}(S)\) by Lemma~\ref{lem:fprop}, Corollary~\ref{cor:coreach}, and Lemma~\ref{lem:feqfsim}.
Hence \(S\subseteq \nu\,Y\,.\,\mathit{fsim}(Y)\), which implies the conclusion of the theorem by Lemma~\ref{lem:fsim-coind}.
\qed
\end{proof}

\soundenesspartsim*

We first introduce the following additional notations:
\begin{description}
\item[]\(\mathit{CoReach}^{\!+}((\GP,\GQ),\GC)\equiv \exists\,\GQ'\,.\,\GQ\rewrite_{\R_R}\GQ'\fsh{\land} (\GP,\GQ')\in \GC\),
\item[]\(p_\GC(S)=\{(P,Q)\mid \mathit{CoReach}((P,Q),\GC)\lor \mathit{CoReach}^{\!+}((P,Q),S)\lor\mathit{Reach}^{\!+}((P,Q),S)\}\).
\end{description}

The function \(p_\GC\) has properties similar to those of \(f_\GC\):
\begin{lemma}
\label{lem:pprop}
1. \(p_\GC\) is monotonic.\\
2. If \(\GC \subseteq \GD\) then \(p_\GC(S)\subseteq p_\GD(S)\) for any \(S\).\\
3. \(p_{\GC\cup\GD}(S)=p_\GC(S)\cup p_\GD(S)\).\\
4. If \(\GC=\{(P',Q')\mid\mathit{Reach}^{\!+}((P',Q'),S_\GC)\}\)
then \(p_\GC(S)\subseteq p_\emptyset(S\cup S_\GC)\).
\end{lemma}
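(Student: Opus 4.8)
The plan is to follow exactly the pattern already established for $f_\GC$ in Lemma~\ref{lem:fprop}, Lemma~\ref{lem:coreach} and Corollary~\ref{cor:coreach}, using the observation that $p_\GC$ is obtained from $f_\GC$ by adjoining the single extra disjunct $\mathit{CoReach}^{\!+}((P,Q),S)$, which mentions only $S$ (not $\GC$) and is monotone in $S$. Parts~1--3 are then immediate from unfolding the definition, and part~4 is the $p$-analogue of Corollary~\ref{cor:coreach}, obtained from the $p$-analogue of Lemma~\ref{lem:coreach}.

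For part~1, I would note that $\mathit{CoReach}((P,Q),\GC)$ does not involve $S$, while unfolding $\mathit{CoReach}^{\!+}((P,Q),S)\equiv\exists\,Q'.\,Q\rewrite_{\R_R}Q'\land(P,Q')\in S$ and $\mathit{Reach}^{\!+}((P,Q),S)$ shows each is preserved when $S$ grows: the witness $Q'$, respectively the witness supplied for each $\R_L$-successor $P'$ of $P$, still belongs to the larger set. Hence all three disjuncts are monotone in $S$, and so is $p_\GC$. For part~2, the only disjunct mentioning $\GC$ is $\mathit{CoReach}((P,Q),\GC)$, and $\GC\subseteq\GD$ gives $\mathit{CoReach}((P,Q),\GC)\Rightarrow\mathit{CoReach}((P,Q),\GD)$, whence $p_\GC(S)\subseteq p_\GD(S)$. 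For part~3, I would use that $(P,Q')\in\GC\cup\GD$ iff $(P,Q')\in\GC$ or $(P,Q')\in\GD$, so $\mathit{CoReach}((P,Q),\GC\cup\GD)\iff\mathit{CoReach}((P,Q),\GC)\lor\mathit{CoReach}((P,Q),\GD)$; distributing this disjunction over the two $\GC$-independent disjuncts yields $p_{\GC\cup\GD}(S)=p_\GC(S)\cup p_\GD(S)$.

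For part~4, I would first prove the $p$-analogue of Lemma~\ref{lem:coreach}: if $\GC=\{(P',Q')\mid\mathit{Reach}^{\!+}((P',Q'),S_\GC)\}$, then $\models\mathit{CoReach}((P,Q),\GC)\Rightarrow\mathit{Reach}^{\!+}((P,Q),S_\GC)$. Unfolding $\mathit{CoReach}$ gives some $Q'$ with $Q\rewrite_{\R_R}^* Q'$ and $(P,Q')\in\GC$, i.e.\ $\lnot P{\Downarrow}$ together with $\forall P'.\,P\rewrite_{\R_L}P'\Rightarrow\exists Q''.\,Q'\rewrite_{\R_R}^* Q''\land(P',Q'')\in S_\GC$; composing $Q\rewrite_{\R_R}^* Q'$ with each path $Q'\rewrite_{\R_R}^* Q''$ by transitivity of $\rewrite_{\R_R}^*$ yields precisely $\mathit{Reach}^{\!+}((P,Q),S_\GC)$. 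Given this, for any $S$ we have $\mathit{CoReach}((P,Q),\GC)\Rightarrow\mathit{Reach}^{\!+}((P,Q),S_\GC)\Rightarrow\mathit{Reach}^{\!+}((P,Q),S\cup S_\GC)$ by part~1, and likewise $\mathit{CoReach}^{\!+}((P,Q),S)\Rightarrow\mathit{CoReach}^{\!+}((P,Q),S\cup S_\GC)$ and $\mathit{Reach}^{\!+}((P,Q),S)\Rightarrow\mathit{Reach}^{\!+}((P,Q),S\cup S_\GC)$. Since $\mathit{CoReach}((P,Q),\emptyset)$ is unsatisfiable, $p_\emptyset(S\cup S_\GC)=\{(P,Q)\mid\mathit{CoReach}^{\!+}((P,Q),S\cup S_\GC)\lor\mathit{Reach}^{\!+}((P,Q),S\cup S_\GC)\}$, and the three implications above show $p_\GC(S)\subseteq p_\emptyset(S\cup S_\GC)$.

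I do not expect a genuine obstacle: the entire lemma reduces to unfolding definitions together with the transitivity of $\rewrite_{\R_R}^*$, exactly mirroring the corresponding facts about $f_\GC$. The only step meriting care is the $\mathit{CoReach}\Rightarrow\mathit{Reach}^{\!+}$ implication inside part~4, where the universally quantified reachability obligation over the $\R_L$-successors of $P$ must be correctly propagated through the prepended $\R_R$-prefix; but this is the same bookkeeping as in Lemma~\ref{lem:coreach}, so it is already in hand.
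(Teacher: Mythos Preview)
Your proposal is correct and follows the same approach as the paper, which simply states that the lemma ``follows directly from the definition.'' You have merely spelled out in detail what the paper leaves implicit: parts~1--3 are immediate from unfolding the three disjuncts defining $p_\GC$, and part~4 is the $p$-analogue of Corollary~\ref{cor:coreach} obtained via the same $\mathit{CoReach}\Rightarrow\mathit{Reach}^{\!+}$ step as in Lemma~\ref{lem:coreach}.
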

\begin{proof}
It follows directly from the definition.
\end{proof}

\begin{lemma}
\label{lem:peqpsim}
\(p_\GB(S) = \mathit{psim}(S)\).
\end{lemma}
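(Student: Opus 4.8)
The plan is to prove the set equality pointwise, reusing the propositional tautology already invoked in the proof of Lemma~\ref{lem:feqfsim}, namely
\[\models (V\fsh{\limplies} W)\fsh{\limplies}\Bigl(\bigl((U\fsh{\limplies} V)\fsh{\land}(\fsh{\lnot} U\fsh{\limplies} W)\bigr)\iff\bigl(V\lor(\fsh{\lnot} U\fsh{\land} W)\bigr)\Bigr),\]
instantiated with $U\equiv\GP{\Downarrow}$, $V\equiv\mathit{CoReach}((\GP,\GQ),\GB)$, and $W\equiv\forall\,\GP'.\,\GP\rewrite_{\R_L}\GP'\fsh{\limplies}\exists\,\GQ'.\,\GQ\rewrite_{\R_R}^*\GQ'\fsh{\land}(\GP',\GQ')\in S$. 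Concretely, I would fix an arbitrary pair $(\GP,\GQ)$ and show that its membership condition in $\mathit{psim}(S)$ is logically equivalent to its membership condition in $p_\GB(S)$.

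First I would observe that the first disjunct in the definition of $\mathit{psim}(S)$ is syntactically $(U\fsh{\limplies} V)\fsh{\land}(\fsh{\lnot} U\fsh{\limplies} W)$, and that the second disjunct, $\exists\,\GQ'.\,\GQ\rewrite_{\R_R}\GQ'\fsh{\land}(\GP,\GQ')\in S$, is literally $\mathit{CoReach}^{\!+}((\GP,\GQ),S)$. Next I would discharge the side condition $V\fsh{\limplies} W$ required to apply the tautology: if $\mathit{CoReach}((\GP,\GQ),\GB)$ holds then $(\GP,\GQ')\in\GB$ for some $\GQ'$, and since $\GB$ is a set of pairs of \emph{terminal} configurations, $\GP$ is $\rewrite_{\R_L}$-irreducible, i.e. $\GP{\Downarrow}$; consequently there is no $\GP'$ with $\GP\rewrite_{\R_L}\GP'$, so $W$ holds vacuously. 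Applying the tautology then rewrites the first disjunct of $\mathit{psim}(S)$ to $V\lor(\fsh{\lnot} U\fsh{\land} W)$, which unfolds to $\mathit{CoReach}((\GP,\GQ),\GB)\lor\mathit{Reach}^{\!+}((\GP,\GQ),S)$ directly from the definitions of $\mathit{CoReach}$ and $\mathit{Reach}^{\!+}$.

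Reassembling the two disjuncts yields that $(\GP,\GQ)\in\mathit{psim}(S)$ holds exactly when
\[\mathit{CoReach}((\GP,\GQ),\GB)\lor\mathit{Reach}^{\!+}((\GP,\GQ),S)\lor\mathit{CoReach}^{\!+}((\GP,\GQ),S),\]
which is precisely the defining condition of $p_\GB(S)$; as $(\GP,\GQ)$ was arbitrary, $p_\GB(S)=\mathit{psim}(S)$. The argument is almost entirely propositional and closely parallels the proof of Lemma~\ref{lem:feqfsim}, the only new ingredient being the trivial remark that the extra $\mathit{CoReach}^{\!+}$ disjunct of $p_\GB$ matches the infinite-run disjunct of $\mathit{psim}$ verbatim. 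The closest thing to an obstacle is the small semantic observation that $V$ forces $\GP$ to terminate so that $W$ becomes vacuous; this is immediate from the standing assumption that $\GB$ consists of pairs of terminal configurations, so I do not anticipate any real difficulty.
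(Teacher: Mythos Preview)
Your proposal is correct and takes essentially the same approach as the paper's proof: the paper simply cites Lemma~\ref{lem:feqfsim} to conclude that the first disjunct of $\mathit{psim}(S)$ equals $\mathit{CoReach}((\GP,\GQ),\GB)\lor\mathit{Reach}^{\!+}((\GP,\GQ),S)$ and then observes that the second disjunct is literally $\mathit{CoReach}^{\!+}((\GP,\GQ),S)$, which is exactly what you do (with the tautology application spelled out inline rather than by citation). If anything, your version is slightly more careful in that you explicitly discharge the side condition $V\fsh{\limplies} W$ via the terminality of pairs in $\GB$, a step the paper leaves implicit.
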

\begin{proof}
The first member of the disjunction from the definition of \(\mathit{psim}\) is equivalent to \( \mathit{CoReach}((P,Q),\GC)\lor\mathit{Reach}^{\!+}((P,Q),S)\) by Lemma~\ref{lem:feqfsim} and the second one is equivalent to \(\mathit{CoReach}^{\!+}((P,Q),S)\).
\end{proof}

\begin{lemma}
\label{lem:rulestabp}
Let \(\mathit{PT}\) the set of the proof trees of \(G, B \vdash^0 G\) and let \(S\) be the union of all the sets  \(\llbracket \varphi\rrbracket\) with \( \varphi\) occurring in \(\mathit{PT}\). Then \(\llbracket \varphi\rrbracket\subseteq p_{\GB\cup\GG}(S)\) for each \( \varphi\) occurring in \(\mathit{PT}\).
\end{lemma}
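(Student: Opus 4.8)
The plan is to argue exactly as in the proof of Lemma~\ref{lem:rulestabf}, by induction on the height of the proof tree and case analysis on the inference rule applied at its root, but with $f_\GC$ replaced throughout by $p_\GC$ and with Lemma~\ref{lem:fprop} replaced by Lemma~\ref{lem:pprop} (monotonicity of $p_\GC$, its compatibility with $\subseteq$ on $\GC$, distribution over unions, and the collapse $p_\GC(S)\subseteq p_\emptyset(S\cup S_\GC)$). The only structurally new ingredient relative to the full-simulation argument is the extra disjunct $\mathit{CoReach}^{\!+}$ in the definition of $p_\GC$, which records that the right-hand side may take a \emph{nonempty} sequence of steps into a pair already under consideration --- precisely what the \textsc{Circ} rule of the partial proof system exploits when its guard $g$ equals $0$. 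I first note that $\GG=\llbracket G\rrbracket\subseteq S$: since $G,B\vdash^0 G$, every formula of $G$ is the root of some proof tree in $\mathit{PT}$, so its denotation is one of the sets whose union is $S$.

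Fix a proof tree with root $\varphi=(P\shpreceq Q\myif\phi)$. For \textsc{Axiom}, $\llbracket\varphi\rrbracket=\emptyset$ and the inclusion is immediate. For \textsc{Base} and \textsc{Step}, the reasoning is verbatim that of the corresponding cases of Lemma~\ref{lem:rulestabf}: in \textsc{Base} the side condition together with $\llbracket B\rrbracket\subseteq\GB$ yields $\mathit{CoReach}((\GP,\GQ),\GB)$ on the relevant slice, so $\llbracket\varphi\rrbracket\subseteq p_\GB(\emptyset)\cup p_{\GB\cup\GG}(S)\subseteq p_{\GB\cup\GG}(S)$ by the inductive hypothesis on the premise and Lemma~\ref{lem:pprop}; in \textsc{Step} one splits $\llbracket\varphi\rrbracket$ into the slice where some $\phi^i$ holds and its complement, handles the complement by the inductive hypothesis on the second premise, and on the first slice uses Theorem~\ref{th:der2} to identify the $\rewrite_{\R_L}$-successors of $\llbracket P\myif\phi\rrbracket$ with $\llbracket\{P^i\shpreceq Q\myif\phi^i\mid 1\le i\le n\}\rrbracket$, so each such pair satisfies $\mathit{Reach}^{\!+}$ into that set and hence lies in $p_{\GB\cup\GG}(\llbracket\{P^i\shpreceq Q\myif\phi^i\}\rrbracket)\subseteq p_{\GB\cup\GG}(S)$ by the inductive hypotheses on the first premises and monotonicity. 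The case that genuinely differs is \textsc{Circ}: its side condition asserts that on the slice where $\phi_G$ holds some successor $Q'\in\Delta_{\R_R}^{\geq 1-g,\leq k}(Q)$ with its path condition lands in $\GG$. If $g=1$ the superscript permits the zero-step ``successor'', giving $\mathit{CoReach}((\GP,\GQ),\GG)$, a disjunct of $p_\GG(\cdot)$; if $g=0$ at least one right-hand-side step is forced, giving $\mathit{CoReach}^{\!+}((\GP,\GQ),\GG)$, which by $\GG\subseteq S$ is the disjunct $\mathit{CoReach}^{\!+}((\GP,\GQ),S)$ of $p_{\GB\cup\GG}(S)$. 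Together with the inductive hypothesis on the premise $P\shpreceq Q\myif\phi\land\lnot\phi_G$ and Lemma~\ref{lem:pprop}, this gives $\llbracket\varphi\rrbracket\subseteq p_{\GB\cup\GG}(S)$.

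The part needing the most care --- and the reason the proof system carries both the $g$-flag and the $\geq 1-g$ superscript on $\Delta_{\R_R}$ --- is exactly the \textsc{Circ} case: a circularity discharged with $g=0$ must be justified by a \emph{genuine} right-hand-side step ($\mathit{CoReach}^{\!+}$), not merely by $\mathit{CoReach}$, since otherwise one could prove a goal from itself with no progress on either side and the lemma would fail. Keeping this distinction visible is precisely why $p_\GC$ is defined with the separate disjunct $\mathit{CoReach}^{\!+}(\cdot,S)$ rather than folding it into $\mathit{CoReach}(\cdot,\GC)$. The present lemma only establishes the one-step invariant that every denotation occurring in the proof tree lies in $p_{\GB\cup\GG}(S)$; the subsequent proof of Theorem~\ref{thm:sound-psim} will then partition $G$ according to the maximal $g$-index used in its derivation, collapse the $\mathit{CoReach}^{\!+}$-contributions of the $g=1$ part via the $p_\GC(S)\subseteq p_\emptyset(S\cup S_\GC)$ step of Lemma~\ref{lem:pprop} (mirroring the $f_\emptyset$ manoeuvre in the proof of Theorem~\ref{thm:sound-fsim}), obtain $S\subseteq\nu\,Y\,.\,\mathit{psim}(Y)$ using Lemma~\ref{lem:peqpsim}, and finally conclude by the coinductive characterization of partial simulation in Lemma~\ref{lem:psim-coind}.
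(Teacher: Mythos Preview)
Your proof is correct and follows the same inductive scheme as the paper (induction on proof-tree height, case analysis on the last rule, with Lemma~\ref{lem:pprop} replacing Lemma~\ref{lem:fprop}). The only organisational difference is that the paper already introduces the $(G_0,G_1)$ partition of $G$ inside the \textsc{Circ} case and pushes part of the reasoning further towards $p_\GB(S)$, whereas you stop at $p_{\GB\cup\GG}(S)$ and correctly defer that refinement to the proof of Theorem~\ref{thm:sound-psim}.
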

\begin{proof}[Sketch]
We proceed by induction on the height of \(\mathit{PT}\) and case analysis on the rule applied in the root.
For \textsc{Axiom} and \textsc{Base} the proofs are similar to those of Lemma~\ref{lem:rulestabf}.
For the rest of rules we let \((G_0,G_1)\) denote the partition of \(G\) such  that the proof trees corresponding to \(G_i\) uses only instances of inference rules  \(G,B\vdash^g\varphi\) with \(g\le i\). We also use  \(\GG_i\) to denote \(\llbracket G_i\rrbracket\).
\\
\textsc{Circ}.
Let \(\varphi\) denote \(P \fsh{\preceq} Q\myif\phi\).
\\
Subcase \(g=1\).
We have \(\{(\GP,\GQ)\in \llbracket\varphi\rrbracket\mid \mathit{CoReach}^+((\GP,\GQ),\GG_1)\}=\{(\GP,\GQ)\in \llbracket\varphi\rrbracket\mid \mathit{Reach}^{\!+}((\GP,\GQ),S_{\GG_1})\}\subseteq  p_\emptyset(S_{\GG_1})\subseteq p_\GB(S)\) by Lemma~\ref{lem:coreach} and the definition of \(p\);  the inclusion \(S_{\GG_1}\subseteq S\) follows by the fact there is an instance of \textsc{Step} in \(\mathit{PT}\) for each formula in \(G_1\).
Since there is no an instance of \textsc{Step} in \(\mathit{PT}\) for any formula in \(G_0\), \((\GP,\GQ)\in\GG_0\) implies \(\mathit{CoReach}((\GP,\GQ),\GB )\lor \mathit{CoReach}^+((\GP,\GQ),S)\) (corresponding to \textsc{Base} and \textsc{Circ}, respectively).
We obtain 
\begin{align*}
&\{(\GP,\GQ)\in \llbracket\varphi\rrbracket\mid \mathit{CoReach}((\GP,\GQ),\GG_0)\}\subseteq{}\\
& \{(\GP,\GQ)\in \llbracket\varphi\rrbracket\mid \mathit{CoReach}((\GP,\GQ),\GB )\lor \mathit{CoReach}^+((\GP,\GQ),S)\}\subseteq {}\\
&p_\GB(S)
\end{align*} 
\(\GG=\GG_0\cup\GG_1\) implies \(\{(\GP,\GQ)\in \llbracket\varphi\rrbracket\mid \mathit{CoReach}((\GP,\GQ),\GG)\}\subseteq p_\GB(S)\).
\\
Subcase \(g=0\). We have 
\begin{align*}
&\{(\GP,\GQ)\in \llbracket\varphi\rrbracket\mid \mathit{CoReach}^+((\GP,\GQ),\GG)\}\subseteq{}\\
&\{(\GP,\GQ)\in \llbracket\varphi\rrbracket\mid \mathit{CoReach}^+((\GP,\GQ),S)\}\subseteq{}\\
&p_\GB(S)
\end{align*} 
Now the two cases are finished.
\\
From the definition of the rule we obtain
\begin{align*}
&\llbracket \varphi\rrbracket\subseteq {}\\
&\{(\GP,\GQ)\mid \mathit{CoReach}^+((\GP,\GQ),\GG)\}\cup\llbracket P \fsh{\preceq} Q\myif\phi\fsh{\land}\fsh{\lnot}\phi_G\rrbracket\subseteq{}\\
&  p_\GB(S)\cup p_{\GB\cup\GG}(S) \subseteq {}\\
& p_{\GB\cup\GG}(S)
\end{align*} 
by the definition, the properties of \(p\) and the inductive hypothesis.
\\ 
\textsc{Step}.
Let \(\varphi\) denote \(P \fsh{\preceq} Q\myif\phi\).
We have 
\begin{align*}
&\llbracket \varphi\rrbracket\subseteq{}\\
& \{(\GP,\GQ)\mid \mathit{Reach}^+((\GP,\GQ),\Delta_{\R_l}(P\myif\phi))\}\cup \llbracket P \fsh{\preceq} Q\myif\phi\fsh{\land}\fsh{\lnot}\phi_1\fsh{\land}\cdots\fsh{\land}\fsh{\lnot}\phi_n\rrbracket \subseteq{}\\
&  p_{\GB\cup\GG}(S)
\end{align*} 
by the inductive hypothesis and the definition of \(P\).
\end{proof}

\begin{proof}[Theorem~\ref{thm:sound-psim}]
Let \(\mathit{PT}\) the set of the proof trees of \(G, B \vdash^0 G\) and let \(S\) be the union of all the sets  \(\llbracket\varphi\rrbracket\) with \(\varphi\) occurring in \(\mathit{PT}\). We obtain \(S\subseteq p_{\GB\cup\GG}(S)\) by Lemma~\ref{lem:rulestabp}.
We have \(S\subseteq p_{\GB\cup\GG}(S) =p_{\GB}(S)\cup p_{\GG_0}(S)\cup p_{\GG_1}(S)\subseteq p_{\GB}(S)\cup p_{\emptyset}(S_{\GG_1})=p_{\GB}(S\cup S_{\GG_1})=p_{\GB}(S)=\mathit{psim}(S)\) by Lemma~\ref{lem:pprop}, Lemma~\ref{lem:rulestabp}, and Lemma~\ref{lem:peqpsim}.
Hence \(S\subseteq \nu\,Y\,.\,\mathit{psim}(Y)\), which implies the conclusion of the theorem by Lemma~\ref{lem:psim-coind}.

\end{proof}


\clearpage
\section{Examples of Optimization Correctness Proofs}
\label{app:optimization}



We present these examples in a \texttt{C}-like language, denoting
symbolic expressions and sequences by suggestive identifiers (e.g.,
\texttt{E1}, \texttt{S1}). We go through all examples worked out in
CORK~\cite{LopesMonteiro}. The read-sets and write-sets of these
expressions/sequences are given in a \texttt{.wp} (\textit{weakest
  precondition}) file. We encode these read/write-sets as explained in
Section~\ref{sec:programschemas}. In this section, we will use the
term equivalence and we will mean two-way simulation. When proving
equivalence, we prove the simulation of the rhs by the lhs and
vice-versa. The base cases and the goals used for the reverse
direction are always symmetric. These examples are all implemented in the \RMT{} tool (\url{http://profs.info.uaic.ro/~stefan.ciobaca/rmteq}).

The base equivalence \(\GB\) we consider has two terminal programs, under
the constraint that output variables have equal values in the two
resulting environments.

In this section we list each of the optimizations individually,
analyze them, and present how our method can be applied to prove the
equivalences. For consistency, we will always present the original
program on the left and the optimized one on the right.

In order for \RMT{} to successfully prove equivalences involving
loops, it requires helper equivalences (circularities). We describe a
methodology, which we call \emph{snapshotting} the two programs at
certain points, that allows us to easily find these helper
circularities for proving optimizations.

By taking a \emph{snapshot} of a program at a certain point, we mean
running it until it reaches that point and saving its form once that
point is reached. \RMT{} provides a \texttt{run} query, which can be
used to make the process of taking these snapshots easier.

In general, in order to obtain a helpful circularity, the snapshot
needs to happen at a point in which the structure of the program
remains similar after some program steps are executed (e.g., inside
loops). In addition, the two snapshots of the programs still need to
be equivalent. Usually, this only happens under some
\emph{constraint}.

Consider a simple example in which we want to prove the equivalence of
a program with itself:

\begin{minipage}[t]{4cm}
	\begin{lstlisting}
	V1 = 0;
	while (V1 < V2) {
	  S1;
	  V1 = V1 + 1;
	}
	\end{lstlisting}
\end{minipage}
\hfill
\begin{minipage}[t]{4cm}
	\begin{lstlisting}
	V1 = 0;
	while (V1 < V2) {
	  S1;
	  V1 = V1 + 1;
	}
	\end{lstlisting}
\end{minipage}

For this example, no matter how many program steps we execute, the
program will never be structurally similar to the initial one, due to
the initial assignment is \emph{erased} after being executed. However,
we snapshot the program at the beginning of the loop (just after the
initial assignment was executed). In order to prove the initial
equivalence, we use the following helper equivalence found by
snapshotting:

\begin{minipage}[t]{4cm}
	\begin{lstlisting}
	while (V1 < V2) {
	  S1;
	  V1 = V1 + 1;
	}
	\end{lstlisting}
\end{minipage}
\hfill
\begin{minipage}[t]{4cm}
	\begin{lstlisting}
	while (V1 < V2) {
	  S1;
	  V1 = V1 + 1;
	}
	\end{lstlisting}
\end{minipage}

Intuitively, because the snapshot reaches a configuration having a
form similar to itself, this new equivalence can be used to prove
itself (hence why we call these equivalences \emph{circularities}). In
addition, the original equivalence can be easily reduced to this
second one.

Since the first assignment is missing from these new programs, we must
specify that this equivalence only holds true if the value of
\texttt{V1} in the left-hand-side program is equal to the value of
\texttt{V1} in the right-hand-side program. We add this as a logical
constraint of the circularity. In the next examples, we will always
assume equality of variables with the same name, unless otherwise
specified.

Note that this is not the only way to specify a helper
circularity. For example, we could snapshot the programs at
\texttt{S1}, at \texttt{V1 = V1 + 1}, or even at different points
inside the loop, under proper constraints.

\paragraph{Code hoisting}

Program sequences which appear on both branches of a conditional
branch can be hoisted out of the if-else statement. On modern CPUs,
this could potentially help with pipelining and branch prediction,
improving run-time performance.

\begin{minipage}[t]{4cm}
	\begin{lstlisting}
	if (B1) {
	  S1;
	  S2;
	} else {
	  S1;
	  S3;
	}
	\end{lstlisting}
\end{minipage}
\hfill
\begin{minipage}[t]{4cm}
	\begin{lstlisting}
	
	S1;
	if (B1) {
	  S2;
	} else {
	  S3;
	}
	\end{lstlisting}
\end{minipage}

In this example, \texttt{B1} is a symbolic boolean expression and
\texttt{S1} is a symbolic sequence which does not write to the read
set of \texttt{B1}. There are no restrictions on the read/write-sets
of symbolic sequences \texttt{S2} and \texttt{S3}. \RMT{} is able to
prove the equivalence of the two programs with no additional
circularities (as expected, since helper circularities are generally
only needed when programs contain loops). The name of the file
corresponding to this examples is \texttt{imp-hoisting.rmt}.

\paragraph{Constant propagation}

The goal of constant propagation is to eliminate the need to evaluate
certain expressions multiple times, if these expressions remain
constant thorough the program's execution. We have already discussed
this example in Section~\ref{sec:programschemas}.

\begin{minipage}[t]{4cm}
	\begin{lstlisting}
	V1 = E1;
	S1;
	V2 = E1;
	\end{lstlisting}
\end{minipage}
\hfill
\begin{minipage}[t]{4cm}
	\begin{lstlisting}
	V1 = E1;
	S1;
	V2 = V1;
	\end{lstlisting}
\end{minipage}

In the original program, we can see that the result of \texttt{E1} is
stored in variable \texttt{V1}. On line 3, expression \texttt{E1} is
evaluated again. Provided that the evaluation of \texttt{E1} does not
need the value of \texttt{V1} or any variables which \texttt{S1}
modifies, we could avoid this re-evaluation and use the memorized
value, stored in \texttt{V1}. Depending on the complexity of
\texttt{E1}, this could greatly improve run-time performance.

\begin{minipage}[t]{4cm}
	\begin{lstlisting}
	V1 = E1;
	S1;
	V2 = E1;
	\end{lstlisting}
\end{minipage}
\hfill
\begin{minipage}[t]{4cm}
	\begin{lstlisting}
	V1 = E1;
	V2 = V1;
	S1;
	\end{lstlisting}
\end{minipage}

If additionally \texttt{S1} does not use the value of \texttt{V2}, the
order in which the last two lines are executed becomes irrelevant, as
illustrated in the example above. Modern CPUs could pick up on this
and execute the lines in parallel, further improving performance.

\RMT{} is able to prove both of these equivalences, with no helper
circularities. The names of the two files
corresponding to this examples are of the form \texttt{imp-constant-propagation*.rmt}.

\paragraph{Copy propagation}

In compiler theory, copy propagation is the process of replacing the
occurrences of targets of direct assignments with their values. Copy
propagation is a useful \emph{clean up} optimization frequently used
after other optimizations have already been run. Some optimizations,
such as elimination of common sub expressions, require that copy
propagation be run afterwards in order to achieve an increase in
efficiency.

\begin{minipage}[t]{4cm}
	\begin{lstlisting}
	V1 = V2;
	V3 = V1;
	\end{lstlisting}
\end{minipage}
\hfill
\begin{minipage}[t]{4cm}
	\begin{lstlisting}
	V1 = V2;
	V3 = V2;
	\end{lstlisting}
\end{minipage}

The two programs above illustrate an example of a copy propagation
optimization. Proving the equivalence of the two is similar to proving
equivalence in the case of constant propagation. As expected, \RMT{}
was able to prove this equivalence as well, without the need for
additional circularities. The name of the file
corresponding to this examples is \texttt{imp-copy-propagation.rmt}.

\paragraph{If-conversion}

If-conversion is an optimization which deletes a branch around an
instruction and replaces it with a predicate on the instruction. This
optimization can be described as a transformation which converts
control dependencies into data dependencies, and it may be required
for software pipelining.

\begin{minipage}[t]{4cm}
	\begin{lstlisting}
	
	if (B1) {
	  V1 = E1;
	}
	
	\end{lstlisting}
\end{minipage}
\hfill
\begin{minipage}[t]{4cm}
	\begin{lstlisting}
	if (B1) {
	  V1 = E1;
	} else {
	  V1 = V1;
	}
	\end{lstlisting}
\end{minipage}

The programs above illustrate an example of if-conversion. Using the
ternary operator from the C language, the optimized program could also
be expressed in a single line as \texttt{V1 = B1 ? E1 : V1;}. It is
assumed that neither \texttt{B1} nor \texttt{E1} have any side
effects.

\RMT{} is able to prove this equivalence as well, with no helper
circularities. The name of the file
corresponding to this examples is \texttt{imp-if-conversion.rmt}.

\paragraph{Partial redundancy elimination}

Partial redundancy elimination (PRE) is a compiler optimization that
eliminates expressions that are redundant on some but not necessarily
all paths through a program. PRE is a form of common subexpression
elimination.

An expression is called partially redundant if the value computed by
the expression is already available on some but not all paths through
a program to that expression. An expression is fully redundant if the
value computed by the expression is available on all paths through the
program to that expression. PRE can eliminate partially redundant
expressions by inserting the partially redundant expression on the
paths that do not already compute it, thereby making the partially
redundant expression fully redundant.

\begin{minipage}[t]{4cm}
	\begin{lstlisting}
	if (B1) {
	  S1;
	  V1 = E1;
	  S2;
	  
	} else {
	  S3;
	}
	V2 = E1;
	\end{lstlisting}
\end{minipage}
\hfill
\begin{minipage}[t]{4cm}
	\begin{lstlisting}
	if (B1) {
	  S1;
	  V1 = E1;
	  S2;
	  V2 = V1;
	} else {
	  S3;
	  V2 = E1;
	}
	\end{lstlisting}
\end{minipage}

Here, it is assumed that \texttt{B1} and \texttt{E1} have no side
effects, \texttt{E1} does not read from \texttt{V1} and \texttt{S1}
does not write to \texttt{V1} or \texttt{V2} and, in addition,
\texttt{S1} does not write to any variables read by \texttt{E1}.

Under these assumptions, it can be observed that \texttt{E1} is
partially redundant: its value is already available at the end of the
\texttt{if} branch, but not at the end of the \texttt{else}
branch. PRE inserts this expression on the \texttt{else} branch,
whereas on the \texttt{if} branch it uses the value stored in
\texttt{V1}, avoiding the re-evaluation of \texttt{E1}.

As before, \RMT{} is able to prove the equivalence of the two programs
without the need for helper circularities. The following examples all
have loops and therefore helper circularities are required. The name of the file
corresponding to this examples is \texttt{imp-pre.rmt}.

\paragraph{Loop invariant code motion}

Loop invariant code consists of statements and/or expressions inside a
loop body that do not depend on the contents of the loop itself, and
as such could be moved outside the loop without affecting the results
of the program. Loop invariant code motion (LICM) is the compiler
optimization which identifies such statements and moves them outside
the loop automatically. This results in a single evaluation of the
loop invariant code, as opposed to multiple ones, which could
significantly improve performance.

\begin{minipage}[t]{4cm}
	\begin{lstlisting}
	
	while (V1 < V2) {
	  S1;
	  S2;
	  V1 = V1 + 1;
	}
	
	\end{lstlisting}
\end{minipage}
\hfill
\begin{minipage}[t]{4cm}
	\begin{lstlisting}
	if (V1 < V2) {
	  S2;
	  while (V1 < V2) {
	    S1;
	    V1 = V1 + 1;
	  }
	}
	\end{lstlisting}
\end{minipage} 

In this example, \texttt{S2} is a symbolic statement, which does not
read from and does not write to any variables modified inside the loop
(i.e. \texttt{V1} and the write-set of \texttt{S1}). The \texttt{if}
instruction added in the optimized program ensures that \texttt{S2} is
only evaluated if the initial loop was going to be entered into at
least once, thus preserving program semantics.

As explained, since the right-hand-side program does not preserve its
structure (the conditional statement \emph{disappears} after a few
steps), we need a helper circularity. We created such a circularity by
snapshotting the optimized program at the start of the loop. With this
new circularity, \RMT{} is indeed able to successfully prove the
desired equivalence. The name of the file
corresponding to this examples is \texttt{imp-licm.rmt}.

\paragraph{Loop peeling}

Loop splitting is a compiler optimization technique that attempts to
simplify a loop or eliminate dependencies by breaking it into multiple
loops which have the same bodies but iterate over different contiguous
portions of the index range. Loop peeling is a special case of loop
splitting which splits any potentially problematic first (or last) few
iterations from the loop and performs them outside of the loop body.

\begin{minipage}[t]{4cm}
	\begin{lstlisting}
	
	
	
	while (V1 < V2) {
	  S1;
	  V1 = V1 + 1;
	}
	
	\end{lstlisting}
\end{minipage}
\hfill
\begin{minipage}[t]{4cm}
	\begin{lstlisting}
	if (V1 < V2) {
	  S1;
	  V1 = V1 + 1;
	  while (V1 < V2) {
	    S1;
	    V1 = V1 + 1;
	  }
	}
	\end{lstlisting}
\end{minipage} 

In this example, one loop step from the initial program is
\emph{peeled} outside the loop in the optimized one.

Interestingly, even though the second program begins with an
\texttt{if} instruction, which will be eliminated after some steps,
\RMT{} does not require an additional circularity in order to prove
this example. This is because of the definition of the language
semantics we used. In the semantics, \texttt{while(B) S;} is rewritten
to \texttt{if(B) \{S; while(B) S;\}}. It can be observed that, by
applying this transformation, the inner loop of the optimized program
translates into a program which structurally matches the initial
one. If the semantics were defined differently, we might have had to
build an auxiliary circularity in order for \RMT{} to successfully
prove the equivalence. The name of the file
corresponding to this examples is \texttt{imp-loop-peeling.rmt}.

\paragraph{Loop unrolling}

Loop unrolling is an optimization that attempts to improve the
execution speed of a program at the expense of code size. It involves
repeating the loop body multiple times inside a single iteration,
eliminating some of the loop overhead, such as unnecessary termination
condition checks.

\begin{minipage}[t]{4cm}
	\begin{lstlisting}
	while (V1 < V2) {
	  S1;
	  V1 = V1 + 1;
	}
	
	
	
	
	
	
	\end{lstlisting}
\end{minipage}
\hfill
\begin{minipage}[t]{4cm}
	\begin{lstlisting}
	while (V1+1 < V2) {
	  S1;
	  V1 = V1 + 1;
	  S1;
	  V1 = V1 + 1;
	}
	if (V1 < V2) {
	  S1;
	  V1 = V1 + 1;
	}
	\end{lstlisting}
\end{minipage} 

In this example, the loop body of the optimized program corresponds to
two iterations of the initial loop. This means that the termination
condition will be checked in the optimized program roughly half the
number of times compared to the original one. This assumes that
\texttt{S1} does not write to \texttt{V1} and \texttt{V2}. The final
\texttt{if} statement from the optimized program is needed for when
the original loop would execute \texttt{S1} an odd number of times.

As before, since the optimized program does not preserve structure
(the final conditional statement is pushed on the computation stack
before the loop is executed), we need an additional circularity. We
created a new circularity by snapshotting the second program before
the loop. \RMT{} is able to use this circularity and prove the
equivalence of the two programs.

\begin{minipage}[t]{4cm}
	\begin{lstlisting}
	V1 = 0;
	while (V1 < V2) {
	  S1;
	  V1 = V1 + 1;
	}
	
	
	\end{lstlisting}
\end{minipage}
\hfill
\begin{minipage}[t]{4cm}
	\begin{lstlisting}
	V1 = 0;
	while (V1 < V2) {
	  S1;
	  V1 = V1 + 1;
	  S1;
	  V1 = V1 + 1;
	}
	\end{lstlisting}
\end{minipage}

In this second example of unrolling, we illustrate that if we know a
priori that the loop executes an even number of times (i.e. that the
value of \texttt{V2} is an even number), we can omit the final
\texttt{if} statement, simplifying the optimized program.

As with previous examples, we use a helper circularity, which consists
of the two programs snapshotted at the start of the loops. With this
circularity in place, \RMT{} is able to successfully prove this
equivalence as well. The names of the two files
corresponding to this examples are of the form \texttt{imp-loop-unrolling*.rmt}.

\paragraph{Loop unswitching}

Loop unswitching is a compiler optimization that moves a conditional
inside a loop outside of it, by duplicating the body of loop and
placing a version of the body in each of the two branches of the
conditional statement. Despite roughly doubling the code size, this
optimization not only allows the conditional expression to be
evaluated only once (as opposed to on each iteration of the loop), but
also allows each conditional branch to be further optimized
separately.

\begin{minipage}[t]{4cm}
	\begin{lstlisting}
	
	while (V1 < V2) {
	  if (B1) {
	    S1;
	  } else {
	    S2;
	  }
	  V1 = V1 + 1;
	}
	
	
	\end{lstlisting}
\end{minipage}
\hfill
\begin{minipage}[t]{4cm}
	\begin{lstlisting}
	if (B1) {
	  while (V1 < V2) {
	    S1;
	    V1 = V1 + 1;
	  }
	} else {
	  while (V1 < V2) {
	    S2;
	    V1 = V1 + 1;
	  }
	}
	\end{lstlisting}
\end{minipage}

In this example, we assume that \texttt{B1} does not depend on
\texttt{V1}, \texttt{V2}, or on any variable in the write-sets of
\texttt{S1} and \texttt{S2}. In other words, it does not change its
value thorough the execution of the loop.

As with previous examples, since the optimized program starts with an
\texttt{if} instruction, which disappears after some program steps, we
need additional circularities. Interestingly, since the \texttt{if}
statement has two branches, we need two new circularities (one for
each branch). In other words, in the two new circularities, the left
program will remain unchanged, whereas the right program will
respectively turn into the two programs below:

\begin{minipage}[t]{4cm}
	\begin{lstlisting}
	while (V1 < V2) {
	  S1;
	  V1 = V1 + 1;
	}
	\end{lstlisting}
\end{minipage}
\hfill
\begin{minipage}[t]{4cm}
	\begin{lstlisting}
	while (V1 < V2) {
	  S2;
	  V1 = V1 + 1;
	}	
	\end{lstlisting}
\end{minipage}

Of course, we can only prove the first additional equivalence under
the constraint that \texttt{B1} evaluates to true and the second one
under the constraint that \texttt{B1} evaluates to false. \RMT{} is
able to successfully prove these equivalences. The name of the file
corresponding to this examples is \texttt{imp-loop-unswitch.rmt}.

\paragraph{Software pipelining}

Software pipelining is a technique used to optimize loops, in a manner
that enables better parallelization via hardware pipelining. This
optimization is a type of out-of-order execution, which is done by the
compiler (or by the programmer).

\begin{minipage}[t]{4cm}
	\begin{lstlisting}
	
	
	while (V1 < V2) {
	  S1;
	  S2;
	  V1 = V1 + 1;
	}
	
	
	
	\end{lstlisting}
\end{minipage}
\hfill
\begin{minipage}[t]{4cm}
	\begin{lstlisting}
	if (V1 < V2) {
	  S1;
	  while (V1 < V2-1) {
	    S2;
	    V1 = V1 + 1;
	    S1;
	  }
	  S2;
	  V1 = V1 + 1;
	}
	\end{lstlisting}
\end{minipage}

In this example, if we compare the loop bodies of the two programs, we
can see that statements \texttt{S1} and \texttt{S2} are executed in a
different order. This could lead to performance improvements if the
processor considers it easier to parallelize the second loop compared
to the first one.

As with previous examples, since the second program does not preserve
structure, we use a helper circularity, in which the second program is
snapshotted just before the \texttt{S1} instruction. With this helper
circularity, \RMT{} is able to successfully prove the equivalence of
the two programs. The name of the file
corresponding to this examples is \texttt{imp-software-pipelining.rmt}.

\paragraph{Loop fission and fusion}

Loop fission (or loop distribution) is a compiler optimization in
which a loop is broken into multiple loops over the same index range
with each taking only a part of the original body of loop. The goal is
to break down a large loop body into smaller ones for better locality.

Conversely, loop fusion (or loop jamming) is the loop transformation
that replaces multiple loops with a single one.

\begin{minipage}[t]{4cm}
	\begin{lstlisting}
	
	V1 = E1;
	
	while (V1 < V2) {
	  S1;
	  S2;
	  V1 = V1 + 1;
	}
	
	
	\end{lstlisting}
\end{minipage}
\hfill
\begin{minipage}[t]{4cm}
	\begin{lstlisting}
	V1 = E1;
	while (V1 < V2) {
	  S1;
	  V1 := V1 + 1;
	}	
	V1 = E1;
	while (V1 < V2) {
	  S2;
	  V1 = V1 + 1;
	}	
	\end{lstlisting}
\end{minipage}

The programs above represent an example of loop fission. If the order
of the programs were reversed, it would be an example of loop
fusion. It is assumed that \texttt{S1} and \texttt{S2} write to
disjoint sets of variables (let us denote these sets by \texttt{C1}
and \texttt{C2} respectively), and none of the two sequences read from
variables written to by the other. In addition, \texttt{E1} does not
read from \texttt{C1}, \texttt{C2}, or \texttt{V1}.

In order to solve this example, we need two separate simulation
proofs, with two different base equivalences. Usually, we consider the
base equivalence to be two terminal programs in which \textit{all}
relevant variables have equal values. For this example, we first
consider them equivalent (1) if \textit{only} variables in \texttt{C1}
have equal values, and then (2) if \textit{only} variables in
\texttt{C2} have equal values. In other words, we track the results of
\texttt{S1} and \texttt{S2} separately.

For (1), we construct a helper circularity by snapshotting the
left-hand-side program just before the \texttt{while} loop and the
right-hand-side program just before the \textit{first} \texttt{while}
loop. We construct another helper circularity by snapshotting the
first program at its termination point and the second program just
before the \textit{second} \texttt{while} loop. Intuitively, this last
circularity has the role of ensuring that the final loop of the second
program does not modify variables written to by \texttt{S1}. Because
this circularity contains a terminal program configuration, we can
only prove the \textit{partial} equivalence of the two programs.

For (2), we construct a helper circularity by snapshotting the
left-hand-side program just before the \texttt{while} loop and the
right-hand-side program just before the \textit{second} \texttt{while}
loop. We construct another helper circularity consisting of the first
program (unchanged) and the second program snapshotted just before the
\textit{first} \texttt{while} loop. When proving this final
circularity, the left-hand-side program does not advance; only the
right-hand side program advances, ensuring that the first loop does
not modify the variables written to by \texttt{S2}. Because the first
program must not make progress, we can only prove \textit{partial}
simulation for this case as well.

The names of the four files
corresponding to this examples are of the form \texttt{imp-loop-fission*.rmt}
and \texttt{imp-loop-fusion*.rmt}.

\paragraph{Loop interchange}

Loop interchange is the process of exchanging the order of two
iteration variables used by a nested loop. The variable used in the
inner loop switches to the outer loop, and vice versa. It is often
done to ensure that the elements of a multi-dimensional array are
accessed in the order in which they are present in memory, improving
locality of reference.

\begin{minipage}[t]{4cm}
	\begin{lstlisting}
	V1 = 0;
	V3 = 0;
	if (V3 < V4) {
	  while (V1 < V2) {
	    V3 = 0;
	    while (V3 < V4) {
	      S1;
	      V3 = V3 + 1;
	    }
	    V1 = V1 + 1;
	  }
	}	
	\end{lstlisting}
\end{minipage}
\hfill
\begin{minipage}[t]{4cm}
	\begin{lstlisting}
	V1 = 0;
	V3 = 0;
	if (V1 < V2) {
	  while (V3 < V4) {
	    V1 = 0;
	    while (V1 < V2) {
	      S1;
	      V1 = V1 + 1;
	    }
	    V3 = V3 + 1;
	  }
	}
	\end{lstlisting}
\end{minipage}

In this example, we assume that \texttt{S1} does not modify the values
of variables \texttt{V1} through \texttt{V4} and does not read the
values of \texttt{V1} and \texttt{V3}.

Similarly to previous examples, we take a snapshot of each of these
programs just before the execution of \texttt{S1}.

In order for the programs in the new circularity to truly be
equivalent, we need to add the constraint that \texttt{S1} was
executed the same number of times on both sides. We note that, in the
left-hand-side program, when the flow reaches the inner loop,
\texttt{S1} was executed \texttt{V1} \(\times\) \texttt{V4} \(+\)
\texttt{V3} times. Similarly, in the right-hand-side program,
\texttt{S1} was executed \texttt{V3} \(\times\) \texttt{V2} \(+\)
\texttt{V1} times. Therefore, the equality of these two quantities is
the required constraint.

In addition, since our circularity snapshots the program at
\texttt{S1}, we need to add to the constraint the conditions necessary
for the programs to actually reach \texttt{S1} (i.e., that all loop
conditions evaluate to true).

With this helper circularity under the discussed constraints, we can
prove the equivalence of the two programs. However, the constraint
\texttt{V1} \(\times\) \texttt{V4} \(+\) \texttt{V3} = \texttt{V3}
\(\times\) \texttt{V2} \(+\) \texttt{V1} introduces a component of
non-linear integer algebra into the proof. As discussed in
Section~\ref{sec:programschemas}, the SMT solver that we use (Z3) does
not handle non-linear integer algebra well. Because of this, \RMT{}
can only successfully prove the equivalence of the two programs if the
loop limits (i.e., the values of \texttt{V2} and \texttt{V4}) are
bounded. Our prover does not use this bound explicitly, but it is
required for the SMT solver to solve the non-linear integer algebra
problems. The name of the file
corresponding to this examples is \texttt{imp-loop-interchange.rmt}.

\paragraph{Loop reversal}

Loop reversal is an optimization that reverses the order in which
values are assigned to the loop variable, essentially changing the
direction in which the loop is iterated. In some cases, this might
improve cache efficiency and enable other optimizations.

\begin{minipage}[t]{4cm}
	\begin{lstlisting}
	V1 = E1;
	
	while (V1 < V2) {
	  S1;
	  V1 = V1 + 1;
	}
	
	
	
	
	\end{lstlisting}
\end{minipage}
\hfill
\begin{minipage}[t]{4cm}
	\begin{lstlisting}
	if (E1 < V2) {
	  V1 = V2 - 1;
	  while (V1 >= E1) {
	    S1;
	    V1 = V1 - 1;
	  }
	  V1 = V2;
	} else {
	  V1 = E1;
	}
	\end{lstlisting}
\end{minipage}

Here, we assume that \texttt{S1} does not write to \texttt{V1}, and
\texttt{E1} does not depend on \texttt{V1}, or to any variables that
\texttt{S1} writes to. The \texttt{else} branch of the second program
ensures equivalence if \texttt{E1} is greater than \texttt{V2}. The
final assignment inside the \texttt{if} branch of the second program
ensures that the value of \texttt{V1} is the same at the end of the
executions of the two programs.

In a similar manner to previous examples, we create a new circularity
by snapshotting each program at the start of the loop execution.

In this new circularity, we note that the programs are equivalent only
if the value of \texttt{V1} in the left program is equal to the value
of \texttt{V2} - 1 - \texttt{V1} in the right one. With this helper
circularity, \RMT{} is able to successfully prove the equivalence of
the two programs. The name of the file
corresponding to this examples is \texttt{imp-loop-reversal.rmt}.

\paragraph{Loop skewing}

This optimization skews the execution of an inner loop relative to an outer
one, which could be useful if the inner loop has a dependence on the
outer loop which prevents it from running in parallel. This
optimization is often combined with loop interchanging in order to
improve parallelization.

\begin{minipage}[t]{4cm}
	\begin{lstlisting}
	while (V1 < V2) {
	
	  V3 = E1;
	  
	  while (V3 < V4) {
	  
	    S1;
	    V3 = V3 + 1;
	  }
	  
	  
	  V1 = V1 + 1;
	}
	\end{lstlisting}
\end{minipage}
\hfill
\begin{minipage}[t]{5cm}
	\begin{lstlisting}
	while (V1 < V2) {
	  V5 = E1 + V6;
	  V3 = V5 - V6;
	  if (V5 < V4 + V6) {
	    while (V5 < V4 + V6) {
	      V3 = V5 - V6;
	      S1;
	      V5 = V5 + 1;
	    }
	    V3 = V4;
	  }
	  V1 = V1 + 1;
	}
	\end{lstlisting}
\end{minipage}

As with previous examples, we use an additional circularity which
consists of both programs snapshotted once they have reached
\texttt{S1}.

We need the constraint under which the two programs in the new
circularity truly are equivalent (i.e., some relationship between the
variables in the two programs). By analyzing the programs, we see that
when both programs have reached \texttt{S1}, the values of \texttt{V3}
are equal and the value of \texttt{V5} (in the right-hand-side
program) is the sum of the values of \texttt{V3} and \texttt{V6}. We
also need to add the condition that the program flows truly reach
\texttt{S1}, which is that all loop conditions evaluate to true.

With this new circularity, under the described constraints, \RMT{} is
able to successfully prove the equivalence of the two programs. The name of the file
corresponding to this examples is \texttt{imp-loop-skewing.rmt}.

\paragraph{Loop strength reduction}

Strength reduction is a compiler optimization that replaces expensive
operations by equivalent but less expensive ones. For example,
converting multiplications inside a loop into repeated additions,
which can often be used, for example, to improve the performance of
array addressing.

\begin{minipage}[t]{4cm}
	\begin{lstlisting}
	
	while (V1 < V2) {
	  V3 = V1 * V4;
	  
	  S1;
	  V1 = V1 + 1;
	}
	\end{lstlisting}
\end{minipage}
\hfill
\begin{minipage}[t]{4cm}
	\begin{lstlisting}
	V5 = V1 * V4;
	while (V1 < V2) {
	  V3 = V5;
	  V5 = V5 + V4;
	  S1;
	  V1 = V1 + 1;
	}
	\end{lstlisting}
\end{minipage}

In these examples, \texttt{S1} is a symbolic sequence that does not
write to variables \texttt{V1} through \texttt{V4}. We can see that
the original program executes a multiplication at every loop
iteration, whereas the optimized program executes the multiplication
only once, and instead replaces the original multiplications by
additions.

As in previous examples, we need to use an additional
circularity. Though multiple options are possible, we chose to
snapshot the first program just before \texttt{S1}, and the second
program just before the line \texttt{V5 = V5 + V4}. At these points,
we notice that the value of \texttt{V3} is the same on both sides. In
addition, \texttt{V3 = V1 * V4} on the left side, and \texttt{V3 = V5}
on the right side. Using these constraints, \RMT{} is able to prove
this additional circularity and, therefore, the original equivalence
as well.

Interestingly, even though we are dealing with multiplications and
therefore non-linear integer algebra, unlike Loop Interchange, the
required formula is properly solved by \ZTREI{}. The name of the file
corresponding to this examples is \texttt{imp-strength-reduction.rmt}.

\paragraph{Loop tiling}

Loop tiling is a technique that partitions the iteration space of a
loop into smaller chunks or blocks, often with the purpose of locality
optimization or parallelization.

\begin{minipage}[t]{4cm}
	\begin{lstlisting}
	
	
	
	while (V1 < V2) {
	  S1;
	  V1 = V1 + 1;
	}
	
	
	\end{lstlisting}
\end{minipage}
\hfill
\begin{minipage}[t]{6cm}
	\begin{lstlisting}
	V3 = V1;
	while (V3 < V2) {
	  V1 = V3;
	  while (V1 < min(V2, V3 + V4)) {
	    S1
	    V1 = V1 + 1;
	  }
	  V3 = V3 + V4;
	}
	\end{lstlisting}
\end{minipage}

In this example, \texttt{S1} is a symbolic statement that cannot write
to the variables \texttt{V1} through \texttt{V4}. The two programs are
similar, except that in the second one the outer loop is broken down
into smaller chunks of size \texttt{V4}. The call to \texttt{min}
inside the loop condition ensures that the programs are equivalent
even if the outer loop cannot be broken down into an exact number of
full chunks, by potentially cutting the final chunk short.

As in previous examples, we use a helper circularity, which consists
of the two programs snapshotted at \texttt{S1}. The only constraints
that we need are the conditions needed for both programs to reach
\texttt{S1} (i.e., that all loop conditions evaluate to true).

\RMT{} is able to successfully prove the equivalence of the two
programs.

\begin{minipage}[t]{4cm}
	\begin{lstlisting}
	V1 = 0;
	
	
	while (V1 < V2) {
	  S1;
	  V1 = V1 + 1;
	}
	
	
	\end{lstlisting}
\end{minipage}
\hfill
\begin{minipage}[t]{6cm}
	\begin{lstlisting}
	V1 = 0;
	while (V1 < V2) {
	  V3 = 0;
	  while (V3 < V4) {
	    S1;
	    V3 = V3 + 1;
	  }
	  V1 = V1 + V4;
	}
	\end{lstlisting}
\end{minipage}

This second example of loop tiling is similar to the first one, except
we assume that \texttt{V2} is a multiple of \texttt{V4}. This allows
avoiding the overhead of treating the case in which the outer loop
cannot be broken into an exact number of full chunks.

Even though the code in this second example is simpler than the first
one, the fact that we have to consider that \texttt{V2} is a multiple
of \texttt{V4} introduces a component of non-linear integer algebra,
which \ZTREI{} cannot properly handle. Because of this, as explained
in Section~\ref{sec:programschemas}, this example can currently be
proven by \RMT{} only when the loop limit (the value of \texttt{V2})
is bounded. Again, the bound is not a limitation of \RMT{} itself. It
might be possible to use another SMT solver, such as CVC4, as an
oracle that can handle this case of non-linear integer algebra. We
leave this for future work. The names of the two files
corresponding to this examples are of the form \texttt{imp-loop-tiling*.rmt}.

\clearpage
\section{Examples}
\label{app:examples}

In this section, we describe in greater detail all functional equivalence examples that we
prove in \IMP{}.

\paragraph{Example 1: recursive functions with and without an
  accumulator} This example corresponds to the motivating example in
Section~\ref{sec:intro}, of showing the equivalence of function with
and without accumulators. We use the language \IMP{1} (unbounded
stack). We prove the following goal:
\[
\mbox{\(\begin{array}{l}
    \cfgimp{\lacall{\pvar{f}(\vN)} \lisymb []}{\venv}{\funcsct} \rangle
  \end{array}\)}
\fsh{\prec} \mbox{\(\begin{array}{l}
    \cfgimp{\lacall{\pvar{F}(\vN, 0, 0)} \lisymb []}{\venv}{\funcsct} \end{array}\)} \mbox{ \(\myif \)
}\mbox{\( \fsh{0 \leq \vN}, \)}
\]
\noindent where \(\funcsct = \fsh{\{} \fsh{f \mapsto \lambda
\pvar{x}.\laite{0 \leq \cx}{\plus{\cx}{\lacall{\pvar{f}(\pvar{x}
    - 1)}}}{0})},\ \fsh{\pvar{F} \mapsto \lambda \pvar{n} . \lambda \pvar{i}
. \lambda \pvar{a}.}\allowbreak\fsh{\laite{\pvar{i} \leq \pvar{n}}{\lacall{\pvar{F}(n,
    \plus{i}{1}, \plus{a}{i})}}{\pvar{a}}} \fsh{\}}\).
    
Note that \(\pvar{F}, \pvar{f}, \pvar{n}, \pvar{i}, \pvar{a}\) are
identifiers (program variables), while \(\vN\) and \(\venv\) are
variables of type \sint{} and \sarray{} (from identifiers to
integers). The fact that the programs run with the same input is
implemented by the fact that the same variable \(\vN\) appears in both
the lhs (\(\ldots \lacall{\pvar{f}(\vN)} \ldots\)) and the rhs
(\(\ldots \lacall{\pvar{F}(\vN, 0, 0)} \ldots\)). The two programs
configurations have the same environment \(\menv\), although, as there
are no global variables in the examples, the environment does not
matter.

For the set \(B\) of base cases, we use \(B = \{ \cfgimp{[\vs]}{\venv}{\funcsct}
\fsh{\prec} \cfgimp{[\vs]}{\venv}{\funcsct}\},\) where \(\vs\) is a variable of sort
\sint{} (recall that \(\fsh{[\vs]}\) is a notation for the one-element
cons-list \(\vs \lisymb \fsh{[]}\)). That is, two terminal configurations are
considered equivalent if the programs are reduced to the same integer
\(\vs\) and the environments are the same.

As explained in the introduction, in order to express the
circularities, we create a defined function \(\areduce\),
axiomatized by the following constrained rules:
\begin{enumerate}
\item \(\fsh{\areduce(\vI, \vN) \mathrel{\rewrite} []\myif \vI > \vN}\);
\item
  \(\fsh{\areduce(\vI, \vN) \mathrel{\rewrite} \mbrack{\plus{\vI}{\square}}
    \lisymb \areduce(\vI + 1, \vN)\myif \vI \leq \vN}\).
\end{enumerate}

We use two helper circularities to prove the goal:
\[
  \mbox{\(\begin{array}{l}\fsh{\langle \lacall{\pvar{f}(\vI - 1)}} \\
      \qquad \lisymb \fsh{\areduce(\vI, \vN), \venv, \funcsct
           \rangle}\end{array}\)} \fsh{\prec}
  \mbox{\(\begin{array}{l}\fsh{\langle \lacall{\pvar{F}(\vN, 0, 0)}} \\ \qquad \fsh{\lisymb [], \venv, \funcsct \rangle} \end{array}\)}\qquad \myif \fsh{0 \leq \vI \leq \vN},
\]

\[\mbox{\(\begin{array}{l}\fsh{\langle S \lisymb \areduce (\vI, \vN),} \\
           \qquad \fsh{\venv, \funcsct
  \rangle} \end{array}\)}
\fsh{\prec}
\mbox{\(\begin{array}{l}\fsh{\langle [\lacall{\pvar{F}(\vN, \vI, \vS)}],} \\ \qquad
         \fsh{\venv, \funcsct \rangle}\end{array}\)}
\qquad \myif \fsh{1 \leq \vI \leq \vN}.
\]

The first circularity represents the expansion phase of the left-hand
side program, while the second circularity corresponds to the
contraction phase, as explained in the introduction. Our prover can
establish using the circularities above that \(\fsh{f \prec F}\) and that \(\fsh{f
\preceq F}\) under the constraint \(\fsh{\vN \geq 0}\). By \(\fsh{f}\) we formally mean
\(\cfgimp{[\lacall{\pvar{f}(\vN)}]}{\venv}{\funcsct}\) and by \(\fsh{F}\)
we formally mean \(\cfgimp{[\lacall{\pvar{F}(\vN, 0,
      0)}]}{\venv}{\funcsct}\), as introduced above (we use this
shorthand in the following two examples as well). By reversing the lhs
and rhs of the circularities, our algorithm can also show \(\fsh{F \preceq
f}\) under the constraint \(\fsh{\vN \geq 0}\). However, it cannot show \(\fsh{F \prec
f}\) under the same constraint, because the circularity for the second
phase of \(\fsh{f}\) cannot be established due to lack of progress on the
left-hand side. {\bf Therefore, our tool establishes partial
  equivalence of \(\fsh{f}\) and \(\fsh{F}\) and half of what is necessary for full
  equivalence.} As future work, in order to enable the complete proof
of full equivalence, we will add termination measures to the proof
system as in~\cite{BuruianaEPTCS2019} -- the termination measure for
the second phase of \(\fsh{f}\) will enable \(\fsh{F \prec f}\) to be proven. The
names of the four files corresponding to this example are of the form
\texttt{example1*.rmt}.

\paragraph{Example 2: recursive functions in the presence of a bounded
  stack.} In this example, we work in the language \IMP{2}, which has
a bounded stack of length \(10\).. The equivalence between \(\fsh{f}\) and \(\fsh{F}\)
does not hold in \IMP{2} (because for a sufficiently high input, \(\fsh{F}\)
will work as expected, while \(\fsh{f}\) will crash with a stack
overflow). Our tool correctly fails to prove, in the operational
semantics \IMP{2}, any of the cases \(\fsh{f \prec F}\), \(\fsh{F \prec f}\), \(\fsh{f \prec
F}\), \(\fsh{F \prec f}\) (under the constraint \(\fsh{\vN \geq 0)}\). We used the same
base cases and circularities as above. The names of the four files
corresponding to this examples are of the form \texttt{example2*.rmt}.

\paragraph{Example 3: two different semantics.} As explained in
Section~\ref{sec:proofsystem}, our algorithm for functional
equivalence works even for two programs written in different
languages. We exploit this to prove that \(\fsh{f}\), interpreted in \IMP{1},
is partially equivalent to \(\fsh{F}\), interpreted in \IMP{2}. The
equivalence works because \(\fsh{F}\) uses constant stack space, so it works
properly in \IMP{2}. The only simulation which the tools fails to
prove in this setting is \(\fsh{F \prec f}\), for the same reasons as
above. The names of the four files corresponding to this example are
of the form \texttt{example3*.rmt}.

\paragraph{Example 4: imperative and functional style.} This example
shows that our proof method allows proofs of structurally different
programs. We show that a recursive function is equivalent to a while
loop (both computing the sum of the first \(\vN\) numbers). We prove:
\[
  \mbox{\(\begin{array}{l}\fsh{\langle \lacall{\pvar{f}(\vN)}} \\ \qquad \fsh{\lisymb [] , \venv, \funcsct
           \rangle} \end{array}\)}
\fsh{\preceq}
\mbox{\(\begin{array}{l} \fsh{\langle \pvar{i} = 0;} \\ \fsh{\pvar{s} = 0;} \\
         \fsh{\mathit{while} (\pvar{i} \leq \vN)} \\
         \qquad \fsh{\laassign{\pvar{s}}{\plus{\pvar{s}}{\pvar{i}}};} \\
         \qquad \fsh{\laassign{\pvar{i}}{\plus{\pvar{i}}{1}}} \\ \;\;\; \fsh{\lisymb [],
         \venv, \funcsct \rangle} \end{array}\)} \myif \fsh{0 \leq \vN}
   \]
\noindent
and vice-versa (rhs partially simulated by lhs and lhs partially
simulated by rhs), where \(\funcsct = \{\ \fsh{\pvar{f} \mapsto \lambda
\pvar{x} . \laite{0 \leq \pvar{x}}{x + \lacall{\pvar{f}(\pvar{x} -
    1)}}{0}\ } \}\). We can also show full simulation for one of the
direction (the other direction fails for the same reason as in the
first example). The fact that both programs take the same input is
represented by the integer variable \(\vN\) appearing in both sides. The
same variable \(\venv\) is also used on both sides, meaning that
the two programs start with the same set of values associated to the
global variables (however, as the programs do not depend on the value
of the globals, the equivalence proof would also work when starting
with two different environments).

An interesting observation in this example is that the first program
is written in a functional style and therefore it will reduce to a
value without modifying the environment. The imperative program will
hold the result in the environment, associated to the program
identifier \(\pvar{s}\). Additionally, the second program will modify in
the environment the variable \(\pvar{i}\), whose value should not
considered to be part of the result of the program. Therefore, the set
of base cases we use is \[B = \{
\cfgimp{[\vx]}{\venvsub{1}}{\funcsct} \fsh{\prec}
\cfgimp{[]}{\venvsub{2}}{\funcsct}\myif \fsh{\vx =
\blookup(\pvar{s}, \venvsub{2})} \}.\] The files
corresponding to this example are of the form \texttt{example4*.rmt}.

\paragraph{Example 5: loop unswitching}

We prove:
\[
  \mbox{\(\begin{array}{l} \fsh{\langle \laassign{\pvar{a}}{\vA};} \\
           \fsh{\laassign{\pvar{y}}{\vY};} \\
           \fsh{if \; even(\pvar{a}) \; then} \\
           \;\;\;\; \fsh{while (\pvar{y} \leq \vN)} \\
           \;\;\;\; \;\;\;\; \fsh{\laassign{\pvar{y}}{\plus{\pvar{y}}{1}}} \\
           \fsh{else} \\
           \;\;\;\; \fsh{while (\pvar{y} \leq \vN)} \\
           \;\;\;\; \;\;\;\; \fsh{ \laassign{\pvar{y}}{\plus{\pvar{y}}{2}}} \\
           \fsh{\lisymb [], \venv, \funcsct \rangle}\end{array}\)} \fsh{\prec}
       \mbox{\(\begin{array}{l} \fsh{\langle \laassign{\pvar{a}}{\vA};} \\
                \fsh{\laassign{\pvar{y}}{\vY};} \\
                \fsh{while (\pvar{y} \leq \vN)} \\
                \;\;\;\; \fsh{if \; even(\pvar{a})\ then} \; \\
                \;\;\;\; \;\;\;\; \fsh{\laassign{\pvar{y}}{\plus{\pvar{y}}{1}}} \\
                \;\;\;\; \fsh{else} \\
                \;\;\;\; \;\;\;\; \fsh{\laassign{\pvar{y}}{\plus{\pvar{y}}{2}}} \\
                \fsh{\lisymb [], \venv, \funcsct \rangle}\end{array}\)}
          \]
\noindent
and vice-versa, where \(\funcsct\) is arbitrary. For this, we require two circularities:
\[
  \mbox{\(\begin{array}{l} \fsh{\langle while (\pvar{y} \leq \vN)} \\
           \qquad \fsh{\laassign{\pvar{y}}{\plus{\pvar{y}}{1}}} \\
           \fsh{\lisymb [], \venv, \funcsct \rangle}\end{array}\)}
  \fsh{\prec}
  \mbox{\(\begin{array}{l}\fsh{\langle while (\pvar{y} \leq \vN)} \\
           \;\;\;\; \fsh{if \; even(\pvar{a})\ then} \; \\
           \;\;\;\; \;\;\;\; \fsh{\laassign{\pvar{y}}{\plus{\pvar{y}}{1}}} \\
           \;\;\;\; \fsh{else} \\
           \;\;\;\; \;\;\;\; \fsh{\laassign{\pvar{y}}{\plus{\pvar{y}}{2}}} \\
           \fsh{\lisymb [], \venv, \funcsct \rangle}\end{array}\)}\hspace{-0.2cm} \myif \mbox{\( \fsh{even(\blookup(\venv, \pvar{a}))} \)}
\]
and
\[
  \mbox{\(\begin{array}{l} \fsh{\langle while (\pvar{y} \leq \vN)} \\
           \qquad \fsh{\laassign{\pvar{y}}{\plus{\pvar{y}}{2}}} \\
           \fsh{\lisymb [], \venv, \funcsct \rangle}\end{array}\)}
  \fsh{\prec}\hspace{-0.2cm}
  \mbox{\(\begin{array}{l} \fsh{\langle while (\pvar{y} \leq \vN)} \\
           \;\;\;\; \fsh{if \; even(\pvar{a})\ then} \; \\
           \;\;\;\; \;\;\;\; \fsh{\laassign{\pvar{y}}{\plus{\pvar{y}}{1}}} \\
           \;\;\;\; \fsh{else} \\
           \;\;\;\; \;\;\;\; \fsh{\laassign{\pvar{y}}{\plus{\pvar{y}}{2}}} \\
           \fsh{\lisymb [], \venv, \funcsct \rangle}\end{array}\)}\hspace{-0.2cm} \myif \mbox{\(\fsh{\lnot even(\blookup(\venv, \pvar{a}))}.\)}
     \]
\noindent
We use the base cases
\[B = \{ \cfgimp{[]}{\venvsub{1}}{\funcsct} \fsh{\prec} \cfgimp{[]}{\venvsub{2}}{\funcsct} \myif \fsh{\blookup(\venvsub{1}, \pvar{y}) = \blookup(\venvsub{2}, \pvar{y})} \}.\] That is, we consider two
terminal configurations to be equivalent when the corresponding
environments map the program variable \(\pvar{y}\) to the same
value. The names of the files corresponding to this example are of the
form \texttt{example5*.rmt}.

\end{document}
